%% file: main.tex
\newcommand{\shortversion}[1]{}

\documentclass[a4paper,UKenglish,cleveref,autoref,thm-restate]{lipics-v2021}

\usepackage{graphicx} 
\usepackage{todonotes}
\usepackage{subcaption}
\definecolor{thechosenone}{rgb}{0,0.1,0.5}
\title{Parameterized Complexity of Finding a Maximum Common Vertex Subgraph Without Isolated Vertices} 

\titlerunning{Parameterized Complexity for Maximum Common Vertex Subgraph} 

\author{Palash Dey}{Indian Institute of Technology Kharagpur, India \and \url{https://cse.iitkgp.ac.in/~palash/} }{palash.dey@cse.iitkgp.ac.in}{https://orcid.org/0000-0003-0071-9464}{}
\author{Anubhav Dhar}{Max Planck Institute for Informatics, Saarbr\"{u}cken, Germany \and \url{https://anubhavdhar.github.io/}}{anubhavldhar@gmail.com}{https://orcid.org/0009-0006-5922-8300}{}
\author{Ashlesha Hota}{Indian Institute of Technology Kharagpur, India \and \url{https://sites.google.com/view/ashleshahota}}{ashleshahota@gmail.com}{https://orcid.org/0009-0009-8805-4583}{}
\author{Sudeshna Kolay}{Indian Institute of Technology Kharagpur, India \and \url{https://cse.iitkgp.ac.in/~skolay/} }{skolay@cse.iitkgp.ac.in}{https://orcid.org/0000-0002-2975-4856}{}
\author{Aritra Mitra}{Indian Institute of Technology Kharagpur, India \and \url{https://ammetre.github.io}}{aritramitra2002@gmail.com}{https://orcid.org/0000-0003-0248-1509}{}

\authorrunning{P. Dey, A. Dhar, A. Hota, S. Kolay, A. Mitra} 

\ccsdesc[500]{Theory of computation~Parameterized complexity and exact algorithms}
\ccsdesc[500]{Theory of computation~Graph algorithms analysis}

\keywords{maximum common subgraph, common star packing, FPT, W-hardness, structural parameters}
\category{} 

\relatedversion{} 

\nolinenumbers 
\hideLIPIcs

\input{preamble.tex}

\usepackage{colortbl}

\newcommand{\ourprob}{{\sc MCSFG}}
\newcommand{\degree}{\mathrm{degree}}

\newcommand{\pw}{\mathrm{pw}}
\newcommand{\td}{\mathrm{td}}
\newcommand{\br}{\mathsf{br}}
\newcommand{\SOL}{\textsf{SOL}}
\newcommand{\cc}{\mathrm{cc}}

\definecolor{lightgreen}{RGB}{180,230,180}
\definecolor{lightred}{RGB}{255,180,180}
\definecolor{lightredorangish}{RGB}{255,80,0}

\begin{document}

\maketitle
\input{abstract}

\newpage
\input{introduction}
\input{prelims}

\input{natural_parameter}

\input{structural_parameters}
\input{treewidth}
\input{conclusion}

\bibliography{ref}

\appendix
\input{polytime_equivalence}

\input{planar}

\end{document}

%% file: preamble.tex
\usepackage{longtable}

\definecolor{thechosenone}{rgb}{0,0.1,0.5}

\newcommand{\mces}{{\sc Maximum Common Edge Subgraph}\xspace}
\newcommand{\mcis}{{\sc Maximum Common Induced Subgraph}\xspace}
\newcommand{\siso}{{\sc Subgraph Isomorphism}\xspace}
\newcommand{\mcics}{{\sc Maximum Common Induced Connected Subgraph}\xspace}
\usepackage{longtable}
\usepackage{hyperref}
\usepackage{amssymb,amsmath}
\usepackage{color}
\usepackage{makecell}
\usepackage{mathtools}
\usepackage{bbm}
\usepackage{setspace}

\DeclareMathOperator*{\argmin}{arg\,min}
\DeclareMathOperator*{\argmax}{arg\,max}

\usepackage{tikz}
\usetikzlibrary{arrows,positioning}

\newcommand{\defproblem}[3]{
  \vspace{1mm}
\begin{center}
\begin{nolinenumbers}
\noindent\fbox{

  \begin{minipage}{0.95\textwidth}\internallinenumbers
  \begin{tabular*}{\textwidth}{@{\extracolsep{\fill}}l} \textsc{\underline{#1}} \\ \end{tabular*}\vspace{1ex}
  {\bf{Input:}} #2  \\
  {\bf{Question:}} #3
  \end{minipage}

  }
  \end{nolinenumbers}
\end{center}
  \vspace{1mm}
}

\newdimen\prevdp
\def\leftlabel#1{\noalign{\prevdp=\prevdepth
   \kern-\prevdp\nointerlineskip\vbox to0pt{\vss\hbox{\ensuremath{#1}}}\kern\prevdp}}

\usepackage{url}

\usepackage{enumerate}

\usepackage{xspace}
\newcommand{\NP}{\ensuremath{\mathsf{NP}}\xspace}
\newcommand{\NPC}{\ensuremath{\mathsf{NP}}\text{-complete}\xspace}
\newcommand{\NPH}{\ensuremath{\mathsf{NP}}\text{-hard}\xspace}

\newcommand{\PNPH}{para-\ensuremath{\mathsf{NP}\text{-hard}}\xspace}
\newcommand{\el}{\ensuremath{\ell}\xspace}

\newcommand{\WOH}{\ensuremath{\mathsf{W[1]}}-hard\xspace}

\newcommand{\WTH}{\ensuremath{\mathsf{W[2]}}-hard\xspace}

\newcommand{\FPT}{\ensuremath{\mathsf{FPT}}\xspace}

\newcommand{\OPT}{\ensuremath{\mathsf{OPT}}\xspace}

\newcommand{\Pb}{\ensuremath{\mathsf{P}}\xspace}
\newcommand{\tw}{\ensuremath{\mathrm{tw}}\xspace}

\let\oldlambda\lambda
\renewcommand{\lambda}{\ensuremath{\oldlambda}\xspace}
\let\oldalpha\alpha
\renewcommand{\alpha}{\ensuremath{\oldalpha}\xspace}
\let\oldDelta\Delta
\renewcommand{\Delta}{\ensuremath{\oldDelta}\xspace}

\newcommand{\yes}{{\sc yes}\xspace}
\newcommand{\no}{{\sc no}\xspace}
\newcommand{\true}{\text{{\sc true}}\xspace}
\newcommand{\false}{\text{{\sc false}}\xspace}

\renewcommand{\AA}{\ensuremath{\mathcal A}\xspace}

\newcommand{\HH}{\ensuremath{\mathcal H}\xspace}

\newcommand{\OO}{\ensuremath{\mathcal O}\xspace}
\newcommand{\PP}{\ensuremath{\mathcal P}\xspace}

\newcommand{\TT}{\ensuremath{\mathcal T}\xspace}

\newcommand{\NB}{\ensuremath{\mathbb N}\xspace}

\usepackage{nicefrac}

\newcommand{\ignore}[1]{}

\renewcommand{\leq}{\leqslant}
\renewcommand{\geq}{\geqslant}
\renewcommand{\ge}{\geqslant}
\renewcommand{\le}{\leqslant}

\usepackage{enumitem}
\setlist[enumerate]{labelwidth=!, labelindent=0pt}

\usepackage{rotating}
\usepackage{amssymb}
\usepackage{latexsym}
\usepackage{epsfig}
\usepackage{xcolor}
\usepackage{url}
\usepackage{caption}
\usepackage{subcaption}
\usepackage{array}
\usepackage{multirow}
\usepackage{enumerate}
\usepackage{pdflscape}
\usepackage{amsmath,graphicx,algorithm,algpseudocode,amsfonts}
\usepackage{epstopdf}
\usepackage{float}
\floatstyle{ruled}
\newfloat{Algorithms}{!thb}{lop}
\floatname{Algorithms}{Algorithm}

\allowdisplaybreaks[1]

\algnewcommand\algorithmicinput{\textbf{Input:}}
\algnewcommand\INPUT{\item[\algorithmicinput]}
\algnewcommand\algorithmicoutput{\textbf{Output:}}
\algnewcommand\OUTPUT{\item[\algorithmicoutput]}
\algnewcommand{\LineComment}[1]{\State \(\triangleright\) #1}

\usepackage{pifont}

\usepackage{mathtools}

\usepackage{algorithm}
\usepackage{algpseudocode}


\usepackage{comment}

\usepackage{cleveref}
\definecolor{cblue}{RGB}{0, 0, 128}
\crefname{theorem}{Theorem}{\bf Theorems}
\crefname{observation}{Observation}{\bf Observations}
\crefname{corollary}{Corollary}{\bf Corollary}
\crefname{lemma}{Lemma}{\bf Lemmata}
\crefname{corollary}{Corollary}{\bf Corollaries}
\crefname{proposition}{Proposition}{\bf Propositions}
\crefname{definition}{Definition}{\bf Definitions}
\crefname{claim}{Claim}{\bf Claims}
\crefname{reductionrule}{Reduction rule}{\bf Reduction rules}
\usepackage{color,soul} 

%% file: abstract.tex
\begin{abstract}
In this paper, we study the {\sc Maximum Common Vertex Subgraph} problem: Given two input graphs $G_1$, $G_2$ and a non-negative integer $h$, is there a common subgraph $H$ on at least $h$ vertices such that there is no isolated vertex in $H$. In other words, each connected component of $H$ has at least $2$ vertices. This problem naturally arises in graph theory along with other variants of the well-studied {\sc Maximum Common Subgraph} problem~\cite{kann1992approximability, khanna1995approximation, kriege2018maximum} and also has applications in computational social choice~\cite{hosseini2024graphical, hosseini2023tight, hosseini2023graphical}. We show that this problem is \textsf{NP}-hard and provide an \textsf{FPT} algorithm when parameterized by $h$. Next, we conduct a study of the problem on common structural parameters like vertex cover number, maximum degree, treedepth, pathwidth, and treewidth of one or both input graphs. We derive a complete dichotomy of parameterized results for our problem with respect to individual parameterizations as well as combinations of parameterizations from the above structural parameters. This provides us with deep insight into the complexity theoretic and parameterized landscape of this problem. 
\end{abstract}

%% file: introduction.tex
\section{Introduction}

The problem of finding a \emph{maximum common subgraph} of two given graphs is well-studied in algorithmic and structural graph theory. The most prominent variants include (i) \mcis, asking to maximize the number of vertices of a common induced subgraph, (ii) \mces, asking to maximize the number of edges of a common subgraph, and (iii) \mcics, asking to maximize the number of vertices in a common connected induced subgraph. A similar natural problem is to ask for a common subgraph with the maximum number of vertices for two input graphs $G_1$ and $G_2$. However, this problem admits a trivial solution: a graph consisting of $\min(|V(G_1)|,|V(G_2)|)$-many isolated vertices. Evidently, such a trivial graph maximizes the number of vertices among all common subgraphs. 

In this paper, we consider the next natural variant, the problem of {\sc Maximum Common Vertex Subgraph}, which asks to maximize the number of vertices in a common subgraph of two input graphs, but where the common subgraph has no isolated vertices. Interestingly, forbidding isolated vertices makes this problem significantly non-trivial, and in this paper we show that {\sc Maximum Common Vertex Subgraph} is \NPH and boasts a rich landscape in parameterized complexity. We study the parameterized complexity of {\sc Maximum Common Vertex Subgraph} under various choices of parameters: the size of the maximum common subgraph, maximum degree, treewidth, pathwidth, treedepth, and vertex cover number.

Apart from being an interesting and natural problem in graph theory in its own right, {\sc Maximum Common Vertex Subgraph} finds its application in graphical variants of the house allocation problem in computational social choice. We state the following motivation. Let $\AA$ be a set of agents and $\HH$ be a set of houses. Let $G_\AA$ be a graph with $\AA$ as the vertex set, where edges depict some social relationship among the agents (e.g. friendship, acquaintances, collaborators, etc). Similarly, let $G_\HH$ be a graph with $\HH$ as the vertex set, where edges depict some notion of `nearness' (e.g. geographic, transit-accessible, etc). Each agent $a \in \AA$ has to be allocated a unique house $\phi(a) \in \HH$. An agent $a \in \AA$ is \emph{happy} if at least one of the neighbours of $a$ (in $G_\AA$) is allocated to a neighbouring house (in $G_\HH$) of $\phi(a)$. The problem of {\sc Maximum Neighbour-Happy House Allocation} asks to find an allocation which maximizes the number of happy agents. Although being a natural extension to other well-studied variants of graphical house allocation~\cite{hosseini2024graphical,hosseini2023tight,hosseini2023graphical}, the problem is identical to the {\sc Maximum Common Vertex Subgraph} problem on the graphs $G_\AA$ and $G_\HH$.

We formally state the problem of {\sc Maximum Common Vertex Subgraph} as follows.

\defproblem{Maximum Common Vertex Subgraph}{Undirected unweighted graphs $G_1$, $G_2$, non-negative integer $h$.}{Does there exist a graph $H$ without isolated vertices with $|V(H)| \ge h$, such that $H$ is a subgraph of both $G_1$ and $G_2$?}

We provide further structural insight to the maximum common vertex subgraph $H$ of $G_1$ and $G_2$. A spanning forest of $H$ is also a common subgraph without isolated vertices of $G_1$ and $G_2$ with the same number of vertices as $H$. This allows us to restrict to only common forest subgraphs of $G_1$ and $G_2$. Taking this one step further, any subgraph induced by a \emph{minimum edge cover} of $H$ would also be a common subgraph of $G_1$ and $G_2$ without isolated vertices. Recall that an \emph{edge cover} of $H$ is a subset $S$ of its edges with every vertex of $H$ lying on at least one edge of $S$. A \emph{minimum edge cover} is an edge cover of minimum size. Recall that a \emph{non-trivial star} is a tree of at least two vertices with diameter at most $2$. A minimum edge cover of a graph without isolated vertices is known to induce a collection of non-trivial stars~\cite{gallai1959uber, west2001introduction}. Let a \emph{star forest} denote a disjoint union of non-trivial stars. This observation lets us restrict our search to only star forests appearing as common subgraphs. The problem {\sc Maximum Common Star Forest Subgraph (\ourprob)} is stated formally as follows, and since it is identical to the problem of {\sc Maximum Common Vertex Subgraph}, we restrict our focus throughout the paper on \ourprob{}.

\defproblem{Maximum Common Star Forest Subgraph (\ourprob)}{Undirected unweighted graphs $G_1$, $G_2$, non-negative integer $h$.}{Does there exist a star forest $H$ on at least $h$ vertices, such that $H$ is a subgraph of both $G_1$ and $G_2$?}

\subsection{Related Work}
\input{related-work}

\subsection{Our Contributions}

We show that \ourprob{} is an \NPH problem. In fact, we show that \ourprob{} is \NPH even when the input graphs are planar with maximum degree at most $3$ (Theorem~\ref{thm:3-planar-nph}). In the rest of the paper we provide the parameterized complexity landscape of \ourprob{}. We start by studying the problem parameterized by the size of the common star forest. Theorem~\ref{thm:FPT-objective} shows that \ourprob{} admits an \FPT algorithm in that case. This is in stark contrast with the problems \mcis, \mces and \mcics being \WOH as they generalize {\sc Maximum Clique}.

Next, we study \ourprob{} parameterized by various structural parameters of the input graphs. In particular we consider the parameters of vertex cover number, treedepth, pathwidth and treewidth. Such structural parameters are frequently bounded in graphs that appear in real applications, and are exploited to push beyond the limits of otherwise intractable problems~\cite{cygan2015parameterized}. For a graph $G$, it is well known~\cite{gima2022exploring} that $\tw(G) \le \pw(G) \le \td(G) - 1\le \tau(G)$ where $G$ has treewidth $\tw(G)$, pathwidth $\pw(G)$, treedepth $\td(G)$ and vertex cover number $\tau(G)$. Alongside these parameters, we also consider the maximum degree $\Delta(G)$ of the graph, and explore all possible combinations of the parameter $\Delta(G)$ with the structural parameters $\tw(G)$, $\pw(G)$, $\td(G)$, and $\tau(G)$. 


First, when the parameter is the vertex cover number of one of the graphs, then \ourprob{} is \WTH (Theorem~\ref{thm:wth-vc}). In contrast, we show in Theorem~\ref{thm:FPT-vc} that \ourprob{} admits an \FPT algorithm when parameterized by the maximum vertex cover number of both input graphs.

This motivates us to study the parameterized complexity of \ourprob{} with respect to other structural parameters of both input graphs. We use a result due to Bodlaender et al.~\cite{bodlaender2020subgraph} to show that \ourprob{} is \PNPH when parameterized by the treedepth of both input graphs (Theorem~\ref{thm:td-paraNP}). This immediately implies \PNPH{ness} when parameterized by pathwidth or treewidth of both input graphs. 

The \PNPH{ness} results of \ourprob{} when maximum degree of both graphs is the parameter, and when structural parameters of both graphs is the parameter motivates us to study the parameterized complexity when we look into a combination of these parameters. Indeed, we show that \ourprob{} admits an \FPT algorithm when parameterized by the combined parameter of treedepth and maximum degree of both input graphs (Corollary~\ref{cor:FPT-td-maxd}). We show this by first providing an \FPT algorithm solving \ourprob{} parameterized by the maximum size of the connected components across both graphs (Theorem~\ref{thm:FPT-conn-comp}).

Next, we consider parameterization by pathwidth and maximum degree of both input graphs and show that \ourprob{} is \WOH parameterized by the maximum degree of both input graphs on forests of pathwidth at most 4 (Theorem~\ref{thm:woh-deg-const-pw}). Again, this result implies \WOH{ness} of \ourprob{} when parameterized by the maximum degree of both graphs, when treewidth of both graphs is $1$. Therefore, tractable algorithms with respect to pathwidth or treewidth as parameters are only expected to exist when the maximum degree of at least one of the graphs is constant. In Theorem~\ref{thm:tw-xp-algo}, we provide an \FPT algorithm to solve \ourprob{} when parameterized by the maximum treewidth of both input graphs and the maximum degree of one of the graphs is constant. As a consequence, the result also holds for the parameter of pathwidth of both input graphs given that one of the graphs has constant maximum degree. These structural results and their implications for \ourprob{} are listed in Table~\ref{tbl:results}. 

So far in the discussion of \ourprob{}, we look into the existence of a common star forest subgraph of size \emph{at least} $h$. However, a common star forest subgraph of size at least $h$ does not imply the existence of a common star forest subgraph of size \emph{exactly} $h$. However, we show that deciding the existence of a common star forest subgraph of size at least $h$ is polynomial-time equivalent to deciding the existence of one of size exactly $h$ (\cref{thm:prob-eqv}). Therefore, all algorithmic and hardness results also hold for the problem of the existence of a common star forest subgraph of size exactly $h$. A detailed discussion can be found in \cref{subsec:eqv-eq-ge}.

As a side note, we design an EPTAS (Theorem~\ref{thm:EPTAS}) in \cref{sec:EPTAS} when both input graphs are planar graphs with bounded maximum degree. This complements the \NP-hardness of \ourprob{} on such graphs as shown in Theorem~\ref{thm:3-planar-nph}.

\begin{table}
\centering
\begin{tabular}{|p{0.13\textwidth}|p{0.126\textwidth}|p{0.132\textwidth}|p{0.146\textwidth}|p{0.143\textwidth}|p{0.126\textwidth}|}
\hline
 \textbf{\emph{Bounds on both graphs}}& \textbf{vertex cover $\le k$} & \textbf{treedepth $\le \td$} & \textbf{pathwidth $\le \pw$} & \textbf{treewidth $\le \tw$} & \textbf{arbitrary graphs} \\
\hline
\textbf{Arbitrary max \ \ \ \ \ \ \ degree}
& \cellcolor{green!15} $f(k)  n^{\OO(1)}$ Theorem~\ref{thm:FPT-vc}
& \cellcolor{red!15} \NPH for~$\td = 3$ Theorem~\ref{thm:td-paraNP}
& \cellcolor{red!15} \NPH for~$\pw = 2$ Theorem~\ref{thm:td-paraNP} 
& \cellcolor{red!15} \NPH for~$\tw = 1$ Theorem~\ref{thm:td-paraNP}
& \cellcolor{red!15} \NPH Theorem~\ref{thm:td-paraNP} \\
\hline
\textbf{Max degree \ $\Delta$~is~a parameter}
& \cellcolor{green!15} $f(k)  n^{\OO(1)}$ Theorem~\ref{thm:FPT-vc}
& \cellcolor{green!15} $f(\td, \Delta)  n^{\OO(1)}$ Corollary~\ref{cor:FPT-td-maxd}
& \cellcolor{orange!20} \WOH wrt $\Delta$, for~$\pw = 4$ Theorem~\ref{thm:woh-deg-const-pw}
& \cellcolor{orange!20} \WOH wrt $\Delta$, for~$\tw = 1$ Theorem~\ref{thm:woh-deg-const-pw}
& \cellcolor{red!15} \NPH for~$\Delta=3$ Theorem~\ref{thm:3-planar-nph} \\
\hline
\textbf{Max degree \ $\Delta$~is~a constant}
& \cellcolor{green!15} $f(k)  n^{\OO(1)}$ Theorem~\ref{thm:FPT-vc}
& \cellcolor{green!15} $f(\td, \Delta)  n^{\OO(1)}$ Corollary~\ref{cor:FPT-td-maxd}
& \cellcolor{green!50!red!15} $f(\pw, \Delta)  n^{\OO(\Delta)}$ Theorem~\ref{thm:tw-xp-algo}
& \cellcolor{green!50!red!15} $f(\tw, \Delta)  n^{\OO(\Delta)}$ Theorem~\ref{thm:tw-xp-algo}
& \cellcolor{red!15} \NPH for~$\Delta=3$ Theorem~\ref{thm:3-planar-nph} \\
\hline
\end{tabular}
\caption{Tractability of \ourprob{} under various parameters.}\label{tbl:results}
\end{table}

%% file: related-work.tex
The \siso problem asks whether there exists an injective mapping $\phi: V(H) \to V(G)$ from a pattern graph $H$ to a host graph $G$ such that $\{u,v\} \in E(H)$ implies $\{\phi(u),\phi(v)\} \in E(G)$. This problem is \NPH~\cite{garey2002computers}. In fact a trivial reduction from {\sc Maximum Clique} shows that the problem is \WOH when the size of the pattern graph is the parameter.
Kann~\cite{kann1992approximability} observes that the problems of \mcis, \mces and \mcics generalize the \siso problem, and hence are \NPH. This work further studies the approximability of these problem. 
\mcis is shown to be at least as hard to approximate as {\sc Maximum Clique}. The \mcics is $\mathsf{NPO}$ $\mathsf{PB}$-complete. Kann also related node- and edge-based formulations via L-reductions, establishing foundational approximation and hardness results.
Khanna et al.~\cite{khanna1995approximation} studied the \textsc{Largest Common Subtree} problem, which asks for the largest tree that appears as a common subgraph in a collection of input trees. They showed that for bounded-degree trees, an approximation ratio of $\OO\!\left(\frac{n \log \log n}{\log^2 n}\right)$ is achievable in polynomial time. For unbounded-degree trees, an $\OO\!\left(\frac{n (\log \log n)^2}{\log^2 n}\right)$-polynomial time approximation algorithm is known in the unlabelled case. The same approximation guarantee holds for labelled trees of unbounded degree when the number of distinct labels is $\OO(\log^{\OO(1)} n)$. Furthermore, Akutsu et al.~\cite{akutsu2015complexity} studied the problem for a restricted unordered setting where the maximum outdegree of the common subtree is bounded by a constant. 
Yamaguchi et al.~\cite{yamaguchi2004finding} further imposed structural restrictions, focusing on tractable cases, in particular, graphs with bounded degree, treewidth, and a polynomial number of spanning trees. Exploiting these properties, a polynomial-time algorithm was developed for computing a maximum common connected induced subgraph between a degree-bounded partial $k$-tree and a connected graph whose number of spanning trees is polynomially bounded.

From the parameterized complexity perspective, \textsc{Maximum Common Induced Subgraph} (MCIS) is \WOH when parameterized by treewidth. This follows from the NP-hardness of \textsc{Subgraph Isomorphism} on graphs of bounded treewidth~\cite{abu2014maximum,akutsu2012complexity, matouvsek1992complexity}. 

Abu-Khzam~\cite{abu2014maximum} further showed that MCIS, when parameterized by the size of a minimum vertex cover of only one of the input graphs, remains \WOH even when restricted to bipartite graphs. In contrast, when parameterized by the sum of the vertex cover numbers of both input graphs, MCIS is fixed-parameter tractable (\FPT).

Gima et al.~\cite{gima2022exploring} introduced the graph parameter \emph{vertex integrity} ($vi$) to bridge the gap between treedepth and vertex cover. The vertex integrity of a graph $G$, denoted by $vi(G)$, is defined as $vi(G) = \min_{S \subseteq V(G)} \big( |S| + \max_{C \in \cc(G - S)} |V(C)| \big)$, where $\cc(G - S) \coloneq G[V(G) \setminus S]$ denotes the set of connected components of $G - S$. Using this notion, they generalized several \FPT results known for vertex cover to vertex integrity. In particular, they showed that both \textsc{Maximum Common Edge subgraph} (MCES) and MCIS are \FPT when parameterized by $vi(G_1) + vi(G_2)$.

Hanaka et al.~\cite{hanaka2026finding} studied the parameterized complexity of MCES and MCIS with respect to max-leaf number and neighbourhood diversity. The \emph{max-leaf number} of a connected graph $G$, denoted by $ml(G)$, is the maximum number of leaves in any spanning tree of $G$, where a leaf is a vertex of degree one in the tree. For a disconnected graph $G$, the max-leaf number is defined as $ml(G) = \sum_{C \in \cc(G)} ml(C)$, where $\cc(G)$ denotes the set of connected components of $G$. From the definition, it follows that $ml(G) \geq |\cc(G)|$ and $ml(G) \geq \Delta(G)$, where $\Delta(G)$ is the maximum degree of $G$. The \emph{neighbourhood diversity} of a graph $G$, denoted by $nd(G)$, is the number of twin classes in $G$, where each twin class induces either a clique or an independent set, and between any two twin classes, either all possible edges are present or none are present. Using these properties, they showed that both MCES and MCIS are \FPT when parameterized by $ml(G_1) + ml(G_2)$, and also when parameterized by $nd(G_1) + nd(G_2)$.

On the structural side, Kriege et al.~\cite{kriege2018maximum} studied the \textsc{Maximum Common Connected Subgraph} problem, where given two graphs $G$ and $H$, the goal is to return the number of vertices in a maximum common connected induced subgraph. They proved that this remains \NPH even for biconnected series--parallel graphs (partial 2-trees) where all but one vertex have degree at most three.
Hoffmann et al.~\cite{hoffmann2017between} introduced $k$-\textsc{less subgraph isomorphism}, where up to $k$ vertices of the pattern graph may be deleted before embedding. They developed weakened filtering invariants generalizing degree- and path-based constraints, implemented in a constraint-programming style algorithm.


As we showed, our problem can be equivalently thought of as finding the largest common star forest. Star-packing in graphs is very well-studied. Prieto and Sloper~\cite{prieto2006looking} studied packing $k$ vertex-disjoint copies of a graph $H$ into $G$, focusing on the case where $H$ is a star, $K_{1,s}$. For the special case of $s = 2$, i.e., $H = K_{1,2}$, they provided a linear kernel and an algorithm running in time $\OO(2^{5.301k} \cdot k^{2.5} + n^3)$, improving upon the quadratic kernel for the general packing of $k$ copies of $H = K_{1,s}$.
Xi and Lin~\cite{xi2021maximum} investigated maximum $P_3$ packing problem on claw-free cubic planar graphs, showing it to be \NPH even under this structural restriction. In their follow up work~\cite{xi2024maximum}, they prove \NP-hardness for $K_{1,3}$-packing on claw-free cubic graphs. They designed a linear-time approximation algorithm covering at least $(3n-8)/4$ vertices, leveraging structural properties and local augmentation arguments. Huang et al.~\cite{huang2025approximation} studied $k^+$-star packing (i.e., packing of stars with at least $k$ leaves) and improved approximation guarantees, achieving an approximation ratio of $9/5$ for $k = 2$ and $(k+2)/2$ for general $k \geq 2$ using a local search framework.

%% file: prelims.tex
\section{Preliminaries}

\subsection{Notations and Basic Definitions}

We denote the set $\{1,2,\ldots\}$ of natural numbers with \NB. For any integer \el, we denote the sets $\{1,\ldots,\el\}$ by $[\el]$. All graphs in this paper are undirected, unweighted, and simple. For a graph $G$, we denote the set of vertices as $V(G)$ and the set of edges as $E(G)$. The set $N_G(v) = \{u \mid \{u,v\} \in E(G)\}$ denotes the open neighbourhood of vertex $v$ in the graph $G$. The degree of a vertex $v$ in $G$ is $\deg(v) = |N_G(v)|$. A graph $G$ is called \emph{$k$-regular} if $\deg_G(v) = k$ for every vertex $v \in V(G)$. For a set of vertices $S \subseteq V$, the subgraph of $G$ induced by $S$ is denoted by $G[S]$. For a graph $G$ and a vertex $v$, let $G-v$ denote the graph $G[V(G) \setminus\{v\}]$. For a graph $G$, let $\cc(G)$ denote the set of connected components of $G$. A graph is \emph{planar} if it can be drawn in the plane without any edge crossings, and it is \emph{outerplanar} if it admits a planar drawing in which all vertices lie on the boundary of the outer face. A graph is $k$-\emph{outerplanar} if it has a planar embedding such that for every vertex, there is an alternating sequence of at most $k$ faces and $k$ vertices of the embedding, starting with the unbounded face and ending with the vertex, in which each consecutive face and vertex are incident to each other. A graph $H$ is a \emph{subgraph} of a graph $G$ (denoted by $H \subseteq G$) if there exists a mapping $\phi : V(H) \to V(G)$ such that for all $\{u,v\} \in E(H)$, $\{\phi(u), \phi(v)\} \in E(G)$. We call $\phi$ an \emph{embedding} of $H$ into $G$. The following holds true for \ourprob{}.

\begin{observation}\label{obs:G2-sub-G1}
   An instance $(G_1, G_2, h)$ of \ourprob{}, where $G_2$ is a star forest on $h$ vertices, is a \yes-instance if and only if $G_2$ is a subgraph of $G_1$.
\end{observation}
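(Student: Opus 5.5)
The plan is to prove the two directions separately, with the key step being the following fact: a star forest has no proper spanning subgraph without isolated vertices. Throughout, ``subgraph'' means embedding via an injective $\phi$, as in the \siso convention from the related work, and $H$ is a star forest on at least $h$ vertices.

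\textbf{The ``if'' direction.} Suppose $G_2$ is a subgraph of $G_1$. Take $H = G_2$ with the identity embedding into $G_2$ and the given embedding into $G_1$. Then $H$ is a star forest on $h$ vertices that is a common subgraph of both graphs, so the instance is a \yes-instance.

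\textbf{The ``only if'' direction, first reduction.} Suppose $H$ is a star forest on at least $h$ vertices that embeds into both $G_1$ and $G_2$. Let $\phi_2 : V(H) \to V(G_2)$ be the injective embedding into $G_2$. Since $|V(G_2)| = h \le |V(H)|$ and $\phi_2$ is injective, $\phi_2$ is a bijection. Let $H' = \phi_2(H)$, the graph on $V(G_2)$ with edge set $\{\{\phi_2(u),\phi_2(v)\} : \{u,v\}\in E(H)\}$. Then $H'$ is a spanning subgraph of $G_2$, it is isomorphic to $H$, and it has no isolated vertices.

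\textbf{The ``only if'' direction, key step.} I claim that every spanning subgraph $H'$ of a star forest $G_2$ with no isolated vertices equals $G_2$. Fix a component star of $G_2$ with centre $c$ and leaves $\ell_1,\dots,\ell_s$ (for $s=1$ this is a single edge). In $G_2$, each leaf $\ell_i$ is incident only to the edge $\{c,\ell_i\}$. So if $H'$ omitted that edge, $\ell_i$ would be isolated in $H'$, a contradiction. Hence every edge of $G_2$ lies in $H'$, and $H' = G_2$.

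\textbf{Conclusion.} Combining the two steps, $H$ is isomorphic to $G_2$. Composing $\phi_2^{-1}$ with the embedding of $H$ into $G_1$ therefore gives an embedding of $G_2$ into $G_1$.

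There is no real obstacle here. The only points to state carefully are the injectivity of embeddings, which forces $\phi_2$ to be a bijection, and the leaf argument in the key step.
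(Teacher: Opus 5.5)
Your proof is correct and takes the same route as the paper, whose whole argument is that the only star forest subgraph of $G_2$ on $h$ vertices is $G_2$ itself. You add the details the paper leaves implicit: injectivity of the embedding (which the paper's formal definition of subgraph omits but clearly intends) forces $|V(H)| = h$ and a spanning image, and every edge of a star is the only edge at one of its leaves, so no edge can be dropped.
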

\begin{proof}
    If $G_2$ is a subgraph of $G_1$, then both $G_1$ and $G_2$ have subgraphs isomorphic to $G_2$. Moreover, since the only star forest subgraph of $G_2$ with $|V(G_2)| = h$ vertices is the entire graph $G_2$, and since $G_1$ and $G_2$ have a common star forest on $h$ vertices, $G_2$ must be a subgraph of $G_1$. 
\end{proof}

\subsection{Parameterized Complexity and Structural Parameters} 

In decision problems where the input has size $n$ and is associated with a parameter $k$, the objective in parameterized complexity is to design algorithms with running time $f(k) \cdot n^{\OO(1)}$, where $f$ is a computable function that depends only on $k$ \cite{downey2012parameterized}. Problems that admit such algorithms are said to be \emph{fixed-parameter tractable} (\FPT). An algorithm with running time $f(k) \cdot n^{\OO(1)}$ is called an \FPT algorithm, and the corresponding running time is referred to as an \FPT running time \cite{cygan2015parameterized, flum2006parameterized}. 

Following are the definitions of tree decompositions, treewidth, nice tree decompositions, path decompositions, pathwidth, treedepth, and vertex cover. We start by defining tree decompositions, treewidth, and nice tree decompositions.

\begin{definition}[Tree Decomposition and Treewidth] Let $G$ be a graph.  A {\em tree-decomposition} of a graph $G$ is a pair  $(T, X=\{X_{t}\}_{t\in V(T)})$,  where  $T$ is a tree where every node $t\in V(T)$ is assigned a subset $X_t\subseteq V(G)$, called a \emph{bag},  such that the following conditions hold. 
\begin{enumerate}
\item $\bigcup\limits_{t\in V(T)}{X_t}=V(G)$,
\item for every edge $\{x,y\}\in E(G)$ there is a $t\in V(T)$ such that  $x,y\in X_{t}$, and 
\item for any $v\in V(G)$ the subgraph of $T$ induced by the set  $\{t \in V(T)\mid v\in X_{t}\}$ is connected.
\end{enumerate}
The {\em width} of a tree decomposition is $\max\limits_{t\in V(T)} |X_t| -1$. The {\em treewidth} of $G$ is the minimum width over all tree decompositions of $G$ and is denoted by $\tw(G)$.
\end{definition}

\begin{definition}[Nice tree decomposition]A tree decomposition $(T,X)$ is called a {\em nice tree decomposition} if $T$ is a tree rooted at some node $r$ where $ X_{r}=\emptyset$, each node of $T$ has at most two children, and each node is one of the following kinds:
\begin{itemize}
\item {\bf Introduce node:} a node $t$ that has only one child $t'$ where $X_{t}\supset X_{t'}$ and  $|X_{t}|=|X_{t'}|+1$. 
\item {\bf  Forget node:} a node $t$ that has only one child $t'$  where $X_{t}\subset X_{t'}$ and  $|X_{t}|=|X_{t'}|-1$.
\item {\bf Join node:} a node  $t$ that has two children $t_{1}$ and $t_{2}$ such that $X_{t}=X_{t_{1}}=X_{t_{2}}$.
\item {\bf Leaf node:} a node $t$ that is a leaf of $T$, and $X_{t}=\emptyset$. 
\end{itemize} 
\end{definition}

A tree decomposition $T$ of width $\tw$ can be transformed in $\OO(\tw^2\cdot \max(|V(T)|,|V(G)|))$ time into a nice tree decomposition of the same width $\tw$ and with $\OO(\tw \cdot |V(G)|)$ nodes~\cite{cygan2015parameterized}.

The following is known regarding the treewidth of $k$-outerplanar graphs~\cite{williamson2011design}.

\begin{proposition}\label{prop:k-outerplanar-tw}
    A $k$-outerplanar graph has treewidth at most $3k+1$.
\end{proposition}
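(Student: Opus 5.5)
The statement is Proposition~\ref{prop:k-outerplanar-tw}: a $k$-outerplanar graph has treewidth at most $3k+1$. I would prove it by induction on the number of layers, combined with the standard Bodlaender-style construction from a spanning forest of a triangulated planar graph.

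\textbf{Reductions.} Fix a $k$-outerplanar embedding of $G$ and let $L_1,\dots,L_k$ be its layers. Here $L_1$ is the set of vertices on the outer face, and $L_i$ is the set of vertices on the outer face after $L_1,\dots,L_{i-1}$ have been removed. Adding edges cannot decrease treewidth. So we may first triangulate each face while preserving $k$-outerplanarity, adding only edges inside faces so that no vertex's layer increases. We may also assume $G$ is connected and has maximum degree at most $3$. The latter uses the classical vertex-splitting trick: replace each vertex of degree $d>3$ by a path (or cycle) of degree-$3$ vertices embedded locally. This keeps the graph $k$-outerplanar, and a tree decomposition of the split graph yields one of $G$ by contracting the gadgets, i.e.\ replacing each gadget vertex by the original vertex in every bag.

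\textbf{Construction.} For a $k$-outerplanar graph of maximum degree $3$, build a spanning forest $F$ in which every edge of $G$ not in $F$ is covered by few fundamental cycles. The standard claim is that there is a spanning tree $T$ such that each edge $e$ of $G$ lies on at most $2k$ fundamental cycles of $T$, and each vertex $v$ lies on at most $3k$ of them. One builds $T$ layer by layer: for each layer, remove one edge from each cycle of the outer layer so that it becomes a forest, and attach it to the inner layers via edges between consecutive layers. Then use the general lemma that if $T$ is a spanning tree of $G$ with edge-remember number $\le a$ and vertex-remember number $\le b$, then $\tw(G)\le \max(a,b)+1$. The decomposition for this lemma uses $T$ itself as the decomposition tree. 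The bag of a node $v$ is $v$ together with one endpoint of each non-tree edge whose fundamental cycle passes through $v$. The bag of a tree edge $uv$ is $u$, $v$, and one endpoint of each non-tree edge whose fundamental cycle uses $uv$. Verifying the three tree-decomposition axioms is routine: each non-tree edge's endpoint appears along its fundamental path, which is connected. With $a\le 2k$ and $b\le 3k$, this gives width at most $3k+1$.

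\textbf{Main obstacle.} The hardest step is the counting in the layered spanning tree construction. It must show that every vertex lies on at most $3k$ fundamental cycles. The induction is on $k$: deleting the outer layer leaves a $(k-1)$-outerplanar graph, and the removed outer-cycle edges add at most $3$ new fundamental cycles through any vertex, using the degree bound $3$. Handling the connecting edges between layers so that the load is indeed amortised to $3$ per layer is the delicate part. I would first try choosing, for each face of the outer layer, the removed edge adjacent to the inner component. Given that the proposition is cited from the literature (\cite{williamson2011design}), in the paper itself it suffices to invoke that reference. The above is how I would reconstruct the proof if required.
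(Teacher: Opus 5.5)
The paper gives no proof of this proposition. It is quoted as a known fact from the Williamson--Shmoys textbook; the result itself is due to Bodlaender. Your closing remark, that it suffices to invoke the reference, is therefore exactly what the paper does.

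Your outline of a self-contained proof is on the right track:
\begin{itemize}
\item The decomposition on the subdivided spanning tree is valid, with width at most $\max(b,a+1)\le\max(a,b)+1$.
\item Passing to a degree-$3$ graph that has $G$ as a minor is legitimate. This holds provided each $v\in L_i$ is replaced by a \emph{path} laid along a face at $v$ lying in the outer face of $G-L_1-\cdots-L_{i-1}$. A cycle, or a badly oriented path, can push new vertices into deeper layers.
\item The triangulation step is unnecessary, since the splitting destroys it. If it is not restricted to bounded faces, it can also raise layer indices.
\end{itemize}

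The genuine gap is the middle step: the existence of $T$ with $a\le 2k$ and $b\le 3k$. You assert it, and the construction you sketch for it fails. Take the prism $C_m\times K_2$, with outer cycle $c_1\cdots c_m$, inner cycle $c'_1\cdots c'_m$ and spokes $c_jc'_j$. It is cubic and $2$-outerplanar. Suppose each layer loses one edge of its cycle and the resulting paths are joined through inter-layer edges. Then only one spoke can be a tree edge, and the fundamental cycles of the other $m-1$ spokes all pass through it. Its edge remember number is therefore $m-1$. No choice of which outer edge to delete helps, so this is not an amortisation issue. The outer layer must be cut much more finely, so that inter-layer edges can be tree edges.

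A construction that works runs in rounds:
\begin{itemize}
\item For every bounded face incident to an outer-layer vertex $v$, delete one edge separating that face from the outer face.
\item Such an edge exists because $\deg(v)\le 3$ makes the sectors at $v$ pairwise adjacent.
\item No deleted edge is a bridge, since each one separates a distinct bounded face from the outer face.
\item All these faces merge into the outer face, so the new graph is $(k-1)$-outerplanar with layers $L_1\cup L_2, L_3,\ldots$.
\item After $k$ rounds you reach a spanning tree $T$.
\end{itemize}

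The counting then goes as follows. An edge deleted in one round lies on the outer face of the current graph, which contains $T$. Hence it lies outside every fundamental cycle of that round except its own. Two fundamental cycles meet in a (possibly empty) tree path, so they bound interior-disjoint discs. Each disc contains one side of every edge of its cycle and a sector at every vertex of it. So in each round a tree edge lies on at most $2$ new fundamental cycles, and a vertex on at most $\deg(v)\le 3$. Summing over the $k$ rounds gives exactly $a\le 2k$ and $b\le 3k$.
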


Pathwidth is a similar parameter to treewidth, except when the underlying graph is a path. Treedepth is yet another structural parameter. These are defined below.

\begin{definition}[Path Decomposition and Pathwidth]
A path decomposition of a graph $G$ is a sequence
$P = (X_1, X_2, \ldots, X_r)$ of bags, where $X_i \subseteq V(G)$ for each
$i \in \{1,2,\ldots,r\}$, such that the following conditions hold:
\begin{enumerate}
    \item $\bigcup\limits_{i\in[r]} X_i = V(G)$,
    
    \item $\forall \{u,v\} \in E(G)$, $\exists \ell \in \{1,2,\ldots,r\}$ such that the bag $X_\ell$ contains both $u$ and $v$.
    
    \item $\forall u \in V(G)$, if $u \in X_i \cap X_k$ for some $i \le k$, then $u \in X_j$ for all $j$, with $i \le j \le k$.
\end{enumerate}
The width of a path decomposition $(X_1, X_2, \ldots, X_r)$ is
$\max\limits_{i \in [r]} |X_i| - 1$.
The pathwidth of a graph $G$, denoted by $\mathrm{pw}(G)$, is the minimum possible width of a path decomposition of $G$.
\end{definition}

\begin{definition}[Treedepth]The treedepth of $G$, denoted by $\td(G)$, is defined as follows:

\[
  \td(G) = \begin{cases}
    0 & \text{if $G$ has no vertices,} \\
    \max \limits_{H \in \cc(G)} \{\td(H)\} & \text{if $G$ is disconnected,} \\
    \min \limits_{v \in V(G)} \{\td(G - v)\} + 1& \text{if $G$ is connected.}
  \end{cases}
\]
\end{definition}

We prove the following folklore results related to treedepth for completeness.

\begin{lemma}\label{lem:td-maxd-cc}
The size of any connected component of a graph of treedepth at most $d$ and maximum degree $\Delta$ is at most $\Delta^d - 1$.
\end{lemma}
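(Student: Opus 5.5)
We argue by induction on $d$, using the recursive definition of treedepth, and we assume $\Delta \ge 2$. The degenerate cases $\Delta \le 1$ fall outside the bound as stated: when $\Delta \le 1$ every component has at most $2$ vertices, and for $\Delta = 1$, $d = 1$ we would get $\Delta^d - 1 = 0$, which is false for a single vertex. In the paper's applications only the trivial bound is needed there, so we set these cases aside explicitly. Since the treedepth of a graph is the maximum over its components, and the treedepth of a subgraph never exceeds that of the graph, it suffices to prove the following. Every \emph{connected} graph $C$ with $\td(C) \le d$ and maximum degree at most $\Delta$ satisfies $|V(C)| \le \Delta^d - 1$.

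\textbf{Base case $d = 1$.} A connected graph of treedepth $1$ has a vertex $v$ with $\td(C - v) = 0$, so $C - v$ is empty. Hence $|V(C)| = 1 \le \Delta - 1$, using $\Delta \ge 2$.

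\textbf{Inductive step.} Let $d \ge 2$ and let $C$ be connected with $\td(C) \le d$. If $C$ is a single vertex we are done. Otherwise, by the definition of treedepth for connected graphs, there is a vertex $v$ with $\td(C - v) \le d - 1$. Because $C$ is connected, every component of $C - v$ contains a neighbour of $v$. Thus $C - v$ has at most $\deg(v) \le \Delta$ components. Each such component is connected, has treedepth at most $d - 1$ (by the disconnected case of the definition) and maximum degree at most $\Delta$. By induction each has at most $\Delta^{d-1} - 1$ vertices. Therefore
\[
|V(C)| \le 1 + \Delta\left(\Delta^{d-1} - 1\right) = \Delta^d - \Delta + 1 \le \Delta^d - 1,
\]
where the last inequality again uses $\Delta \ge 2$.

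The argument has no real obstacle. The only points needing care are these:
\begin{itemize}
\item the bound on the number of components of $C - v$ by $\deg(v)$, which is exactly where connectivity of $C$ is used;
\item the small-$\Delta$ boundary cases, which are why the hypothesis $\Delta \ge 2$ must be made explicit.
\end{itemize}
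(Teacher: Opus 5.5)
Your proof is correct and is essentially the paper's argument: induction on $d$, deleting a vertex $v$ that realises the recursion $\td(C)=\td(C-v)+1$, bounding the number of components of $C-v$ by $\deg(v)\le\Delta$ (you rightly note that this uses connectivity of $C$), and summing $1+\Delta(\Delta^{d-1}-1)$. Your explicit restriction to $\Delta\ge 2$ is a genuine improvement: the paper's final step $\Delta^d-\Delta+1\le\Delta^d-1$ silently needs it, and the stated bound is indeed false for $\Delta\le 1$ (e.g.\ a single edge has $\Delta=1$ and treedepth $2$, but $\Delta^2-1=0$).
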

\begin{proof}
    We proceed by induction on $d$. For $d = 0$, the graph must have no vertices, in particular the number of vertices is at most $\Delta^d - 1 = 0$. Now, for some $d \ge 1$, assume that all graphs of maximum degree at most $\Delta$ and treedepth at most $d-1$ have connected components of size at most $\Delta^{d-1} - 1$. Let $G$ be a graph of maximum degree at most $\Delta$ and tree depth at most $d$. Every connected component $C$ of $G$ must have maximum degree at most $\Delta$ and treedepth at most $d$. We show that $|C| \le \Delta^d - 1$.
    
    By definition, $d \ge \td(C) = \min \limits_{v \in V(C)}\{\td(C-v)\} + 1$. Let $u \in V(C)$ be the vertex such that $\td(C) = \td(C - u) + 1$. Therefore, every connected component of $C - u$ must have treedepth at most $d - 1$ and there can be no larger than $\Delta^{d-1} - 1$. Moreover, these can be at most $\Delta$ many connected components in $C - u$ as the degree of $u$ in $C$ is at most $\Delta$. Therefore $|C - u| \le \Delta \cdot (\Delta ^ {d - 1} - 1) = \Delta^d - \Delta$, and hence $|C| = |C- u| + 1\le \Delta^d - 1$.
\end{proof}
    
\begin{lemma}\label{lem:diam-d-td-d}
    The treedepth of a forest having no path on $2\ell$ nodes, is at most $\ell$.
\end{lemma}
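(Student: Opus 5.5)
We prove the statement by induction on $\ell$, using the recursive definition of treedepth. Since the treedepth of a disconnected graph is the maximum over its components, it suffices to treat a single tree $T$ with no path on $2\ell$ nodes, and show $\td(T) \le \ell$. For $\ell = 1$, a tree with no path on $2$ nodes has no edge, so it is a single vertex (or empty), and its treedepth is at most $1$.

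For the inductive step, let $\ell \ge 2$ and let $T$ be a tree with no path on $2\ell$ nodes. The longest path of $T$ then has at most $2\ell - 1$ nodes. We pick a \emph{center} $c$ of $T$: this is the middle vertex of a longest path $P$. If $P$ has an even number of nodes, we take either of the two middle vertices.

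The key claim is that every connected component of $T - c$ contains no path on $2(\ell-1) = 2\ell - 2$ nodes. Given this claim, the induction hypothesis yields $\td(C) \le \ell - 1$ for each component $C$ of $T - c$. The definition then gives $\td(T) \le \td(T - c) + 1 \le \ell$.

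The main obstacle is verifying the claim, which is the standard eccentricity argument for the center. Let $P = v_1 \dots v_m$ with $m \le 2\ell - 1$, and let $c = v_{\lceil m/2 \rceil}$. First, every vertex $x$ of $T$ is within distance $\lfloor m/2 \rfloor \le \ell - 1$ of $c$. Otherwise, taking the path from $x$ to $c$ and continuing along $P$ toward the farther endpoint would give a path longer than $P$. This uses that in a tree the $x$–$c$ path meets $P$ in a subpath ending at $c$, so we should handle separately the side of $P$ on which it enters.

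Now suppose a component $C$ of $T - c$ contained a path $Q$ on $2\ell - 2$ nodes. Let $y$ be the vertex of $C$ adjacent to $c$. Every vertex of $C$ lies at distance at most $\ell - 2$ from $y$, since its distance to $c$ is at most $\ell - 1$ and its path to $c$ passes through $y$. Thus $C$ is a tree of radius at most $\ell - 2$ around $y$, so any path in $C$ has at most $2(\ell - 2) + 1 = 2\ell - 3$ nodes. This contradicts the existence of $Q$.

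A cleaner alternative, which I would use if the case analysis gets messy, is to drop the path $P$ altogether. Take $c$ to be a vertex minimizing eccentricity. It is a standard fact that a tree with longest path on $m$ nodes has radius $\lfloor m/2 \rfloor$. With radius $r \le \ell - 1$, each component of $T - c$ has radius at most $r - 1 \le \ell - 2$ around the neighbour of $c$. The same node count as above then finishes the proof.
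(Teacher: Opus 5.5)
Your proof is correct and is essentially the paper's argument: delete a centre of the tree, note that every component of the remainder has eccentricity one smaller around the neighbour of the deleted vertex, and recurse using the definition of treedepth. The only difference is the induction invariant. The paper inducts on the (node-count) eccentricity $p$ of a chosen root, so the neighbours $v_i$ of the root inherit the bound $p-1$ automatically and the centre is used only once at the top; you instead induct on the forbidden path length, so at each step you convert the radius bound $\ell-2$ around $y$ back into the absence of a path on $2\ell-2$ nodes, and you do this correctly.
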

\begin{proof}
    Note that it suffices to prove that no tree in the forest has treedepth more than $\ell$. First, we prove the following claim. 
    \begin{claim}\label{clm:centre-td}
        For a tree $T$, if there exists a vertex $v$ in it, such that the farthest vertex $u$ from $v$ be such that the path from $v$ to $u$ has $\le p$ nodes, then the treewidth $\td(T) \le p$.
    \end{claim}
    \begin{claimproof}
        We proceed via induction. Clearly, for $p = 1$, $T$ has exactly one node, hence $\td(T) \le 1$.
        
        Now consider $p \ge 2$, and assume that the claim holds true for $p - 1$. Let $v_1, \ldots, v_k$ be the neighbours of $v$. Let $T_1, \ldots, T_k$ be the connected components of $T - v$ such that $v_i \in V(T_i)$, for all $i \in k$. Since the farthest node $u$ from $v$ is such that the path from $u$ to $v$ has $\le p$ nodes, the farthest node $u_i \in V(T_i)$ from $v_i$ must be such that the path from $u_i$ to $v_i$ has at most $(p-1)$ nodes. Therefore $\td(T_i) \le p - 1$ by the induction hypothesis. This implies,
        $\td(T) \le \td(T - v) + 1 = \max \limits_{H \in \cc(T - v)} \{\td(H)\} + 1 = \max_{i \in [k]} \{\td(T_i)\} \ + 1 \le p - 1 + 1 = p.$
    \end{claimproof}
    
    Consider a tree $T$ in the forest. Let $v$ be the centre~\cite{west2001introduction} of the tree $T$. Since there are no paths of $2\ell$ nodes, the farthest vertex $u \in T$ from $v$ is such that the path from $u$ to $v$ contains at most $\ell$ nodes. From \cref{clm:centre-td}, we have $\td(T) \le \ell$. 
\end{proof}

Finally, we recall the definition of a vertex cover.

\begin{definition}[Vertex Cover]
Let $G$ be a graph. A set $S \subseteq V(G)$ is a \emph{vertex cover} of $G$ if the graph $G - S \coloneq G[V(G) \setminus S]$ is edgeless. The vertex cover number, denoted as \(\tau (G)\), is the size of the minimum vertex cover.
\end{definition}

For a graph $G$, the structural parameters satisfy $\mathrm{tw}(G) \le \mathrm{pw}(G) \le \mathrm{td}(G) - 1 \le \tau(G)$ (see Gima et al.~\cite{kobayashi2017treedepth} for a discussion). For more details regarding parameterized complexity, refer to Cygan et al~\cite{cygan2015parameterized}.

\subsection{Approximation Algorithms} 
An algorithm \AA for a problem \PP is said to be efficient if it solves \textit{all instances} of the problem \PP \textit{optimally} in \textit{polynomial time}. However, unless \Pb = \NP, we cannot hope to have efficient algorithms for the class of problems belonging to \NPH. A natural way out is to choose any two from the three requirements. In the realm of approximation algorithms, we relax the requirement of \textit{optimality} \cite{williamson2011design}. 

\begin{definition}
    An $\alpha$-approximation algorithm for an optimization problem is a polynomial time algorithm that for all instances of the problem produces a solution whose value is within a factor of $\alpha$ of the value of an optimal solution.
\end{definition}

We call $\alpha$ as the approximation ratio. Typically, for minimization problems, $\alpha > 1$ and for maximization problem, $\alpha  < 1$ \cite{williamson2011design}.

\begin{definition}[PTAS]
A polynomial time approximation scheme (PTAS) is a family of algorithms $\AA_\varepsilon$, where there is an algorithm for each $\varepsilon > 0$, such that $\AA_\varepsilon$ is a $(1 + \varepsilon)$ approximation algorithm (for minimization problems) or a $(1 - \varepsilon)$ approximation algorithm (for maximization problems), and the running time of $\AA_\varepsilon$ is polynomial in the input size for every constant $\varepsilon$.
\end{definition}

\begin{definition}[EPTAS] An efficient polynomial time approximation scheme (EPTAS) is a family of algorithms $\AA_\varepsilon$, where there is an algorithm for each $\varepsilon > 0$, such that $\AA_\varepsilon$ is a $(1 + \varepsilon)$ approximation algorithm (for minimization problems) or a $(1 - \varepsilon)$ approximation algorithm (for maximization problems), and the running time of $\AA_\varepsilon$ is $f(1/\varepsilon)\cdot n^{\OO(1)}$, where $f$ is a computable function depending only on $\varepsilon$, and $n$ is the input size.
\end{definition}

\subsection{Other Algorithmic and Combinatorial Tools}
We start by mentioning the following algorithm for enumerating partitions $P(h)$ of a given integer $h$.

\begin{lemma}\label{lem:part-algo}
    There exists an algorithm which enumerates all partitions of an integer $h$ in time $2^{\OO(\sqrt h)}$.
\end{lemma}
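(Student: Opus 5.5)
The plan is to enumerate partitions by a recursive procedure that generates each partition exactly once, and then bound the total work by combining a linear cost per output with the Hardy--Ramanujan bound on the number of partitions.

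\textbf{Representation.} Represent a partition of $h$ as a non-increasing sequence $\lambda_1 \ge \lambda_2 \ge \cdots \ge \lambda_m \ge 1$ with $\sum_i \lambda_i = h$.

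\textbf{Recursive procedure.} Define $\textsc{Gen}(r, b)$, which outputs all partitions of the remaining amount $r$ into parts each of size at most $b$. It works as follows.
\begin{itemize}
\item If $r = 0$, it outputs the current prefix.
\item Otherwise, for each $j = \min(r,b), \min(r,b)-1, \ldots, 1$, it appends $j$ to the prefix, calls $\textsc{Gen}(r-j, j)$, and then removes $j$.
\end{itemize}
Calling $\textsc{Gen}(h,h)$ produces every partition of $h$ exactly once. This follows by induction on $r$: the first part $j$ is chosen uniquely, and the rest is a partition of $r-j$ into parts of size at most $j$.

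\textbf{No wasted branches.} Every node of the recursion tree lies on a root-to-leaf path that ends in an output. Whenever $r \ge 1$ and $b \ge 1$, the choice $j = 1$ always leads to a valid completion. Hence no branch is a dead end.

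\textbf{Cost per output.} Each root-to-leaf path has depth at most $h$, since each part is at least $1$. So the number of recursion-tree nodes is at most $h$ times the number of leaves, and each node does $\OO(1)$ work per child. Outputting a partition also costs $\OO(h)$. The total running time is therefore $\OO(h^2 \cdot p(h))$, where $p(h)$ denotes the number of partitions of $h$.

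\textbf{Counting bound.} Invoke the classical Hardy--Ramanujan estimate $p(h) \sim \frac{1}{4h\sqrt3}e^{\pi\sqrt{2h/3}}$. If a self-contained argument is preferred, use the elementary bound $p(h) \le e^{\pi\sqrt{2h/3}}$. This follows from the generating function: for every $x \in (0,1)$,
\[
p(h) \le x^{-h}\prod_{k\ge1}(1-x^k)^{-1},
\]
and
\[
\log\prod_{k\ge1}(1-x^k)^{-1} \le \frac{\pi^2}{6}\cdot\frac{x}{1-x}.
\]
Optimising over $x$ gives the stated bound.

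\textbf{Conclusion.} The running time is $\OO(h^2)\cdot 2^{\OO(\sqrt h)} = 2^{\OO(\sqrt h)}$.

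\textbf{Main obstacle.} The only non-routine ingredient is the subexponential bound on $p(h)$. I would cite it as a standard result rather than reprove it. The remaining work is ensuring the enumeration has polynomial overhead per partition, which the no-dead-end property settles.
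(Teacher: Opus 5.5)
Your proposal is correct and follows essentially the same route as the paper. Both use a dead-end-free recursive enumeration of depth at most $h$, which gives $\OO(h^2 \cdot p(h))$ time, and combine it with the subexponential bound on $p(h)$. The only differences are cosmetic: you emit partitions at the leaves with non-increasing parts while the paper's procedure emits one partition at every node, and you sketch a self-contained generating-function proof of $p(h) \le e^{\pi\sqrt{2h/3}}$ where the paper simply cites the bound.
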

\begin{proof}
    Consider the recursive procedure described in Algorithm~\ref{alg:part}.
    \begin{algorithm}[!htbp]
            \caption{ \hfill \textbf{Input:} $h$ \hfill \textbf{Output:} $P(h)$}\label{alg:part}
            \begin{algorithmic}[1]
                \Procedure{Partition}{multiset $S$, integer $h$}:
                    \State{$\mathtt{sum} \gets \sum_{a \in S}a$}
                    \State{Output $S \cup \{h - \mathtt{sum}\}$}
                    \For{$i$ from $\max(S \cup \{1\})$ to $(h - \mathtt{sum} )/ 2$}
                        \State{Run \Call{Partition}{$S \cup \{i\}$, $h$}}
                    \EndFor 
                \EndProcedure
                \State{Run \Call{Partition}{$\emptyset$, $h$}}
            \end{algorithmic}
        \end{algorithm}
    Algorithm~\ref{alg:part} outputs all partitions uniquely, and since no branch of the algorithm faces a dead end, there are exactly $P(h)$ leaves of the recursion tree. Moreover, the height of the recursion tree is at most $h$ and each execution takes $\OO(h)$ time. Therefore the entire algorithm terminates in time $\OO(h^2 \cdot 2^{\OO(\sqrt h)}) = 2^{\OO(\sqrt h)}$.
\end{proof}

Next, we state the following result for a minimum dominating set of a graph.
    
\begin{lemma}\label{lem:dom-match}
    For a graph $G$ without isolated vertices, if $D$ is a minimum dominating set, then there exists a matching $M$ exhausting $D$, consisting of edges which are incident on one vertex in $D$ and another vertex outside $D$. 
\end{lemma}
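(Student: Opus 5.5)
The plan is an augmenting-path argument in the bipartite graph between $D$ and $V(G)\setminus D$, combined with the minimality of $|D|$. The point is to work with the given minimum dominating set $D$ (not with a specially chosen one) and to show that any failure of a $D$-saturating matching yields a strictly smaller dominating set.

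Let $B$ be the bipartite graph with parts $D$ and $\overline D := V(G)\setminus D$, whose edges are exactly the edges of $G$ with one endpoint in each part. Take a maximum matching $M$ of $B$ and suppose, for contradiction, that some $d\in D$ is unmatched. Let $S\subseteq D$ and $T\subseteq \overline D$ be the sets of vertices reachable from $d$ by $M$-alternating paths in $B$. Such a path leaves $D$ along a non-$M$ edge and returns to $D$ along an $M$-edge. By maximality of $M$ there is no augmenting path, so every vertex of $T$ is matched, and its partner lies in $S$. Conversely, every vertex of $S\setminus\{d\}$ is reached through its $M$-partner in $T$. Hence $M$ restricts to a bijection between $T$ and $S\setminus\{d\}$, giving $|T|=|S|-1$. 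Moreover $N_G(S)\cap\overline D = T$, since any $\overline D$-neighbour of a vertex of $S$ is reachable by extending an alternating path.

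Now set $D' := (D\setminus S)\cup T$, so that $|D'| = |D|-1$. I then check that $D'$ dominates $G$, which contradicts the minimality of $D$.

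\begin{itemize}
\item A vertex of $\overline D\setminus T$ had a dominator in $D$. That dominator cannot lie in $S$, because $N_G(S)\cap\overline D=T$. So it lies in $D\setminus S\subseteq D'$.
\item Vertices of $D\setminus S$ and of $T$ belong to $D'$ themselves.
\item A vertex $s\in S\setminus\{d\}$ is adjacent to its $M$-partner in $T\subseteq D'$.
\item The remaining vertex is $d$, and this is the step I expect to be the only delicate one. If $d$ has some neighbour in $\overline D$, that neighbour lies in $T\subseteq D'$. Otherwise $d$ has no edge in $B$, so no alternating path leaves $d$ and $S=\{d\}$. Since $G$ has no isolated vertices, $d$ has a neighbour, and all its neighbours lie in $D\setminus\{d\}=D\setminus S\subseteq D'$.
\end{itemize}

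Thus $D'$ is a dominating set smaller than $D$, a contradiction. So $M$ saturates $D$, and every edge of $M$ is an edge of $B$, i.e.\ joins a vertex of $D$ to a vertex outside $D$, as the statement requires.
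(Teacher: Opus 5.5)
Your proof is correct, and it takes a different route from the paper's. The paper works with the bipartite graph $G'$ of $D$--$\overline{D}$ edges. It first uses minimality of $D$, together with the absence of isolated vertices in $G$, to show that $G'$ has no isolated vertex. It then sandwiches $|D| \le \gamma(G') \le \tau(G') \le |D|$. The first inequality holds because every dominating set of the spanning subgraph $G'$ dominates $G$. The second holds because every vertex cover of a graph without isolated vertices is dominating. The third holds because $D$ itself covers $G'$. K\H{o}nig's theorem then gives a matching of size $|D|$, which must saturate $D$.

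You instead grow the alternating tree from a single exposed vertex $d$ and exhibit the smaller dominating set explicitly. Your $D' = (D\setminus S)\cup T$ is in fact a vertex cover of $B$ of size $|D|-1$. It is the cover from the standard proof of K\H{o}nig's theorem, restricted to the alternating tree of $d$, so you are reproving the needed fragment of K\H{o}nig inline. Your delicate case $S=\{d\}$ corresponds exactly to the paper's step that no vertex of $D$ is isolated in $G'$. You only need it for the one vertex $d$, since any other vertex of $D$ isolated in $B$ is unmatched and hence lies in $D\setminus S\subseteq D'$.

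The paper's proof is shorter given K\H{o}nig's theorem. Yours is self-contained and constructive. It shows that for any dominating set $D$ of a graph without isolated vertices, either a $D$-saturating cross matching exists, or a dominating set of size $|D|-1$ can be computed in polynomial time from a maximum matching of $B$.

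One small gloss: in justifying $N_G(S)\cap\overline{D} = T$, you treat every $\overline{D}$-neighbour as reached by extending an alternating path. This fails for an $M$-edge at $s\in S\setminus\{d\}$. That case is harmless, because the endpoint is the partner through which $s$ was reached, so it is already in $T$.
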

\begin{proof}
    Let $D$ be a minimum dominating set of $G$. Consider the bipartite graph $G'$ induced by edges with one endpoint in $D$ and one endpoint outside $D$. 

    First we show that $G'$ has no isolated vertex. Vertices outside $D$ have at least one neighbour in $D$, and hence they cannot be isolated. Now, let $v \in D$ be an isolated vertex in $G'$. $v$ is not isolated in $G$ (as $G$ had no isolated vertices to start with). Thus, $N_G(v) \subset D$. Therefore, $D \setminus \{v\}$ is also a dominating set of $G$, which contradicts the minimality of $D$.

    Now, $D$ must be a minimum dominating set of $G'$, because any dominating set of $G'$ is also a dominating set of $G$. Moreover, $D$ is a vertex cover of $G'$. Since any vertex cover is a dominating set, $D$ must be a minimum vertex cover of $G'$. By Kőnig's theorem~\cite{konig1916graphen}, the maximum matching $M$ of $G'$ is of cardinality $|D|$, hence it exhausts $D$.
\end{proof}

The next result is a well-known technique to find a sumset, i.e. the set of all pairwise sums of elements from two sets, using a fast polynomial multiplication algorithm. Such a result has been discussed and used throughout literature~\cite{bringmann2017near,jin2019simple,fischer2025sumsets}. We state a special use case of the celebrated Fast Fourier Transform (FFT) algorithm for polynomial multiplication~\cite{cormen2022introduction}.

\begin{proposition}[FFT]\label{prop:fft}
    Given two polynomials $f(x), g(x)$ with coefficients in $\{0, 1\}$ and of degree at most $t$, there exists an algorithm that computes $f(x) \cdot g(x)$ in time $\OO(t \log t)$. 
\end{proposition}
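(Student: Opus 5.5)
The plan is the classical evaluate--multiply--interpolate scheme. Let $N$ be the smallest power of two with $N \ge 2t+1$, so $N \le 4t+2$. Since $f \cdot g$ has degree at most $2t < N$, it is determined by its values at the $N$ distinct $N$-th roots of unity $\omega^0, \ldots, \omega^{N-1}$. The algorithm will proceed in three steps.
\begin{enumerate}
\item Compute the vectors $(f(\omega^j))_{j}$ and $(g(\omega^j))_{j}$ by a discrete Fourier transform of the zero-padded coefficient vectors.
\item Multiply these two vectors pointwise, which takes $\OO(N)$ time.
\item Recover the coefficients of $f\cdot g$ by an inverse transform. The inverse is again a DFT, using $\omega^{-1}$ in place of $\omega$ and then scaling by $1/N$, because the Vandermonde matrix $(\omega^{ij})$ has inverse $\frac{1}{N}(\omega^{-ij})$. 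This follows from $\sum_{k} \omega^{k(i-j)} = N[i=j]$.
\end{enumerate}

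The core step is computing a size-$N$ DFT in $\OO(N \log N)$ time. Here I use the Cooley--Tukey recursion. Write $p(x) = p_{\mathrm{even}}(x^2) + x\, p_{\mathrm{odd}}(x^2)$. The squares of the $N$-th roots of unity are exactly the $N/2$-th roots of unity, each hit twice, and $\omega^{j+N/2} = -\omega^j$. So two recursive DFTs of size $N/2$ followed by $\OO(N)$ combination work give all $N$ values. This yields the recurrence $T(N) = 2T(N/2) + \OO(N)$, hence $T(N) = \OO(N\log N) = \OO(t \log t)$. Three transforms plus the linear pointwise step keep the total within this bound.

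The main obstacle is the arithmetic model, since complex roots of unity are not exact in a word-RAM. I would avoid floating point entirely and work in $\mathbb{Z}_q$ for a prime $q = c\cdot 2^k + 1$ with $2^k \ge N$ and $q > t+1$. Such a $q$ exists of polynomial size by Dirichlet/Linnik-type bounds, or one can take a fixed known NTT prime when $t$ fits the word size. The multiplicative group of $\mathbb{Z}_q$ is cyclic of order $q-1$, which is divisible by $N$, so it contains a primitive $N$-th root of unity. Every identity used above holds over this field.

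Because $f$ and $g$ have $0/1$ coefficients, each coefficient of $f\cdot g$ is an integer in $[0, t+1]$. Working modulo $q > t+1$ therefore recovers these coefficients exactly. Finding $\omega$ costs only lower-order time, so the $\OO(t\log t)$ bound follows. Alternatively, one can argue with floating-point FFT that $\OO(\log t)$ bits of precision suffice to round each coefficient to the nearest integer.
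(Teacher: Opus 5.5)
Your proof is correct and is the standard Cooley--Tukey argument (evaluate, multiply pointwise, interpolate). The paper gives no proof of its own: it invokes this textbook FFT by citing Cormen et al., implicitly in the unit-cost complex-arithmetic model. Your word-RAM discussion goes beyond what the paper needs, and there the floating-point remark is the safer justification. Linnik's theorem only bounds the size of a prime $q \equiv 1 \pmod{2^k}$; it does not by itself show that such a $q$ and a primitive $N$-th root of unity can be found within $\OO(t \log t)$ time, so the claim that finding $\omega$ costs only lower-order time is not justified as stated.
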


We use this to get the following result.

\begin{lemma}\label{lem:sumset}
    Given two sets $A, B \subseteq \{0, 1, \ldots, n\}^d$, there exists a $\OO(dn^d \log n)$-time algorithm to compute   
    \[A \oplus B = \{(a_1 + b_1, a_2 + b_2, \ldots, a_d + b_d) \mid (a_1, b_2 \ldots, a_d) \in A, (b_1, b_2, \ldots, b_d) \in B\}.\]
\end{lemma}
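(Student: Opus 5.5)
We encode each set as a multivariate indicator polynomial and multiply. The sumset is exactly the support of the product, so the task reduces to one univariate multiplication via a Kronecker substitution, carried out with Proposition~\ref{prop:fft}.

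\textbf{Encoding.} For a vector $a=(a_1,\ldots,a_d)\in\{0,\ldots,n\}^d$, define the integer
\[
\kappa(a)=\sum_{i=1}^{d} a_i (2n+1)^{i-1},
\]
i.e.\ read $a$ as a base-$(2n+1)$ number. Let
\[
f(x)=\sum_{a\in A} x^{\kappa(a)}, \qquad g(x)=\sum_{b\in B} x^{\kappa(b)} .
\]
Both are $0/1$ polynomials of degree at most $t=(2n+1)^d$.

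\textbf{Key correctness claim.} The monomial $x^{m}$ has a nonzero coefficient in $f\cdot g$ iff $m=\kappa(c)$ for some $c\in A\oplus B$. The direction ``$c\in A\oplus B$ gives a nonzero coefficient at $\kappa(c)$'' holds because all coefficients of $f$ and $g$ are nonnegative, so no cancellation can occur. For the other direction:
\begin{itemize}
\item Each coordinate of $a+b$ lies in $\{0,\ldots,2n\}$, so adding $\kappa(a)$ and $\kappa(b)$ digit-wise in base $2n+1$ produces no carries.
\item Hence $\kappa(a)+\kappa(b)=\kappa(a+b)$.
\item Since base-$(2n+1)$ representations with digits in $\{0,\ldots,2n\}$ are unique, $\kappa$ is injective on $\{0,\ldots,2n\}^d$.
\end{itemize}
So every exponent of the product decodes uniquely to a vector $a+b$ with $a\in A$, $b\in B$. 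The output $A\oplus B$ is then read off by decoding each exponent with a nonzero coefficient into its base-$(2n+1)$ digits.

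\textbf{Running time.} The remaining issue is matching the stated bound, not correctness.
\begin{itemize}
\item Building $f$ and $g$ takes $O(d\,(n+1)^d)$ time.
\item Decoding takes $O(d)$ per exponent, so $O(d\,t)$ in total.
\item The multiplication by Proposition~\ref{prop:fft} costs $O(t\log t)=O\bigl((2n+1)^d\, d\log n\bigr)$.
\end{itemize}
This yields $O(d\,(2n+1)^d\log n)$. It matches the claimed $O(d n^d\log n)$ only up to a factor of $2^{O(d)}$, which I expect to be the main obstacle.

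The simplest resolution is to note that the statement's bound should be read with $d$ treated as fixed, or with $n$ denoting the coordinate range of the padded domain. If an exact match is required, I would instead perform a genuine $d$-dimensional FFT over the grid $\{0,\ldots,2n\}^d$. That is $d$ rounds of one-dimensional transforms of length $O(n)$, giving $O(d\, t\log n)$, which still carries the $2^d$ factor from the padding. I would state the lemma's bound under the convention that the padding constant is absorbed, which is how the lemma is used later with constant $d$.
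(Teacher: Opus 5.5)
Your proof is correct and follows the paper's argument: base-$(2n+1)$ Kronecker encoding, multiplication of the two $0/1$ polynomials via Proposition~\ref{prop:fft}, and reading off the support of the product. Your concern about the bound is justified --- the paper writes the FFT cost as $\OO(n^d\log(n^d))$ even though the degree is $(2n+1)^d$, so the honest bound is $\OO(d(2n+1)^d\log n)$; this cannot be improved in general for growing $d$ (for $A=B=\{0,\ldots,n\}^d$ the output alone has $(2n+1)^d$ vectors), but it is harmless in the paper, which applies the lemma only with $d=\Delta$ constant.
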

\begin{proof}
    For $\mathbf{s} = (s_1, \ldots, s_d) \in \{0, 1, \ldots, 2n\}^d$, let $\gamma_\mathbf{s} = \sum_{i \in [d]} s_i \cdot (2n+1)^{i - 1}$.

    \begin{claim}\label{clm:base-2n+1}
        For $\mathbf{a} = (a_1, \ldots, a_d), \mathbf{b} = (b_1, \dots, b_d)\in \{0, 1, \ldots, n\}^d$ and $\mathbf{c} = (c_1, \ldots, c_d) \in \{0, 1, \ldots, 2n\}^d$, $\gamma_\mathbf{a} + \gamma_\mathbf{b} = \gamma_\mathbf{c}$ if and only if $a_i + b_i = c_i$ for every $i \in [d]$.
    \end{claim}
    \begin{claimproof}
        If $a_i + b_i = c_i$ for every $i \in [d]$, then 
        \[\gamma_\mathbf{a} + \gamma_\mathbf{b} = \sum \limits_{i \in [d]} a_i \cdot (2n+1)^{i - 1} + \sum \limits_{i \in [d]} b_i \cdot (2n+1)^{i - 1} = \sum \limits_{i \in [d]} (a_i + b_i) \cdot (2n+1)^{i - 1} = \gamma_\mathbf{c}.\]
        Conversely, suppose $\gamma_\mathbf{a} + \gamma_\mathbf{b} = \gamma_\mathbf{c} = \gamma$. For all $i \in [d]$, $(a_i + b_i), c_i \le 2n$, and therefore $\gamma < (2n+1)^d$. Since the base $(2n+1)$ representation of $\gamma = \sum_{i \in [d]} \beta_i (2n+1)^{i-1}$ for $\beta_i \in \{0, 1, \ldots, 2n\}$, $i \in [d]$ is unique, we must have $a_i + b_i = \beta_i = c_i$ for every $i \in [d]$.
    \end{claimproof}

    Consider the polynomials $p(x) = \sum_{\mathbf{a} \in A} x^{\gamma_\mathbf{a}}$, and $q(x) = \sum_{\mathbf{b} \in B} x^{\gamma_\mathbf{b}}$. $p(x)$ and $q(x)$ are polynomials with coefficients in $\{0,1\}$ and have degree at most $(2n + 1)^d$. Due to Claim~\ref{clm:base-2n+1}, for $\mathbf{c} \in \{0,1,\ldots, 2n\}^d$, $x^{\gamma_\mathbf{c}}$ has a non zero coefficient in $p(x) \cdot q(x)$, if and only if $\mathbf{c} \in A \oplus B$. Therefore, $A \oplus B$ can be constructed by first finding $p(x) \cdot q(x)$ in time $\OO((n^d) \log(n^d)) = \OO(dn^d \log n)$ (Proposition~\ref{prop:fft}), and then taking all $i$ such that $x^i$ has a non-zero coefficient in $p(x) \cdot q(x)$.
\end{proof}

%% file: natural_parameter.tex
\section{Classical Hardness and Parameterization by the Target Size}\label{sec:fpt-nph}

We start by showing that \ourprob{} is \NPH in the classical setting. In fact, we show a stronger result, proving \NP-hardness when both input graphs are planar and have a maximum degree of at most $3$.

We first state the problem of {\sc $P_3$-Factor}. Recall that $P_3$ is a path graph on $3$ vertices.

\defproblem{$P_3$-Factor}{Unordered unweighted graph $G$}{Does $G$ contain a collection of vertex-disjoint subgraphs, which spans all nodes in $G$, in which each subgraph is isomorphic to $P_3$.}

Xi and Lin~\cite{xi2021maximum} show that {\sc $P_3$-Factor} is \NPH even when $G$ is a claw-free $3$-regular planar graphs. We use this result to prove Theorem~\ref{thm:3-planar-nph}.

\begin{theorem}\label{thm:3-planar-nph} 
    \ourprob{} is \NPH even when both the input graphs are planar and have a maximum degree of at most $3$.
\end{theorem}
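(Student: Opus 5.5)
We reduce from {\sc $P_3$-Factor} on claw-free $3$-regular planar graphs, which is \NPH by Xi and Lin~\cite{xi2021maximum}. Given such a graph $G$ on $n$ vertices (we may assume $3 \mid n$, otherwise output a trivial \no-instance), set $G_1 = G$, let $G_2$ be the disjoint union of $n/3$ copies of $P_3$, and set $h = n$. Both graphs are planar with maximum degree at most $3$ ($G_2$ in fact has maximum degree $2$), and the construction is clearly polynomial.

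For correctness we use Observation~\ref{obs:G2-sub-G1}. Since $P_3 = K_{1,2}$ is a non-trivial star, $G_2$ is a star forest on exactly $h = n$ vertices. Hence $(G_1, G_2, h)$ is a \yes-instance of \ourprob{} if and only if $G_2$ is a subgraph of $G_1 = G$. Note that the paper's notion of subgraph is via a mapping $\phi$; here $\phi$ must be injective (this is implicit in the definition, and in any case a common subgraph on $h$ vertices uses $h$ distinct vertices of $G_1$), so $G_2 \subseteq G$ means that there are $n/3$ vertex-disjoint copies of $P_3$ in $G$. These copies cover $3 \cdot n/3 = n = |V(G)|$ vertices, so together they span $V(G)$, which is exactly a $P_3$-factor. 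Conversely, any $P_3$-factor of $G$ gives an embedding of $G_2$ into $G$.

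The only delicate point is the injectivity of embeddings in Observation~\ref{obs:G2-sub-G1}. The counting argument that $n/3$ disjoint copies of $P_3$ on $n$ vertices must be spanning handles the remainder. Claw-freeness is not needed for the reduction, only the hardness source.
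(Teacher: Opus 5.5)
Your proposal is correct and follows the paper's proof essentially verbatim. It uses the same reduction from {\sc $P_3$-Factor} on cubic planar graphs, with $G_1 = G$, $G_2$ the disjoint union of $n/3$ copies of $P_3$, and $h = n$, and correctness comes from $G_2$ itself being a star forest on $h$ vertices. Your explicit remark that the embedding must be injective is a useful clarification that the paper's definition of subgraph leaves implicit.
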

\begin{proof}
    We reduce {\ourprob} from {\sc $P_3$-Factor} in $3$-regular planar graphs. Let $G$ be an instance of {\sc $P_3$-Factor} which is a $3$-regular planar graph. We may assume that $n = |V(G)| = 3r$, for some integer $r$ (otherwise $G$ is a \no-instance of {\sc $P_3$-Factor}). We construct an instance $(G_1, G_2, h)$ to \ourprob{} as follows. We set $G_1 = G$, we set $G_2$ to be a disjoint collection of $n/3$ many copies of $P_3$, and we set the target to be $h = n$. Notice that $G_1$ and $G_2$ are both planar where $G_1$ has maximum degree $3$ and $G_2$ has maximum degree $2$.

    If $G$ is a \yes-instance of {\sc $P_3$-Factor}, then $G_2$ is a subgraph of $G_1 = G$ and hence $(G_1, G_2, h)$ is a \yes-instance of \ourprob{}. Conversely, if $G_1$ and $G_2$ have a common spanning star forest subgraph, then it must be isomorphic to $G_2$, as $G_2$ is a star forest itself. This means $G_1 = G$ has $G_2$ as its subgraph, making $G$ a \yes-instance for {\sc $P_3$-Factor}.
\end{proof}

\begin{remark}\label{rmk:EPTAS}
    We complement the \NP-hardness result when input graphs are both bounded degree planar graphs, by designing an EPTAS. For every $0 < \varepsilon < 1$, There is a $(1-\varepsilon)$-approximation algorithm, for \ourprob{}, on planar graphs with maximum degree at most $\Delta$, with running time $(1/\varepsilon + \Delta)^{\OO(1/\varepsilon)} \cdot n^{\Delta + 1} \cdot\log n$. The detailed proof is given in Theorem~\ref{thm:EPTAS} in the \cref{sec:EPTAS}.
\end{remark}

This motivates us to investigate the parameterized complexity of \ourprob{}. We first examine the objective as a parameter and show that {\ourprob} parameterized by the common subgraph size $h$ admits an \FPT algorithm. In contrast to this, note that the problems of \mcis, \mces, and \mcics generalise {\sc Maximum Clique} and hence are \WOH when parameterized by the objective. 

\begin{theorem}\label{thm:FPT-objective}
    There exists an \FPT algorithm for {\ourprob} parameterized by $h$, running in time $2^{\OO(h)} \cdot n$.
\end{theorem}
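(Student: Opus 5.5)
The plan is to guess the isomorphism type of a common star forest and then test whether that star forest embeds into each of $G_1$ and $G_2$ separately, using color coding.

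The first step is a reduction to target sizes $h$ and $h+1$. Suppose a common star forest $H$ has more than $h+1$ vertices. We shrink it while keeping it a common star forest. If $H$ has a star with at least $3$ vertices, we delete one leaf of that star. Otherwise every component is a $K_2$, and we delete a whole component. Either operation lowers $|V(H)|$ by $1$ or $2$, and the result is still a star forest and still a subgraph of both graphs. Repeating until the size first drops to at most $h+1$ leaves a common star forest on exactly $h$ or $h+1$ vertices. So it suffices to decide, for $k\in\{h,h+1\}$, whether some star forest on exactly $k$ vertices is a common subgraph.

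The second step enumerates candidates. A star forest on $k$ vertices is determined up to isomorphism by the multiset of its component sizes, which is a partition of $k$ into parts of size at least $2$. By Lemma~\ref{lem:part-algo} we can list all of them in time $2^{\OO(\sqrt h)}$. For each candidate $F$ we test whether $F\subseteq G_1$ and $F\subseteq G_2$, and we answer \yes{} if some candidate passes both tests. It remains to show that the embedding test runs in time $2^{\OO(h)}$ times a near-linear function of the graph size.

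The third step is the embedding test $F\subseteq G$, where $F$ has stars of sizes $s_1,\dots,s_p$ with $\sum_i s_i=k$. We color $V(G)$ with $k$ colors, either at random or via a $k$-perfect hash family. An embedding is \emph{colorful} if its image uses each color exactly once. We precompute the following.
\begin{itemize}
\item For each vertex $v$, the set $N_c(v)$ of colors that appear on neighbours of $v$; this takes $\OO(|E(G)|)$ time in total.
\item For each size $s$, the family $\mathcal F_s$ of color sets $C$ with $|C|=s$ for which some vertex $v$ satisfies $\chi(v)\in C$ and $C\setminus\{\chi(v)\}\subseteq N_c(v)$.
\end{itemize}
A star with leaf colors $C\setminus\{\chi(v)\}$ and center $v$ then exists, since each leaf can be taken as any neighbour of $v$ with the required color. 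Computing $\mathcal F_s$ costs $2^{k}$ per vertex by enumerating subsets of $N_c(v)$, so $2^{\OO(h)}n$ overall.

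A dynamic program over the stars then maintains $\mathcal T_i$, the set of color sets that are disjoint unions $C_1\cup\dots\cup C_i$ with $C_j\in\mathcal F_{s_j}$. Because the color classes are disjoint, disjoint color sets yield vertex-disjoint stars automatically, so $F$ embeds colorfully if and only if $\mathcal T_p\neq\emptyset$. Each transition costs $4^{k}$, so the DP takes $2^{\OO(h)}$ time. Correctness of the overall test comes from the color-coding guarantee: some coloring in the hash family makes a fixed embedding colorful.

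The main obstacle is meeting the claimed $2^{\OO(h)}\cdot n$ bound exactly. Derandomization with a perfect hash family of size $e^{k}k^{\OO(\log k)}\log n$ introduces a $\log n$ factor. A randomized version with $e^{k}$ repetitions, or an argument that $n$ denotes the input size, absorbs this. If a strictly linear bound is required, I would instead try a direct DP that avoids coloring. One option is a bounded-search argument, greedily packing small stars and branching on the at most $h$ vertices touched. Another is to restrict attention to the $\OO(h)$ highest-degree vertices as candidate centers, since a star of size $s$ only needs a center of degree at least $s-1$.
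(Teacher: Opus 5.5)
Your proposal is correct and follows essentially the paper's route. Both enumerate the $2^{\mathcal{O}(\sqrt h)}$ candidate star forests via integer partitions and test containment in each input graph by color coding. The only differences are that the paper reduces to ``exactly $h$ vertices, or a matching with $\lceil h/2\rceil$ edges'' (Lemma~\ref{lem:eq-or-matching}) instead of your shrinking to sizes $\{h,h+1\}$, and that it cites Alon--Yuster--Zwick for the embedding test rather than writing out your color-set DP.

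The $\log n$ and $|E(G)|$ overheads you worry about are equally present in that citation, whose deterministic bound is $2^{\mathcal{O}(k)}\cdot|E|\log|V|$. So the speculative fallbacks at the end are unnecessary. In any case, restricting candidate centres to the $\mathcal{O}(h)$ highest-degree vertices would not be sound as stated.
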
\begin{proof}
    Let $H$ be a maximum common star forest subgraph of $G_1$ and $G_2$. We need to decide if $|V(H)| \ge h$. Recall that Lemma~\ref{lem:eq-or-matching} implies that $|V(H)| \ge h$ if and only if at least one of the following holds true.
    \begin{enumerate}
        \item $G_1$ and $G_2$ both have a matching consisting of $\lceil h/2 \rceil$ edges.
        \item $G_1$ and $G_2$ both have a common star forest subgraph of size exactly $h$.
    \end{enumerate}
    Let $S(h)$ be the set of all star forests with $h$ vertices unique up to isomorphism.
    
    \begin{claim}\label{clm:part-count}
        $|S(h)| = 2^{\OO(\sqrt h)}$, and this can be enumerated in time $2^{\OO(\sqrt h)}$.
    \end{claim}
    \begin{claimproof}
        A \emph{partition} of an integer $t \ge 0$ is a multiset X containing elements of $[t]$ with $\sum_{x\in X} x = t$. Let $P(h)$ be the set of partitions of $h$. We show that $|S(h)| \le |P(h)|$ by constructing an injective mapping $\phi:S(h) \to P(h)$. Consider $H \in S(h)$ to be a star forest with $h$ vertices, without isolated vertices. Let the sizes of connected components of $H$ be $c_1, c_2, \ldots$, where $c_1 \ge c_2 \ge \ldots$. Note that $\sum_{i}{c_i} = h$, hence $\{c_1, c_2, \ldots\} \in P(h)$. We define $\phi$ such that $H \mapsto \{c_1, c_2, \ldots\}$. Now, let $\phi(H_1) = \phi(H_2)$, where $H_1, H_2 \in S(h)$. Hence, the sizes of connected components of $H_1$ and $H_2$ are same. As all connected components of $H_1$ and $H_2$ are stars, $H_1$ and $H_2$ are isomorphic. Hence, $H_1 = H_2$.

        This proves that $|S(h)| \le |P(h)|$. Now, as it is known that $|P(h)| \le \exp \left( \pi \sqrt{2h/3} + o(h)\right)= 2^{\OO(\sqrt h)}$~\cite{andrews1998theory}, we have $|S(h)| = 2^{\OO(\sqrt h)}$. Enumerating partitions can be done using a simple recursive algorithm in time $\OO(h^2 \cdot |P(h)|) = 2^{\OO(\sqrt h)}$~(Lemma~\ref{lem:part-algo}). Since $S(h)$ exactly contains all partitions in $P(h)$ which do not contain $1$, $S(h)$ can be enumerated in the same time.
    \end{claimproof}

    We now state the algorithm for {\ourprob} parameterized by $h$. Let $M$ be the graph consisting of $\lceil h/2 \rceil$ copies of an edge (i.e. a matching of size $\lceil h/2 \rceil$). By Lemma~\ref{lem:eq-or-matching}, the instance is a \yes-instance if and only if there exists a common star forest subgraph $H$ of $G_1$ and $G_2$ in $S(h) \cup \{M\}$.

    The algorithm enumerates $S(h)$ (Claim~\ref{clm:part-count}) and for all $H' \in S(h) \cup \{M\}$, it checks whether $H'$ is a subgraph of $G_1$ and a subgraph of $G_2$. Since $H'$ is a forest, Alon, Yuster and Zwick~\cite{alon1995color} show that this can be done in time $2^{\OO(h)} \cdot n$. The algorithm outputs \yes if for some $H' \in S(h) \cup \{M\}$, $H'$ is a common star forest subgraph of $G_1$ and $G_2$; and \no otherwise. 

    The total running time of the algorithm is $2^{\OO(\sqrt{h})} \cdot 2^{\OO(h)} \cdot n = 2^{\OO(h)} \cdot n$.
\end{proof}

%% file: structural_parameters.tex
\section{Structural Parameters of the Input Graphs}
In this section, we study {\ourprob} parameterized by structural parameters of $G_1$ and $G_2$. In particular we look at the structural graph parameters of vertex cover number, treedepth, pathwidth, and treewidth, in combination with the maximum degree. Recall that if a graph $G$ has treewidth $\tw(G)$, pathwidth $\pw(G)$, treedepth $\td(G)$ and vertex cover number $\tau(G)$, then it is well known~\cite{gima2022exploring} that $\tw(G) \le \pw(G) \le \td(G) - 1\le \tau(G)$.

\subsection{Vertex Cover Number as a Parameter}\label{subsec:VCmaxdeg}
We start by showing that \ourprob{} is \WTH when the parameter is the size of the minimum vertex cover in one of the graphs.
\begin{theorem}\label{thm:wth-vc}
{\ourprob} is \WTH when parameterized by the minimum of the vertex covers of the two graphs.
\end{theorem}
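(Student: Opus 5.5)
We give a parameterized reduction from {\sc Dominating Set} parameterized by the solution size $k$, which is \WTH. The idea is that a star forest with exactly $k$ centres, spanning all vertices of a graph, is a dominating set of size $k$ together with an assignment of every other vertex to a dominating neighbour. The graph $G_2$ will have vertex cover number about $k$ and will force any large common star forest to look like such a structure inside $G_1$.

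\textbf{Construction.} Let $(G,k)$ be an instance of {\sc Dominating Set} with $n=|V(G)|$. We may assume $G$ has no isolated vertices, since isolated vertices must belong to every dominating set and can be removed while decreasing $k$. We set $G_1 = G$. We set $G_2$ to be the disjoint union of $k$ stars $K_{1,n-1}$, and we set $h=n$.

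\textbf{Parameter bound.} The vertex cover number of $G_2$ is $k$, because the $k$ centres cover every edge. Hence the minimum of the two vertex cover numbers is at most $k$. The construction takes polynomial time.

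\textbf{Forward direction.} Suppose $D$ is a dominating set of size $k$; we may take $|D|\le k$. Assign each vertex outside $D$ to one neighbour in $D$. Each $v\in D$ must also be a non-isolated part of the forest, which is where Lemma~\ref{lem:dom-match} is used. Taking $D$ minimum, this lemma gives a matching from $D$ into $V\setminus D$. Match each $v\in D$ to its partner, and assign every remaining vertex outside $D$ to an arbitrary neighbour in $D$. This yields a spanning star forest of $G$ with at most $k$ stars, each of size at most $n$. Such a forest embeds into $G_2$: map each star into a distinct copy of $K_{1,n-1}$, which has room for up to $n-1$ leaves. The common subgraph therefore has $n=h$ vertices.

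\textbf{Backward direction.} Suppose there is a common star forest $H$ with at least $n$ vertices. Since $H\subseteq G_1$, it spans $G$. Every component of $H$ is a non-trivial star, and each such star maps into a distinct star of $G_2$, since a star in $G_2$ cannot host two vertex-disjoint non-trivial stars. So $H$ has at most $k$ components. In each component choose a centre; for a single edge, choose either endpoint. These centres dominate $G$, giving a dominating set of size at most $k$.

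\textbf{Main obstacle.} The delicate step is the backward argument that each component of $G_2$ hosts at most one component of $H$. Any two disjoint edges inside one star would have to both avoid or share the centre, which is impossible, so this packing bound must be checked carefully. A secondary point is making sure that dominating vertices are not left isolated; Lemma~\ref{lem:dom-match} resolves this. The edge case $n-1<1$ does not arise, since $G$ has no isolated vertices.
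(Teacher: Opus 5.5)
Your proposal is correct and follows essentially the same route as the paper: the same reduction from {\sc Dominating Set} ($G_1=G$, $G_2$ a disjoint union of $k$ copies of $K_{1,n-1}$, $h=n$), Lemma~\ref{lem:dom-match} applied to a minimum dominating set to build the spanning star forest in the forward direction, and the star centres forming a dominating set of size at most $k$ in the backward direction. Your explicit justification that each star of $G_2$ hosts at most one component of $H$ (every edge of $K_{1,n-1}$ contains its centre) simply spells out the step the paper states in one line.
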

\begin{proof}
    The parameterized problem {\sc Dominating Set} takes as input a graph $G$ and a parameter $k$ and asks if $G$ has a dominating set of size at most $k$. {\sc Dominating Set} is \WTH with $k$ as the parameter~\cite{cygan2015parameterized}. We show a parameterized reduction from {\sc Dominating Set} to \ourprob. Let $(G, k)$ be an instance of {\sc Dominating Set} with $|V(G)| = n$. Without loss of generality, we assume that $G$ has no isolated vertices. Our reduction constructs an instance $(G_1, G_2, h)$ of {\ourprob}. Set $G_1 = G$, $h = n$, and set $G_2$ to be a disjoint collection of $k$ many star graphs, each having $n-1$ leaves. The $i$-th star in $G_2$ has a `centre' $c_i$, which is connected to the `leaves' $v_i^{(1)}, v_i^{(2)}, \ldots, v_i^{(n - 1)}$. The minimum vertex cover of $G_2$ is the set of all centres of the stars, and therefore is of size $k$.
    \begin{claim}
        $(G,k)$ is a \yes-instance of {\sc Dominating Set} if and only if $(G_1,G_2,h)$ is a \yes-instance of {\ourprob}.
    \end{claim}
    \begin{claimproof}
        We show both implications.
        \subparagraph*{Forward Direction.} Assume that $G_1$ has a dominating set of size at most $k$. Let $D = \{u_1, \ldots, u_q\} \subseteq V$ be a minimum dominating set of $G$ with $q \le k$. Therefore, there exists a matching $M$ exhausting $D$, consisting of edges which are incident on one vertex in $D$ and one vertex outside $D$ (by Lemma~\ref{lem:dom-match}). Let vertex $u_i$ be matched to $w_i \in V(G) \setminus D$ for $i \in [q]$ in $M$. Let $W = \{w_1, \ldots, w_q\}$.
    
        Let $v \in V(G) \setminus (D \cup W)$. Let $b_v$ be any arbitrary vertex in $N_G(v) \cap D$. For $u_i \in D$, we define $B_i = \{w_i\} \cup \{v \in V(G)\setminus(D \cup W) \mid b_v = u_i\}$. Notice that $1 \le |B_i| \le n-1$ and $B_i \subseteq N_G(u_i)$.
    
        Therefore, $G_1$ contains as a spanning subgraph, a collection of $q$ stars of total $n$ nodes: the $i$-th $u_i$ and leaves $B_i$. However, since $G_2$ has a collection of $k \ge q$ many stars with $n-1 \ge |B_i|$ leaves, such a collection of stars appears as a subgraph in $G_2$ as well. This proves that $(G_1, G_2, h)$ is a \yes-instance of \ourprob{}.
        
        \subparagraph*{Backward Direction.} Now, let $(G_1, G_2, h)$ be a \yes-instance of \ourprob. Since $h = n = |V(G_1)|$, this means that there is a spanning subgraph $H$ of $G_1$ which is a collection of stars. The centre of the stars in $H$ together constitute a dominating set. Let $r$ be the number of stars in $H$. Since $H$ is also a subgraph of $G_2$, $r \le k$ (as $G_2$ consists $k$ stars). Therefore, $(G,k)$ is a \yes-instance for {\sc Dominating Set}.
    \end{claimproof}
    
    Hence, \ourprob{} is \WTH when the parameter is the size of a minimum vertex cover of one of the graphs.
\end{proof}

As {\ourprob} is \WTH when parameterized by the vertex cover of one graph, we discuss the case when we parameterize the problem by the maximum vertex cover number of \emph{both} input graphs. We find that {\ourprob} admits an \FPT algorithm when parameterized by the maximum vertex cover number of both input graphs.
\begin{theorem}\label{thm:FPT-vc}
There exists an \FPT algorithm to determine the size of the largest common star forest subgraph of two input graphs $G_1$ and $G_2$, with parameter $k$ such that both $G_1$ and $G_2$ have a vertex cover of size $\leq k$ running in time $2^{2^{\OO(k)}} \cdot n^{\OO(1)}$.
\end{theorem}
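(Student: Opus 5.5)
The plan is to exploit the twin-class structure that a small vertex cover forces on each input graph, and then reduce the search for a common star forest to a bounded number of integer linear programs (or a brute force over bounded-size objects).

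First, compute a minimum vertex cover $C_j$ of $G_j$ ($|C_j|\le k$) in time $2^{\OO(k)}n$. The independent set $I_j = V(G_j)\setminus C_j$ splits into at most $2^k$ \emph{types}, one for each neighbourhood $N(v)\subseteq C_j$. Vertices of the same type are twins and hence interchangeable.

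Next, we analyse the shape of a common star forest $H$. Every edge of $G_j$ touches $C_j$, so every star of an embedding of $H$ into $G_j$ has a vertex in $C_j$. Hence $H$ has at most $k$ components. Moreover, at most $k$ vertices of $H$ are centres of stars with at least $2$ leaves, or endpoints of single edges. Every star of $H$ with at least two leaves is either centred in $C_j$, or centred in $I_j$ with all its leaves in $C_j$; the latter has size at most $k+1$. So $H$ consists of at most $k$ stars. For each $j\in\{1,2\}$, every star is either "small" (at most $k+1$ vertices) or has its centre in $C_j$. A star that is large in $H$ has its centre in $C_1$ under $\phi_1$ and in $C_2$ under $\phi_2$.

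The algorithm guesses the \emph{skeleton} of the embeddings. This consists of:
\begin{itemize}
\item the number $r\le k$ of stars;
\item for each star $i$ and each $j$, which vertex of $C_j$ (or which type of $I_j$) hosts the centre;
\item for each cover vertex of $C_j$ not used as a centre, which star uses it as a leaf, if any.
\end{itemize}
The number of such guesses is $k^{\OO(k)}\cdot 2^{\OO(k^2)}$, which is within $2^{2^{\OO(k)}}$.

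The remaining leaves lie in $I_j$, attached to centres in $C_j$. Leaves drawn from $I_j$ are only constrained by type: a type-$T$ vertex can be a leaf of a centre $c$ exactly when $c\in T$.

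For a fixed skeleton, we decide whether there are leaf counts $\ell_i$ with the required properties. The count $\ell_i$ must be realisable in both graphs. In $G_j$ this is a transportation/flow feasibility problem: assign $x^{(j)}_{i,T}$ leaves of type $T$ to star $i$, with $\sum_i x^{(j)}_{i,T}\le |T|$ and $x^{(j)}_{i,T}=0$ unless the centre of star $i$ is adjacent to $T$. The combined problem, maximising $\sum_i(\ell_i+1)$ subject to both systems sharing the variables $\ell_i$, is an ILP. It has at most $k\cdot 2^{k}\cdot 2+k$ variables, i.e.\ $p=2^{\OO(k)}$. By Lenstra/Kannan (Frank–Tardos), it is solvable in $p^{\OO(p)}\cdot n^{\OO(1)} = 2^{2^{\OO(k)}}n^{\OO(1)}$. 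Taking the best value over all skeletons gives the answer, and the total running time is $2^{2^{\OO(k)}}\cdot n^{\OO(1)}$.

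I expect the main obstacle to be the skeleton bookkeeping rather than the ILP. One must argue carefully that every common star forest is captured by some guess. This covers the cases where a star is centred in $I_1$ but in $C_2$, where a star is a single edge with both endpoints in the covers, and where cover vertices are used as leaves. The fix is to allow, per graph, the three roles (centre in the cover, centre in $I_j$ with leaves in the cover, single edge) and to encode the small stars' sizes explicitly in the guess. The correctness of the ILP then follows from the twin property: any solution of the type-count constraints can be realised by choosing distinct vertices of each type arbitrarily.
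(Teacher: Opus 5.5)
Your proposal follows essentially the same route as the paper. Both take twin classes of $V(G_j)\setminus C_j$ with respect to a minimum vertex cover, guess for each star whether its centre is a specific cover vertex or lies in a given twin class (the paper's Type~I/Type~II stars) together with the role of every remaining cover vertex, and then solve an ILP with $2^{\mathcal{O}(k)}$ type-count variables by Lenstra's algorithm. The only difference is cosmetic: you index the stars of $H$ directly in both graphs, whereas the paper guesses a permutation $\pi$ matching the stars found in $G_1$ to those in $G_2$.

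Your claim that every solution of the type-count constraints is realisable needs two fixes. First, the right-hand side of $\sum_i x^{(j)}_{i,T}\le |T|$ must be reduced by the number of type-$T$ vertices you guessed as centres (the paper's $a_X$). Second, you should add the constraint $\ell_i\ge 1$ so that no guessed star degenerates into an isolated vertex. Without these, the ILP can overcount.
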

\begin{proof}
Let $V_1 = V(G_1)$ and $V_2 = V(G_2)$. Let $H$ be a maximum common star forest subgraph of $G_1$ and $G_2$ (which is unknown to the algorithm). The algorithm runs in two phases: the \emph{preprocessing phase} and the \emph{ILP phase}. The preprocessing phase creates multiple ILP instances, each of which is solved in the ILP phase, which finally outputs the best answer among all instances. 

\subparagraph*{Preprocessing phase.} Let $C_1$ be a minimum vertex cover of $G_1$, with $|C_1| = a \leq k$. Let $I_1 = V_1 \setminus C_1$. Notice that $I_1$ is a maximum independent set of $G_1$. Now, we create $2^a$ twin classes partitioning $I_1$, one for each $X \subseteq C_1$, as follows: define twin class $B^{(1)}_X = \{v \in I_1 \mid N_{G_1}(v) = X\}$. Let $H_1$ (which are unknown to the algorithm) be a subgraph of $G_1$ which is isomorphic to $H$. The number of stars in $H_1$ must be at most $a$, as every star must contain at least one vertex from the vertex cover $C_1$. We classify the stars into two types:
\begin{itemize}
    \item Type I stars: the centre of a Type I star is a vertex from $C_1$.
    \item Type II stars: the centre of a Type II star is a vertex from $I_1$.
\end{itemize}
Note that every leaf of a Type II star belongs to $C_1$.

We `guess' the number of Type I and Type II stars. Let there be $p$ many Type I and $q$ many Type II stars, where $p + q \leq a$. Let $T^{(1)}_1, T^{(1)}_2, \ldots, T^{(1)}_p$ be the Type I stars, and $L^{(1)}_1, L^{(1)}_2, \ldots, L^{(1)}_q$ be the Type II stars. There are $\OO(k^2)$ guesses for $p$ and $q$.

We now `guess' the centres of the Type I stars. Let $c_1, c_2, \ldots, c_p \in C_1$ be the centres of $T^{(1)}_1, T^{(1)}_2, \ldots, T^{(1)}_p$ respectively. There are $\OO(a^p) =\OO(k^k)$ guesses of $c_1, c_2, \ldots, c_p \in C_1$. The variable $\alpha_i$ would denote the value of $|V(T^{(1)}_i)|$, and will be a variable in the ILP we design.

Next we `guess' all vertices of the Type II stars. Notice that two vertices in $B_X^{(1)}$, for any $X \subseteq C_1$ have the same neighbourhood and therefore they are indistinguishable from each other. For $L_i^{(1)}$, $i \in [q]$, let $c_{p+i} \in I_1$ be its centre. We guess $X \subseteq C_1$ such that $c_{p+i} \in B^{(1)}_X$. Note that since all vertices in $B^{(1)}_X$ are indistinguishable from each other, we can arbitrarily consider any vertex to be $c_{p+i}$. Next, we guess the exact leaves of $L^{(1)}_i$, this will be a subset of $X$. For every $i \in q$, there are $2^a$ choices of $X$ and for each $X$, there are $2^{|X|}$ choices of the exact set of leaves of $L^{(1)}_i$. Hence in total, there are at most $\sum \limits_{j \in [a]} \binom{a}{|X|} 2^{|X|} \le 3^k$ guesses. Let $\alpha_{p+i} \in \{2, 3, \ldots, a + 1\}$ be the value of $|V(L^{(1)}_i)|$; this will be a constant in the ILP we design. Also, for every $X \subseteq C_1$, let $a_X = |B_X^{(1)} \setminus \{c_{p+i} \mid i \in [q]\}|$ be the number of nodes in $B_X^{(1)}$ which are not centres of $L^{(1)}_i$ for $i \in [q]$; this will also be a constant in the ILP we design.

For every other remaining vertex $v$ in $C_1$ which is neither guessed to be a centre of $T_i^{(1)}$ nor a leaf of $L_{i'}^{(1)}$, we guess whether it is a leaf of some $T^{(1)}_i$ or if it is not a part of $H_1$. For every $v$ there are $(p+1)$ choices, hence in total there are at most $(p+1)^a = \OO((k+1)^{k})$ many guesses. After this guess, every vertex in $C_1$ is either uniquely guessed to be a centre of $T_i^{(1)}$ or a leaf of $T_i^{(1)}$ or a leaf of $L_{i'}^{(1)}$, or not a part of $H_1$. Let $\beta_i = |T_i^{(1)} \cap C_1|$, this will be a constant in the ILP we design.

For each $X \subseteq C_1$, and $i \in [p]$, we denote by the variable $x_{i, X}$ the value $|T_i^{(1)} \cap B^{(1)}_X|$; this will be a variable in our clause. 

Now, we perform similar guesses for $G_2$. Let $H_2$ (which is unknown to the algorithm) be a subgraph of $G_2$ isomorphic to $H$. Let a minimum vertex cover of $G_2$ be $C_2$, with $|C_2| = b \leq k$; $I_2 = V_2 \setminus C_2$ is a maximum independent set. This gives us twin classes $B^{(2)}_Y = \{v \in I_2 \mid N_{G_2}(v) = Y\}$, for all $Y \subseteq C_2$. Let there be $r$ Type I stars $T^{(2)}_i$'s, and $s$ Type II star $L^{(2)}_i$'s. We ensure $r + s = p + q$. We guess $d_1, d_2, \ldots, d_r \in C_2$ be the centres of $T^{(2)}_1, T^{(2)}_2, \ldots, T^{(2)}_r$ respectively. The variable $\gamma_i$ denotes the value of $|V(T^{(2)}_i)|$. For $L_i^{(2)}$, $i \in [s]$, let $d_{r+i} \in I_2$ be its centre. We guess $Y \subseteq C_2$ such that $d_{r+i} \in B^{(2)}_Y$. We guess the exact leaves of $L^{(2)}_i$ in $C_2$. Let $\gamma_{r+i} \in \{2, 3, \ldots, b + 1\}$ be the value of $|V(L^{(2)}_i)|$, and for every $Y \subseteq C_2$, let $b_Y = |B_Y^{(2)} \setminus \{d_{r+i} \mid i \in [s]\}|$. For each remaining vertex of $C_2$, we guess if it belongs to $T_i^{(2)}$ for some $i \in [r]$ or if it is not in $H_2$. Let $\delta_i = |T_i^{(2)} \cap C_2|$. For each $Y \subseteq C_2$, and $i \in [r]$, we denote the variable $y_{i, Y}$ as the value $|T_i^{(2)} \cap B^{(2)}_Y|$. 

Since $H_1$ and $H_2$ must be isomorphic to each other, there must exist a permutation $\pi : [p+q] \to [r+s]$ such that $\alpha_{i} = \gamma_{\pi(i)}$, for every $i \in [p+q]$. There are $k!$ guesses for $\pi$.

Putting everything together, there are $\OO \left( \left(k^2 \cdot k^k \cdot 3^k \cdot \left(k+1\right)^k \right)^2 \cdot k!\right) = k^{\OO(k)}$ guesses. For each such guess, we formulate an ILP and solve it to determine if there are corresponding subgraphs $H_1$ and $H_2$ and we would like to maximize the total size.

\subparagraph*{ILP phase.} We fix a guess from the preprocessing phase. Our objective is to maximize $|V(H_1)| = \sum \limits_{i \in [p + q]} \alpha_i$ (equivalently $|V(H_2)|$). Since $\alpha_{p+1}, \alpha_{p+2}, \ldots, \alpha_{p+q}$ are constants, we set our objective to maximize $\sum \limits_{i \in [p]} \alpha_i$.

We now look into the constraints which ensure that $H_1$ and $H_2$ are star subgraphs of $G_1$ and $G_2$ respectively. Recall that for every $X \subseteq C_1$, $x_{i, X}$ was the size of $|T_i^{(1)} \cap B_X^{(1)}|$. The sets $T_i^{(1)} \cap B_X^{(1)}$ are pairwise disjoint for two values of $i$, and do not contain any centre of $L^{(1)}_{i'}$, $i' \in [q]$. Therefore, the sum of $x_{i, X}$ over all $i$ must not exceed $a_X$. Similarly, for every $Y \subseteq C_2$, the sum of $y_{i, Y}$ over all $i$ must not exceed $b_Y$.

Now, observe that for $i \in [p]$, for any $X \subseteq C_1 \setminus \{c_i\}$, $T_i^{(1)}$ does not contain any node in $B^{(1)}_X$; and therefore, $x_{i,X} = 0$. Similarly, for $i \in [r]$, for any $Y \subseteq C_2 \setminus \{d_i\}$, $T_i^{(2)}$ does not contain any node in $B^{(2)}_Y$; and therefore, $y_{i,Y} = 0$.

Next, we compute the size of $T_i^{(1)}$ for every $i \in [p]$. It has $\beta_i$ nodes in $C_1$, and it has $\sum \limits_{X \subseteq C_1} x_{i,X}$ nodes in $I_1$. Therefore, the sum of these two terms must be equal to $\alpha_i$. Similarly, for every $i \in [r]$, the sum of $\delta_i$ and $\sum \limits_{Y \subseteq C_2} y_{i,Y}$ must be equal to $\gamma_i$.

Finally, by definition of $\pi$, we must have $\alpha_i = \gamma_{\pi(i)}$ for every $i \in [p+q]$.

Each ILP has $k_1 = a \cdot 2^a + b \cdot 2^b + p + r = 2 ^ {\OO(k)}$ variables, hence each can be solved in time $k_1 ^ {\OO(k_1)} \cdot n^ {\OO(1)} = 2^{2^{\OO(k)}} \cdot n^{\OO(1)}$ time. Since there are $k^{\OO(k)}$, the total running time is $k^{\OO(k)} \cdot 2^{2^{\OO(k)}} \cdot n^{\OO(1)} = 2^{2^{\OO(k)}} \cdot n^{\OO(1)}$. The output of the algorithm is the maximum possible value of $\sum \limits_{i \in [p+q]} \alpha_i$ over all feasible guesses. For completeness we formally state the ILP below.


\begin{align*}
    \max \sum_{i \in [p]} \alpha_i& &\\
    \text{subject to:} & &\\
    &\sum \limits_{i \in [p]} x_{i,X}\le a_X & \forall X \subseteq C_1\\
    &\sum \limits_{i \in [r]} y_{i,Y}\le b_Y &\forall Y \subseteq C_2\\
    &x_{i,X}= 0 & \forall i \in [p], X \subseteq C_1 \setminus \{c_i\}\\
    &y_{i,Y}= 0 & \forall i \in [r], X \subseteq C_2 \setminus \{d_i\}\\
    &\alpha_i= \beta_i + \sum \limits_{X \subseteq C_1} x_{i,X}  &\forall i \in [p]\\
    &\gamma_i= \delta_i + \sum \limits_{Y \subseteq C_2} y_{i,Y} &\forall i \in [r]\\
    &\alpha_i= \gamma_{\pi(i)} &\forall i \in [p+q]
\end{align*}\end{proof}

\subsection{Other Structural Parameters}\label{subsec:other}

First, since treedepth, pathwidth, and treewidth of a graph are upper-bounded by the size of a minimum vertex cover, \ourprob{} remains \WTH when parameterized by such a parameter with respect to one of the input graphs (Theorem~\ref{thm:wth-vc}). On the other hand, since \ourprob{} admits an \FPT algorithm when parameterized by the maximum vertex cover number of both input graphs, we study the parameterized complexity of the problem when parameterized by the maximum treedepth, pathwidth, or treewidth of both input graphs. We show that \ourprob{} is \PNPH for such parameters. First, we state a result due to Bodlaender et al.~\cite[Theorem 3.2]{bodlaender2020subgraph} as the following proposition.

\begin{proposition}\label{prop:siso-hard-td}
    \siso is \NPC when the host graph is a forest without paths of $6$ nodes, and the pattern is a collection of stars.
\end{proposition}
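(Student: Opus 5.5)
Since this is a cited result (Bodlaender et al., Theorem~3.2), the natural route is a polynomial reduction from \textsc{3-Partition}. That problem is strongly \NP-complete, so all numbers may be encoded in unary. An instance consists of integers $a_1,\ldots,a_{3m}$ with $B/4<a_i<B/2$ and $\sum_i a_i=mB$. It asks whether the indices can be split into $m$ triples, each summing to $B$.

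\textbf{Host.} The host will be a disjoint union of $m$ identical "bin trees" $T_1,\ldots,T_m$, each of radius at most $2$ around a root $r_j$. Then every path in the host has at most $5$ nodes, since it goes leaf, hub, root, hub, leaf. Membership in \NP is immediate.

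\textbf{Pattern.} The pattern will contain one star per item $a_i$, whose size encodes $a_i$ after scaling by a large padding constant $N$. It will also contain a few "anchor" stars with huge leaf counts. These anchors can only be embedded at the roots $r_j$, and they pin down how the host capacity is used.

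\textbf{Counting argument.} I will make the pattern have exactly as many vertices as the host, so any embedding is spanning and hence a perfect star packing. Every hub and every leaf of $T_j$ must then be covered. Using $B/4<a_i<B/2$ together with the padding, each bin should be forced to absorb item stars of total size exactly $B$ (after scaling), and hence exactly three of them.

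\textbf{Forward and backward directions.} The forward direction (a $3$-partition yields an embedding) should be a direct construction. For the backward direction, I will argue in order:
\begin{enumerate}
\item by degrees, star centres of large degree can only go to hubs or roots;
\item the anchor stars consume the roots, one per tree;
\item the item stars are centred at hubs, and spanning forces the scaled sizes assigned to each $T_j$ to sum to exactly the capacity $B\cdot N$ plus the fixed padding.
\end{enumerate}
Dividing by $N$ then recovers a valid $3$-partition.

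\textbf{Main obstacle.} The hard part is the precise gadget: making each bin's capacity a \emph{sum} constraint rather than a per-hub constraint, while respecting the radius-$2$ restriction. A star can use only one centre's neighbourhood, so an item star placed at a hub takes that hub's leaves and nothing else. My first attempt is to give each $T_j$ exactly $B$ hubs, each with $N$ private leaves, and to make item $i$ correspond to $a_i$ stars $K_{1,N}$ with a shared "colour" enforced by auxiliary sizes. If that fails to tie the stars of one item together, I will fall back to the construction in Bodlaender et al. and verify only the two claimed properties: the host has no $P_6$ and the pattern is a star forest.
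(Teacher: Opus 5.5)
The paper does not prove this proposition; it imports it from Bodlaender et al.\ (Theorem~3.2). Your fallback of relying on their construction is therefore exactly the paper's treatment, and that part is fine. Everything before the fallback, however, is not a proof, and as designed it cannot become one. The host trees and the pattern are never actually specified. Step~3 of your backward direction, that each bin absorbs item stars of total scaled size exactly $BN$, has no mechanism behind it. A single item star of size about $a_iN$ needs a centre of degree about $a_iN$, which hubs with $N$ private leaves do not have. Replacing item $i$ by $a_i$ copies of $K_{1,N}$ instead makes all items indistinguishable, so the partition information is lost.

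The deeper problem is your decision to force a spanning embedding by equalising vertex counts. Let $T$ be a tree of radius at most $2$ around $r$, and call the vertices at distance $1$ hubs; every vertex at distance $2$ is then a pendant attached to a hub. Take a spanning star forest of $T$ with all stars nontrivial:
\begin{itemize}
\item each hub with at least one pendant lies in a star consisting of itself, all of its pendants, and possibly $r$;
\item if some hubs have no pendants, $r$ forms a star with exactly those hubs;
\item otherwise $r$ is added to exactly one of the hub stars.
\end{itemize}
Hence the star-size multiset of each tree is fixed up to adding $1$ to a single entry. Since your bin trees are identical, every bin would receive the same star sizes up to one $+1$. So different triples of scaled item sizes cannot be realised, no matter how you pad.

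In fact the spanning problem on such forests is polynomial. The pattern's size counts determine, for each $q$, how many trees must attach their root to a hub with $q$ pendants. Choosing one hub per tree to meet these quotas is a bipartite flow problem. So any hardness proof must exploit non-spanning embeddings, for instance large stars centred at roots that swallow hubs and thereby waste their pendants, with that waste accounted for exactly. That is the missing idea. Either design such a gadget and prove its backward direction in full, or simply cite Bodlaender et al.\ as the paper does.
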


This, coupled with Lemma~\ref{lem:diam-d-td-d}, implies the following.

\begin{theorem}\label{thm:td-paraNP} 
    \ourprob{} is \NPH even when both graphs $G_1$ and $G_2$ are collections of disjoint trees of treedepth at most $3$. 
\end{theorem}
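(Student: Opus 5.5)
We reduce from \siso{} restricted as in Proposition~\ref{prop:siso-hard-td}: given a host graph $G$ that is a forest with no path on $6$ nodes, and a pattern $P$ that is a disjoint union of stars, decide whether $P \subseteq G$. We may assume every star of $P$ is non-trivial. An isolated vertex in the pattern can always be embedded as long as $|V(G)|$ is large enough. So we delete such vertices, checking in polynomial time that $|V(G)| \ge |V(P)|$, and output a trivial \no-instance if this fails. The resulting $P$ is a star forest in the sense of the paper.

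The instance of \ourprob{} is $(G_1, G_2, h)$ with $G_1 = G$, $G_2 = P$ and $h = |V(P)|$. By Observation~\ref{obs:G2-sub-G1}, since $G_2$ is a star forest on exactly $h$ vertices, this is a \yes-instance if and only if $G_2 \subseteq G_1$, that is, if and only if $P \subseteq G$. The construction is clearly polynomial, so \NP-hardness of \ourprob{} on these inputs follows.

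It remains to bound the treedepth of both graphs. $G_1$ is a forest without paths on $6 = 2\cdot 3$ nodes, so Lemma~\ref{lem:diam-d-td-d} with $\ell = 3$ gives $\td(G_1) \le 3$. $G_2$ is a star forest, so it has no path on $4$ nodes, and Lemma~\ref{lem:diam-d-td-d} with $\ell = 2$ gives $\td(G_2) \le 2 \le 3$. Alternatively, deleting the centre of each star leaves isolated vertices. Both graphs are forests, that is, collections of disjoint trees, of treedepth at most $3$.

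The only delicate point is the handling of isolated vertices in the pattern. The proposition's "collection of stars" may include $K_{1,0}$, and the argument above removes that case. The paper's table also claims pathwidth $2$ and treewidth $1$; these follow from $\pw \le \td - 1$ and from the graphs being forests.
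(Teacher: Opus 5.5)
Your proof is correct and follows the paper's argument: the same reduction from the restriction of \siso{} in Proposition~\ref{prop:siso-hard-td}, with the equivalence given by Observation~\ref{obs:G2-sub-G1} and the treedepth bound given by Lemma~\ref{lem:diam-d-td-d}. Your extra step, which strips isolated vertices from the pattern and checks $|V(G)| \ge |V(P)|$, is a valid refinement that the paper's proof leaves implicit.
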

\begin{proof}
    We reduce from the problem in Proposition~\ref{prop:siso-hard-td}. Consider the following instance of \siso: a forest $G_1$ without a path of length $6$, and a collection of stars $G_2$. $G_2$ appears as a subgraph in $G_1$ if and only if they have a common star forest on at least $|V(G_2)|$ vertices; i.e. the \ourprob{} instance $G_1$, $G_2$ with target $|V(G_2)|$ is a \yes-instance. Moreover, by Lemma~\ref{lem:diam-d-td-d}, $G_1$ and $G_2$ have treedepths at most $3$. This completes the proof.
\end{proof}

Since trees of treedepth $3$ have pathwidth at most $2$ and treewidth $1$, \ourprob{} stays para-\NPH even when the treewidth or pathwidth of both input graphs is the parameter. 

Recall that Theorem~\ref{thm:3-planar-nph} shows \PNPH{ness} when the maximum degree of both input graphs is the parameter. Moreover, in this section, Theorem~\ref{thm:td-paraNP} shows \PNPH{ness} with structural parameters of treedepth, pathwidth, and treewidth. This naturally motivates us to investigate the parameterized complexity of \ourprob{} when we consider \emph{both} the maximum degree and structural parameters (like treedepth, pathwidth, and treewidth) as our combined parameter. 

In fact, we find that when we consider the treedepth and the maximum degree of both input graphs together as parameters, \ourprob{} admits an \FPT algorithm. To prove this, we first prove Theorem~\ref{thm:FPT-conn-comp}, showing that \ourprob{} admits an \FPT algorithm when the size of the largest connected component among both input graphs is the parameter. Then we use this to show Corollary~\ref{cor:FPT-td-maxd}, stating that \ourprob{} admits an \FPT algorithm when parameterized by treedepth and maximum degree of both input graphs. The strategy is similar to the proof of Theorem~\ref{thm:FPT-vc}.


\begin{theorem}\label{thm:FPT-conn-comp}
    There exists an \FPT algorithm for {\ourprob}, parameterized by $k$, running in time $2^{2^{\OO(k^2)}} \cdot n^{\OO(1)}$, where $k$ is the maximum size of a connected component in either input graph.
\end{theorem}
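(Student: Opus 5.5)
Since every connected component of $G_1$ and $G_2$ has at most $k$ vertices, both graphs are disjoint unions of graphs drawn from a finite set of isomorphism types. We group the components of each $G_j$ by isomorphism type. The number of graphs on at most $k$ vertices is at most $2^{\OO(k^2)}$. Let $\mathcal{C}$ denote this set of possible component types, and let $n^{(j)}_C$ be the number of components of $G_j$ isomorphic to $C\in\mathcal{C}$. These counts can be computed by brute-force isomorphism testing in $k!\cdot n^{\OO(1)}$ time. The plan then follows the ILP strategy of Theorem~\ref{thm:FPT-vc}, with the number of variables bounded by a function of $k$, followed by Lenstra-type ILP solving.

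\textbf{Step 1 (component configurations).} A common star forest $H$ embedded in $G_1$ splits into pieces. Each piece lies inside one component $C$ of $G_1$ and is a star forest subgraph of $C$, so it has at most $k$ vertices and each of its stars has at most $k$ vertices. We encode such a piece by its \emph{profile}, the vector $\mathbf{s}\in\{0,\dots,k\}^{k-1}$ whose $\ell$-th entry counts the stars on $\ell+1$ vertices. For each type $C$ we compute, by brute force in time $2^{\OO(k^2)}$, the set $\mathcal{P}(C)$ of profiles realisable as star forest subgraphs of $C$. The empty profile is included, to allow components that are unused. We do the same for $G_2$. The key observation is that $H$ exists in both graphs with a given multiset of star sizes if and only if each graph's components can each be assigned a realisable profile so that the profile sums agree. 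This holds because stars are placed independently across components, and isomorphism of star forests is determined by the star-size multiset, as in Claim~\ref{clm:part-count}.

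\textbf{Step 2 (ILP).} For each $C\in\mathcal{C}$ and each $\mathbf{s}\in\mathcal{P}(C)$ we introduce integer variables $x_{C,\mathbf{s}}\ge 0$ for $G_1$ and $y_{C,\mathbf{s}}\ge 0$ for $G_2$, with the following constraints:
\[
\sum_{\mathbf{s}} x_{C,\mathbf{s}} = n^{(1)}_C,\qquad \sum_{\mathbf{s}} y_{C,\mathbf{s}} = n^{(2)}_C,\qquad \sum_{C,\mathbf{s}} x_{C,\mathbf{s}}\, s_\ell=\sum_{C,\mathbf{s}} y_{C,\mathbf{s}}\, s_\ell\ \ \forall \ell\in[k-1].
\]
The objective is to maximise $\sum_{C,\mathbf{s}} x_{C,\mathbf{s}}\sum_\ell (\ell+1)s_\ell$. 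Soundness and completeness follow from Step 1: a feasible solution yields per-component star packings in both graphs with identical star multisets, and conversely an optimal $H$ gives a feasible solution. The number of variables is $p\le 2\cdot|\mathcal{C}|\cdot (k+1)^{k}=2^{\OO(k^2)}$. Lenstra/Kannan then solves the ILP in $p^{\OO(p)}\cdot n^{\OO(1)}=2^{2^{\OO(k^2)}}\cdot n^{\OO(1)}$ time, which matches the claimed bound. We then compare the optimum with $h$.

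\textbf{Main obstacle.} The main difficulty is keeping the variable count a function of $k$ alone while still correctly matching the two graphs. This is why we aggregate by component \emph{type} and \emph{profile} instead of by individual components or stars. Individual stars could number up to $n$, but there are only $k-1$ possible star sizes, so the linear equalities in Step 2 capture isomorphism exactly. The remaining points to check are that profiles are bounded, which holds since each entry is at most $k$, and that the ILP coefficients are polynomial in $n$, which holds since the right-hand sides $n^{(j)}_C\le n$.
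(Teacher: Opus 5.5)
Your proposal is correct and follows essentially the same route as the paper. You group components by type, precompute which star-count profiles each type can realise, and solve an ILP with variables $x_{C,\mathbf{s}}, y_{C,\mathbf{s}}$ whose equality constraints force matching star counts in both graphs; with $2^{\OO(k^2)}$ variables, Lenstra-type solving gives $2^{2^{\OO(k^2)}}\cdot n^{\OO(1)}$ time. Your write-up is slightly tidier than the paper's, since grouping by isomorphism type over all graphs on at most $k$ vertices and creating variables only for realisable profiles avoids the paper's restriction to labelled graphs on exactly $k$ vertices and its index slips in the ILP.
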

\begin{proof}
    Let the input be graphs $G_1$, $G_2$ and we want to maximize the size of the common star forest subgraph. Let $n = |V(G_1)| + |V(G_2)|$. The \FPT algorithm consists of two phases: the \emph{preprocessing phase} and the \emph{ILP phase}. The preprocessing phase computes the necessary values needed for the ILP, which is then solved in the ILP phase.

    \subparagraph*{Preprocessing phase.} First we enumerate all distinct ordered, connected graphs of $k$ nodes. These are named $C_1$, $C_2$, ..., $C_{k_1}$. This can be done by simply enumerating all possible subsets of edges between $k$ vertices. Therefore $k_1 \le 2^{\binom{k}{2}}$, and this process can be done in time $\OO(k_1)$ time.

    Let $p_i$ to be the number of copies of $C_i$ appearing in $G_1$, and let $q_i$ be the number of copies of $C_i$ in $G_2$, for all $i \in [k_1]$. This can be done in $\OO(k_1 \cdot k \cdot n)$ time by any simple search algorithm like BFS or DFS. Notice that $(p_1, \ldots, p_{k_1}, q_1, \ldots, q_{k_1})$ completely determine the graphs $G_1$ and $G_2$.

    Next, we enumerate all the set $\{0, 1, \ldots , k-1\}^k$. Let $S_1, S_2, \ldots, S_{k_2}$, for $k_2 = k^k$ be the ordered tuples, where $S_\ell = (s_{\ell}^{(1)}, s_{\ell}^{(2)}, \ldots, s_{\ell}^{(k)})$, $s_{\ell}^{(j)} \in \{0, 1, \ldots, k-1\}$ for all $j \in [k], \ell \in [k_2]$. This enumeration takes time $\OO(k \cdot k_2)$.

    For $i \in [k_1]$, $\ell \in [k_2]$, we define a \emph{realisation of $S_\ell$ into $C_i$} as a subset of edges that induce a star forest in $C_i$ such that for every $j \in [k]$, there are exactly $s_{\ell}^{(j)}$ occurrences of stars of $j$ leaves (and 1 centre). Next we compute the boolean variables $\br_{i,\ell} \in \{\true, \false\}$ for every $i \in [k_1], \ell \in [k_2]$. $\br_{i,\ell}$ is $\true$ if there exists a realisation of $S_\ell$ into $C_i$; $\br_{i,\ell}$ is $\false$ otherwise. Since $|V(C_i)| \le k$, all stars must have at most $k$ leaves, and there cannot be more than $k$ stars. A simple algorithm to compute $\br_{i,\ell}$ can be described as follows. Start by initializing all $\br_{i,\ell}$ to \false, and then iterate over each $i \in [k_1]$: for each subset of edges in $C_i$, check if deleting them leaves $C_i$ with a collection of stars with exactly $s_{\ell}^{(j)}$ occurrences of stars of $j$ leaves, for $j\in[k]$. This entire procedure takes time $\OO(k_1 \cdot 2^{\binom{k}{2}} \cdot k) = \OO(k \cdot k_1^2)$. This concludes the preprocessing phase. 

    \subparagraph*{ILP phase.} To solve \ourprob{}, we need to remove a subset of edges from $G_1$ and $G_2$ such that the subgraph induced by the remaining edges, $H$, is a collection of stars and the number of nodes in $H$ is maximized. Now, fix a component $C$ of $G_1$. We will refer to $H$ as a \emph{solution}. Let $C$ be a copy of $C_i$ for some $i \in [k_1]$. Let the edges in $H \cap C$ induce a star forest where the star with $j \in [k]$ leaves appears $s^{(j)}$ times. There must be some $\ell \in [k_2]$ such that $S_\ell = (s_{\ell}^{(1)}, s_{\ell}^{(2)}, \ldots, s_{\ell}^{(k)}) = (s^{(1)}, s^{(2)}, \ldots, s^{(k)})$, i.e. the edges in $H \cap C$ is a realisation of $S_\ell$ into $C$. Moreover, such a realisation exists if and only if $\br_{i,\ell} = \true$.
    
    We now introduce the variables for the ILP. Among the $p_i$ occurrences of $C_i$ in $G_1$, we denote by the integer variable $x_{i,\ell}$, for $i \in [k_1], \ell \in [k_2]$, the number of components such that a solution $H$ induces a realisation of $S_\ell$ into this connected component. Similarly in $G_2$, let $y_{i,\ell}$, for $i \in [k_1], \ell \in [k_2]$, be the number of occurrences of $C_i$ such that a solution $H$ induces a realisation of $S_\ell$ into that component. These will be the only variables of the ILP that we would formulate.   

    Next, we formulate the constraints of the ILP which would ensure that the variables $x_{i,\ell}$ and $y_{i,\ell}$ correspond to some valid solution. Firstly, for all occurrences $C$ of $C_i$ in $G_1$, the edges of a solution $H$ must induce a realisation of some $S_\ell$, $\ell \in [k_2]$ into $C$. Therefore, for a fixed $i$, the variables $x_{i,\ell}$ must add up to exactly $p_i$; and similarly the variables $y_{i,\ell}$ must add up to exactly $q_i$.

    The next criteria we need to ensure is that to only allow realisations of $S_\ell$ into $C_i$ when $\br_{i,\ell} = \true$. Therefore, for all $i \in [k_1], \ell \in [k_2]$ with $\br_{i,\ell} = \false$, we enforce $x_{i,\ell} = y_{i,\ell} = 0$.

    Finally, for every $j \in [k]$, the edges of a solution $H$ must induce equal number of stars with $j$ leaves in $G_1$ and $G_2$. The number of such stars induced by the edges of $H$ in any occurrence of $C_i$ in $G_1$, with the edges of $H$ inducing a realisation of $S_\ell$ into this component, is $s_{\ell}^{(j)}$. Therefore, the total number of occurrences of such stars induced by the edges $H$ in $G$ is $\sum \limits_{i \in [k_1]} \sum \limits_{\ell \in [k_2]} s_{\ell}^{(j)} \cdot x_{i,j}$. Similarly, the total number of occurrences of such stars induced by the edges $H$ in $G$ is $\sum \limits_{i \in [k_1]} \sum \limits_{\ell \in [k_2]} s_{\ell}^{(j)} \cdot y_{i,j}$. These two quantities must be equal for each $j \in [k]$. This completes the requirements for the variables to correspond to a valid solution.
    
    We want to maximize the number of nodes in such a solution $H$. We argued that the total number of occurrences of stars with $j$ leaves in $H$ is $\sum \limits_{i \in [k_1]} \sum \limits_{\ell \in [k_2]} s_{\ell}^{(j)} \cdot x_{i,j}$. Each such star contains $(j+1)$ nodes. Therefore, the total number of nodes in $H$ is $\sum \limits_{j \in [k]} \sum \limits_{i \in [k_1]} \sum \limits_{\ell \in [k_2]} (j+1) \cdot s_{\ell}^{(j)} \cdot x_{i,j}$. This is the objective that we aim to maximize. 
    
    Therefore, we have an ILP with $k_3 = 2 \cdot k_1 \cdot k_2$ variables. This can be solved in time $k_3^{\OO(k_3)} \cdot n^{\OO(1)} = 2^{2^{\OO(k^2)}} \cdot n^{\OO(1)}$ time~\cite{cygan2015parameterized}. This dominates all other steps from the preprocessing phase. For completeness, we provide the formal ILP formulation below. 

    \begin{align*}
        \max \sum \limits_{j \in [k]} \sum \limits_{i \in [k_1]} \sum \limits_{\ell \in [k_2]}& (j+1) \cdot s_{\ell}^{(j)} \cdot x_{i,j} & \\
        \text{subject to:} & & \\
        \sum \limits_{\ell \in [k_2]} x_{i,\ell}&= p_i&\forall i \in [k_1]\\
        \sum \limits_{\ell \in [k_2]} y_{i,\ell}&= q_i&\forall i \in [k_1]\\
        x_{i,\ell}&= 0&\forall i \in [k_1], \ell \in [k_2], \text{ with } \br_{i,\ell} = \false\\
        y_{i,\ell}&= 0&\forall i \in [k_1], \ell \in [k_2], \text{ with } \br_{i,\ell} = \false\\
        \sum \limits_{i \in [k_1]} \sum \limits_{\ell \in [k_2]} s_{\ell}^{(j)} \cdot x_{i,j}&=\sum \limits_{i \in [k_1]} \sum \limits_{\ell \in [k_2]} s_{\ell}^{(j)} \cdot y_{i,j}& \forall j \in [k]\\
        x_{i,\ell},y_{i,\ell} &\in \mathbb{Z}_{\ge 0}& \forall i \in [k_1], \ell \in [k_2]
    \end{align*}    
\end{proof}
Now, we are ready to prove our result with respect the treedepth and maximum degree of both input graphs as parameters.


\begin{corollary}\label{cor:FPT-td-maxd}
    There exists an \FPT algorithm for {\ourprob}, parameterized by $d$ and $\Delta$, running in time $2^{2^{\OO(\Delta^{2d})}} \cdot n^{\OO(1)}$, where $d$ and $\Delta$ are upper bounds on the treedepth and maximum degree of both input graphs, respectively.
\end{corollary}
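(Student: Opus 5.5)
The plan is to reduce directly to Theorem~\ref{thm:FPT-conn-comp} using Lemma~\ref{lem:td-maxd-cc}. Let $(G_1, G_2, h)$ be an instance in which both $G_1$ and $G_2$ have treedepth at most $d$ and maximum degree at most $\Delta$. By Lemma~\ref{lem:td-maxd-cc}, every connected component of $G_1$ and of $G_2$ has at most $\Delta^d - 1$ vertices. So the maximum size $k$ of a connected component in either graph satisfies $k \le \Delta^d$.

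We then run the algorithm of Theorem~\ref{thm:FPT-conn-comp} on $(G_1, G_2)$ with this $k$. It computes the maximum size of a common star forest subgraph, and we answer \yes{} exactly when this maximum is at least $h$. No guarantee is lost here, because Theorem~\ref{thm:FPT-conn-comp} only needs an upper bound on component sizes. The preprocessing enumerates connected graphs on at most $k$ vertices; components with fewer than $k$ vertices are handled by enumerating all sizes up to $k$, which changes only constants.

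Substituting $k \le \Delta^d$ into the running time $2^{2^{\OO(k^2)}} \cdot n^{\OO(1)}$ gives $2^{2^{\OO(\Delta^{2d})}} \cdot n^{\OO(1)}$, as claimed.

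The only step needing care is the degenerate case $\Delta \le 1$. Here the bound $\Delta^d - 1$ may be $0$, while components can still have size up to $2$. This is harmless, since in that case $k \le 2$ is a constant and the problem is trivially solved. I therefore expect no real obstacle: the corollary is a direct composition of the two earlier results.
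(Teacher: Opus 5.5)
Your proposal is correct and matches the paper's argument: bound component sizes by $\Delta^d - 1$ via Lemma~\ref{lem:td-maxd-cc}, then apply Theorem~\ref{thm:FPT-conn-comp} with $k \le \Delta^d$, so that $2^{2^{\OO(k^2)}}$ becomes $2^{2^{\OO(\Delta^{2d})}}$. Your two extra precautions are justified. The lemma's induction only gives $|C| \le \Delta^d - \Delta + 1$, which is at most $\Delta^d - 1$ only when $\Delta \ge 2$, and a single edge already violates the bound for $\Delta = 1$. Also, enumerating connected graphs of every size up to $k$, rather than of exactly $k$ vertices, is the right way to read that theorem's preprocessing.
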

\begin{proof}
    By a known result (Lemma~\ref{lem:td-maxd-cc}), we know that the size of any connected component of a graph of treedepth at most $d$ and maximum degree $\Delta$ is at most $\Delta^d - 1$. Therefore, the \FPT algorithm from Theorem~\ref{thm:FPT-conn-comp} directly gives us an \FPT algorithm for parameters $\Delta$ and $d$ and runs in time $2 ^ {2 ^{\OO(\Delta^{2d})}} \cdot n^{\OO(1)}$.
\end{proof}

Corollary~\ref{cor:FPT-td-maxd} motivates us to try and design \FPT algorithms for \ourprob{} when parameterized by maximum degree and pathwidth of both input graphs. However, we prove that {\ourprob} is \WOH when parameterized by the maximum degree of both graphs, even when the pathwidth of both graphs is at most $4$.
 
We start by stating below the strongly \NPH problem of {\sc $k$-way Partition}.

\defproblem{$k$-way Partition}{A set of $n$ positive integers $a_1, \ldots, a_{n}$, integers $k$ and $C$, all given in unary.}{Is it possible to partition $[n]$ into $k$ subsets $B_1, \ldots, B_k$ such that for all $j \in [k]$, $\sum\limits_{i \in B_j} a_i = C$.}

The problem of {\sc $k$-way Partition} is not only \NPH but also \WOH when $k$ is the parameter~\cite{dreier2019complexity,heeger2023single,jansen2013bin}.

\begin{theorem}\label{thm:woh-deg-const-pw} 
    {\ourprob} is \WOH with respect to $\Delta$, when both graphs are collections of disjoint trees of pathwidth at most $4$ and the maximum degree of both graphs is at most $\Delta$.
\end{theorem}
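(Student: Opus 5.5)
We reduce from {\sc $k$-way Partition}, which is \WOH parameterized by $k$ even with all numbers in unary. The reduction builds, from an instance $(a_1,\ldots,a_n,k,C)$ with $\sum_i a_i = kC$, two forests $G_1,G_2$ of maximum degree $f(k)$ and pathwidth at most $4$, and sets $h=|V(G_1)|=|V(G_2)|$. Then, as in Observation~\ref{obs:G2-sub-G1} and Theorem~\ref{thm:td-paraNP}, a solution is a spanning star forest common to both graphs. Numbers cannot be encoded as star sizes, since that would make the degree depend on $C$. Instead we encode the choice of bin $j\in[k]$ by a star \emph{type}, and the value $a_i$ by a \emph{count} of stars.

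\textbf{Construction.} Fix $k$ distinct star sizes $\sigma_1<\cdots<\sigma_k$, all bounded by $\OO(k)$; for instance $\sigma_j=k+2+j$, so that each exceeds any auxiliary filler star. The two graphs are built as follows.
\begin{itemize}
\item $G_2$ is a star forest. It contains exactly $C$ copies of $K_{1,\sigma_j-1}$ for each $j\in[k]$, together with a prescribed multiset of small filler stars (copies of $K_2$ and $P_3$).
\item $G_1$ is a disjoint union of item gadgets $T_i$, one for each $i\in[n]$. Each $T_i$ is a caterpillar-like tree consisting of a spine path $u^i_1,\ldots,u^i_{a_i}$. Every spine vertex carries a bounded-size pendant structure, for example $k+2$ pendant paths of length $2$. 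This structure is chosen so that the intended partitions of $T_i$ are exactly the following: for a single $j$, each $u^i_t$ becomes the centre of a star of size $\sigma_j$, and the leftover pendant vertices split into filler stars whose number depends only on $a_i$ and not on $j$.
\end{itemize}
Caterpillars with pendant paths of length $2$ have pathwidth at most $4$ (a spine-ordered path decomposition with small bags suffices), and all degrees are $\OO(k)$. The filler multiset in $G_2$ is set to exactly the total filler count forced in $G_1$.

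\textbf{Correctness.} For the forward direction, given a partition $B_1,\ldots,B_k$, we decompose every $T_i$ with $i\in B_j$ using size-$\sigma_j$ stars. This uses exactly $\sum_{i\in B_j}a_i=C$ stars of size $\sigma_j$ for each $j$, plus the fixed fillers, so $G_2$ is a spanning common star forest.

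For the backward direction, take a spanning star forest of $G_1$ isomorphic to $G_2$. The first step is to show that every large star (size $\sigma_j$) must be centred at a spine vertex. This follows because pendant and filler vertices have degree at most $3<\sigma_1-1$. Since there are exactly $\sum_i a_i = kC$ large stars in $G_2$ and exactly $kC$ spine vertices, every spine vertex is a large centre.

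\textbf{Main obstacle.} The key and hardest step is a \emph{consistency lemma}: within one gadget $T_i$, all spine vertices must centre stars of the same size. The plan is to let the pendant structure at $u^i_t$ have slightly too many vertices for any single star. The excess is then absorbed by stars using spine edges $u^i_tu^i_{t+1}$, which forces the leaf counts of neighbouring centres to be related. If this local forcing does not work directly, the first alternative is to make the counting do the work: choose $\sigma_j$ growing geometrically, e.g.\ $\sigma_j\approx 2^j$, or add a second, distinct family of ``tag'' stars per item, so that any mixed assignment within a gadget overuses some filler type. Global vertex and filler counting then yields a contradiction.

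Once consistency holds, setting $B_j=\{i : T_i \text{ uses size } \sigma_j\}$ gives $\sum_{i\in B_j}a_i=C$ for every $j$, which is the required partition. Finally, $\Delta=\OO(k)$ and the construction is polynomial because the input is in unary, so this is a parameterized reduction and the problem is \WOH with respect to $\Delta$.
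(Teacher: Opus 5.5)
There is a genuine gap, and it is the step you yourself call the main obstacle. You never construct a gadget that forces all $a_i$ units of item $i$ into the same bin, and without one the reduction is incorrect.

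Suppose the spine vertices of $T_i$ could choose their star sizes independently, with fillers adjusting. $G_2$ only prescribes the \emph{total} number $C$ of stars of each size $\sigma_j$. So every assignment of the $kC$ spine vertices with exactly $C$ per size would give a spanning embedding, and the reduced instance would be a \yes-instance whenever $\sum_i a_i=kC$. You would be encoding the trivially solvable splittable version of partition.

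Your fallbacks do not repair this. With geometric sizes $\sigma_j\approx 2^j$, a mixed assignment and a consistent one still have identical per-size totals. Global vertex or filler counting can only detect mixing if the gadget itself produces different local pieces when adjacent spine vertices disagree, and that is exactly the local forcing you have not built.

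The concrete gadget you suggest also fails outright. In a spanning star forest, the end vertex $q$ of a pendant path $u\,p\,q$ must share a star with $p$. So $u$ can never take $p$ as a leaf, and $u$ can centre at most a $K_{1,2}$ using spine edges, far below the required $\sigma_j-1\ge k+2$ leaves.

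The missing idea, which the paper uses, is to make the bin choice once per item at a single selector vertex and then propagate it, rather than enforcing agreement between neighbouring spine vertices.
\begin{itemize}
\item Each $T_i$ has a root $r_i$ whose degree $D-1+k$ exceeds every other degree in $G_1$. It has $D-1$ pendant leaves and $k$ branch entries $s^{(0)}_{i,1},\ldots,s^{(0)}_{i,k}$.
\item The $n$ stars $K_{1,D}$ of $G_2$ can only be centred at roots. Each must take all $D-1$ pendant leaves plus exactly one entry $s^{(0)}_{i,b_i}$; this is the single choice of bin $b_i$.
\item The bin identity is hard-wired into the branch. In branch $j$, the first $a_i$ slots $u^{(\ell)}_{i,j}$ carry $2j+4$ pendant leaves and the remaining $a-a_i$ slots carry $E$.
\item A chain of $s$, $y$, $z$, $t$ vertices transmits the selection. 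In the selected branch the chain splits into copies of $P_3$ centred at the $z$ vertices. This frees every $t^{(\ell)}_{i,b_i}$ to join $u^{(\ell)}_{i,b_i}$, producing $K_{1,2b_i+5}$ stars, of which $G_2$ has exactly $C$ for each bin.
\item In unselected branches, all $t$ vertices are absorbed by $P_3$ and $K_{1,3}$ stars.
\end{itemize}
So $a_i$ is encoded by a count of slots, as you intended, but membership in bin $j$ is encoded by which branch receives the extra leaf. Consistency then holds automatically because each item has only one choice point. Until you supply such a mechanism, and show that the stars of $G_2$ are forced onto it, the backward direction of your reduction is unproven.
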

\begin{proof}
    We provide a parameterized reduction from the \WOH problem of {\sc $k$-way Partition} when the parameter is $k$. Let $a_1,\ldots,a_n,k$ and $C$ be the input to the {\sc $k$-way Partition} problem, where $a_i$ is a positive integer. Moreover, we assume without loss of generality that $a_1 \ge a_2 \ge \ldots \ge a_n$.
     \subparagraph*{Construction of the reduced instance.} Our reduction constructs an instance of {\ourprob} with graphs $G_1$ and $G_2$. $G_1$ would be a collection of $n$ trees, each of pathwidth at most $4$, and maximum degree $3k+7$; while $G_2$ would contain multiple stars of maximum degree $2k+8$. $G_1$ and $G_2$ would have the exact same number of nodes $n'$, and the target would be equal to the number $n'$.

     Let $M = \sum \limits_{i \in [n]} a_i$, $a = \max \limits_{i \in [n]} a_i$, $D = 2k+8$ and $E = 2k+6$. Note that $kC = M$ (otherwise it is a trivial \no-instance). Now we define the construction of $G_1$. For each $i \in [n]$, include the tree $T_i$ in $G_1$, defined as follows (Figure~\ref{fig:PW-T_i}).
     \begin{itemize}
         \item Construct a node $r_i$.
         \item Construct $D-1$ nodes $h_{i,1},h_{i,2},\ldots,h_{i,D-1}$, each of them adjacent to $r_i$.
         \item Construct $k$ nodes $s_{i,1}^{(0)},s_{i,2}^{(0)}, \ldots, s_{i,k}^{(0)}$, each of them adjacent to $r_i$.
         \item For every $j \in [k]$, construct the following.
         \begin{itemize}
             \item Construct $3(a-2)$ nodes $s_{i,j}^{(1)}, s_{i,j}^{(2)}, \ldots, s_{i,j}^{(a-2)}, y_{i,j}^{(1)}, y_{i,j}^{(2)}, \ldots, y_{i,j}^{(a-2)}, z_{i,j}^{(1)}, z_{i,j}^{(2)}, \ldots, z_{i,j}^{(a-2)}$.
             \item Construct $2a$ nodes $t_{i,j}^{(1)}, t_{i,j}^{(2)}, \ldots, t_{i,j}^{(a)}, u_{i,j}^{(1)}, u_{i,j}^{(2)}, \ldots, u_{i,j}^{(a)}$
             \item For every $\ell \in [a-2]$, construct the following.
             \begin{itemize}
                 \item Construct an edge between $s_{i,j}^{(\ell)}$ and $z_{i,j}^{(\ell)}$.
                 \item Construct an edge between $z_{i,j}^{(\ell)}$ and $y_{i,j}^{(\ell)}$.
                 \item Construct an edge between $y_{i,j}^{(\ell)}$ and $s_{i,j}^{(\ell-1)}$.
                 \item Construct an edge between $t_{i,j}^{(\ell)}$ and $s_{i,j}^{(\ell-1)}$.
             \end{itemize}
             \item Construct an edge between $t_{i,j}^{(a-1)}$ and $s_{i,j}^{(a-2)}$
             \item Construct an edge between $t_{i,j}^{(a)}$ and $s_{i,j}^{(a-2)}$.
             \item For every $\ell \in [a_i]$, construct the following.
             \begin{itemize}
                 \item Construct an edge between $u_{i,j}^{(\ell)}$ and $t_{i,j}^{(\ell)}$.
                 \item Construct $2j+4$ nodes $v_{i,j}^{(\ell,1)}, v_{i,j}^{(\ell,2)}, \ldots, v_{i,j}^{(\ell,2j+4)}$, each of them adjacent to $u_{i,j}^{(\ell)}$. 
             \end{itemize}
             \item For every $\ell \in [a]\setminus[a_i]$, construct the following.
             \begin{itemize}
                 \item Construct an edge between $u_{i,j}^{(\ell)}$ and $t_{i,j}^{(\ell)}$.
                 \item Construct $E$ nodes $v_{i,j}^{(\ell,1)}, v_{i,j}^{(\ell,2)}, \ldots, v_{i,j}^{(\ell,E)}$, each of them adjacent to $u_{i,j}^{(\ell)}$. 
             \end{itemize}
         \end{itemize}
     \end{itemize}
     $G_1$ is simply the collection of the trees $T_1, \ldots, T_n$. The highest degree nodes in $G_1$ are $r_i$ for $i \in [n]$, each having degree $D - 1 + k = 3k + 7$.

     \begin{claim}\label{clm:G_1-pw-5}
         Pathwidth of $G_1$ is at most $4$.
     \end{claim}
     \begin{claimproof}
        It suffices to show that $T_i$ has pathwidth at most $4$ for every $i \in [n]$. We fix some $i \in [n]$. Consider the graph $T_i - r_i$. It has the following connected components: 
        \begin{itemize}
            \item The singleton connected components $\{h_{i,1}\}, \{h_{i,2}\}, \ldots, \{h_{i,D-1}\}$.
            \item For $j \in [k]$, the connected components containing the vertices
            \begin{align*}
                J_{i,j} = &\left\{s_{i,j}^{(0)}\right\} \cup \left\{s_{i,j}^{(\ell)}, y_{i,j}^{(\ell)}, z_{i,j}^{(\ell)}\mid \ell \in [a-2] \right\} \cup \left\{t_{i,j}^{(\ell)}, u_{i,j}^{(\ell)}\mid \ell \in [a] \right\} \\ 
                &\cup \left\{v_{i,j}^{(\ell,1)}, \ldots, v_{i,j}^{(\ell,2j+4)} \mid \ell \in [a_i]\right\} \cup \left\{v_{i,j}^{(\ell,1)}, \ldots, v_{i,j}^{(\ell,E)} \mid \ell \in [a] \setminus [a_i]\right\}. 
            \end{align*}
        \end{itemize}
        Clearly $\{h_{i,1}\}, \{h_{i,2}\}, \ldots, \{h_{i,D-1}\}$ have pathwidth $0$ each. Now fix $j \in [k]$. We provide a path decomposition of $J_{i,j}$ of width $3$. We define the following `bags' each having $4$ nodes.
        \begin{itemize}
            \item For $\ell \in [a_i - 1]$, $\ell' \in [2j+4]$, define $\mathcal{X}_{i,j}^{(\ell, \ell')} = \left\{s_{i,j}^{(\ell - 1)}, t_{i,j}^{(\ell)}, u_{i,j}^{(\ell)}, v_{i,j}^{(\ell, \ell')}\right\}$.
            \item For $\ell = a_i$, $\ell' \in [2j+4]$: 
            \begin{itemize}
                \item if $a = a_i$, then define $\mathcal{X}_{i,j}^{(\ell, \ell')} =  \left\{s_{i,j}^{(\ell - 2)}, t_{i,j}^{(\ell)}, u_{i,j}^{(\ell)}, v_{i,j}^{(\ell, \ell')}\right\}$,
                \item otherwise, define $\mathcal{X}_{i,j}^{(\ell, \ell')} =  \left\{s_{i,j}^{(\ell - 1)}, t_{i,j}^{(\ell)}, u_{i,j}^{(\ell)}, v_{i,j}^{(\ell, \ell')}\right\}$.
            \end{itemize}
            \item For $\ell \in [a - 1] \setminus [a_i]$, $\ell' \in [E]$, define $\mathcal{Z}_{i,j}^{(\ell, \ell')} = \left\{s_{i,j}^{(\ell - 1)}, t_{i,j}^{(\ell)}, u_{i,j}^{(\ell)}, v_{i,j}^{(\ell, \ell')}\right\}$.
            \item For $\ell' \in [E]$, define $\mathcal{Z}_{i,j}^{(a, \ell')} = \left\{s_{i,j}^{(a-2)}, t_{i,j}^{(a)}, u_{i,j}^{(a)}, v_{i,j}^{(a, \ell')}\right\}$.
            \item For $\ell \in [a - 2]$, define $\mathcal{A}_{i,j}^{(\ell)} = \left\{s_{i,j}^{(\ell - 1)}, y_{i,j}^{(\ell)}, z_{i,j}^{(\ell)}, s_{i,j}^{(\ell)}\right\}$.
        \end{itemize}
        The sequence $\mathcal{X}_{i,j}^{(1, 1)}$, $\ldots$, $\mathcal{X}_{i,j}^{(1, 2j + 4)}$, $\mathcal{A}_{i,j}^{(1)}$, $\mathcal{X}_{i,j}^{(2, 1)}$, $\ldots$, $\mathcal{X}_{i,j}^{(2, 2j + 4)}$, $\mathcal{A}_{i,j}^{(2)}$, $\mathcal{X}_{i,j}^{(3, 1)}$, $\ldots$, $\mathcal{X}_{i,j}^{(a_i, 2j + 4)}$, $\mathcal{A}_{i,j}^{(a_i)}$, $\mathcal{Z}_{i,j}^{(a_i+1, 1)}$, $\ldots$, $\mathcal{Z}_{i,j}^{(a_i + 1,E)}$, $\mathcal{A}_{i,j}^{(a_i + 1)}$, $\mathcal{Z}_{i,j}^{(a_i+2, 1)}$, $\ldots$, $\mathcal{Z}_{i,j}^{(a - 2,E)}$, $\mathcal{A}_{i,j}^{(a - 2)}$, $\mathcal{Z}_{i,j}^{(a - 1, 1)}$, $\ldots$, $\mathcal{Z}_{i,j}^{(a - 1,E)}$, $\mathcal{Z}_{i,j}^{(a, 1)}$, $\ldots$, $\mathcal{Z}_{i,j}^{(a,E)}$ gives a path decomposition of $G_1[J_{i,j}]$ of width $3$. 

        Therefore $T_i - r_i$ has a path decomposition of width at most $3$, achieved by concatenating the path decompositions of each its connected components. Hence $T_i$ has path decomposition of width at most $4$, achieved by adding $r_i$ to all bags of the path decompositions of $T_i$.
     \end{claimproof}
    Moreover, the number of nodes $T_i$ contains is $1+(D-1)+k+3k(a-2)+2ka+k(a-a_i)E+a_i\cdot\sum\limits_{j\in[k]}(2j+4) = (2ak^2+11ak-3k+8)-(k^2+k)a_i$. Hence, $G_1$ contains a total of $n'=n(2ak^2+11ak-3k+8)-(k^2+k)M$ nodes, a polynomial in the input size of the instance.
      
    We now define the construction of $G_2$. This will contain the following star graphs (Figure~\ref{fig:PW-G_2}).
    \begin{figure} [!htbp]
        \begin{minipage}[t]{0.57\textwidth} 
            \includegraphics[height=0.92\textheight]{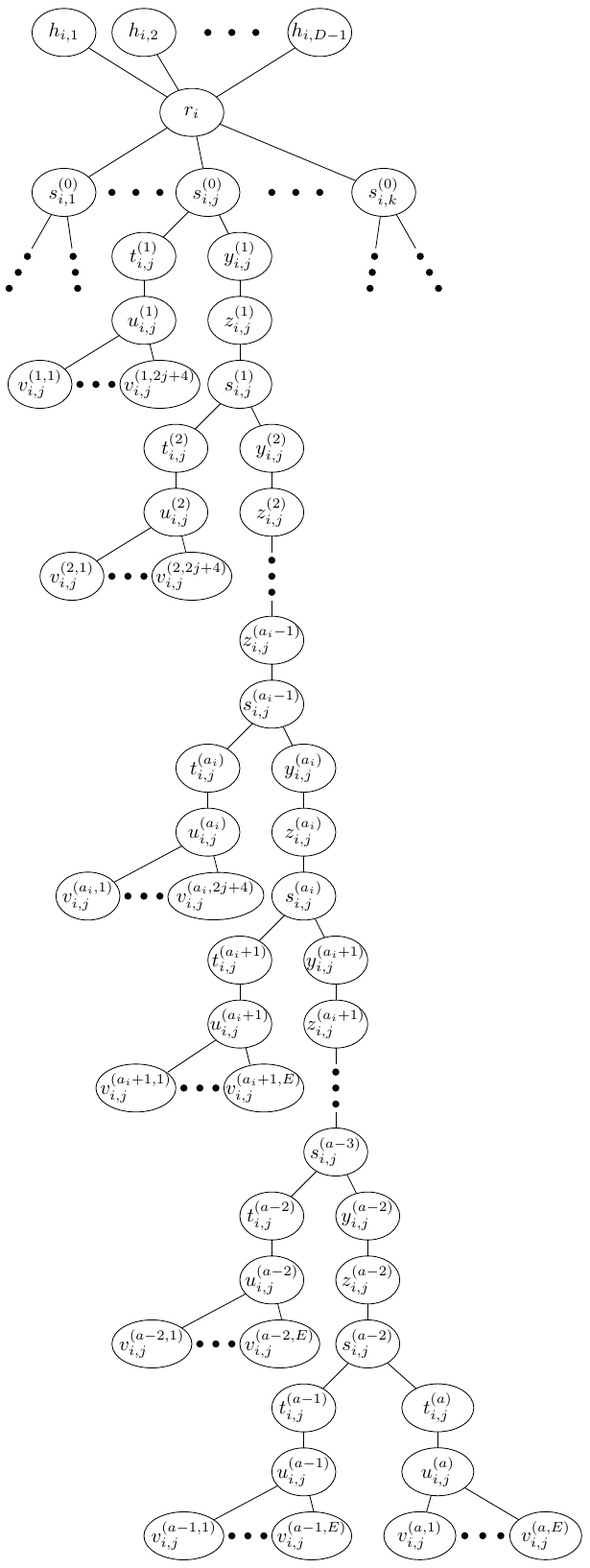}
            \subcaption{Construction of $T_i$.} \label{fig:PW-T_i}
        \end{minipage} \hfill \hfill 
        \begin{minipage}[t]{0.42\textwidth}
            \includegraphics[height=0.83\textheight]{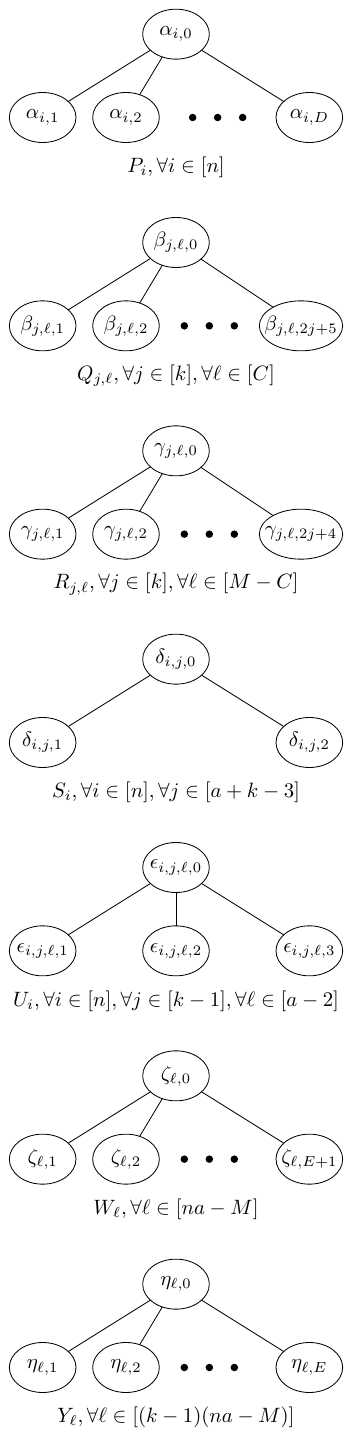}
            \subcaption{Construction of $G_2$.} \label{fig:PW-G_2}
        \end{minipage}
        \caption{Constructions of $T_i$ and $G_2$.}
    \end{figure}
 
    \begin{itemize}
        \item For every $i \in [n]$, construct a star with centre $\alpha_{i,0}$ and $D$ leaves $\alpha_{i,1}, \ldots, \alpha_{i,D}$, each connected to $\alpha_{i,0}$ with an edge; call this star $P_i$.
        \item For every $j \in [k]$ and every $\ell \in [C]$, construct a star with centre $\beta_{j,\ell,0}$ and $2j+5$ leaves $\beta_{j,\ell,1}, \ldots, \beta_{j,\ell,2j+5}$, each connected to $\beta_{j,\ell,0}$ with an edge; call this star $Q_{j,\ell}$.
        \item For every $j \in [k]$ and every $\ell \in [M-C]$, construct a star with centre $\gamma_{j,\ell,0}$ and $2j+4$ leaves $\gamma_{j,\ell,1}, \ldots, \gamma_{j,\ell,2j+4}$, each connected to $\gamma_{j,\ell,0}$ with an edge; call this star $R_{j,\ell}$.
        \item For every $i \in [n], j \in [a+k-3]$, construct a star with centre $\delta_{i,j,0}$ and $2$ leaves $\delta_{i,j,1}$ and $\delta_{i,j,2}$, each connected to $\delta_{i,j,0}$ with an edge; call this star $S_{i,j}$.
        \item For every $i \in [n], j \in [k-1]$ and every $\ell \in [a-2]$, construct a star with centre $\epsilon_{i,j,\ell,0}$ and $3$ leaves $\epsilon_{i,j,\ell,1}, \epsilon_{i,j,\ell,2}$ and $\epsilon_{i,j,\ell,3}$, each connected to $\epsilon_{i,j,\ell,0}$ with an edge; call this star $U_{i,j,\ell}$.
        \item For every $\ell \in [na-M]$, construct a star with centre $\zeta_{\ell,0}$ and $E+1$ leaves $\zeta_{\ell,1}, \zeta_{\ell,2}, \ldots,\zeta_{\ell,E+1}$, each connected to $\zeta_{\ell,0}$ with an edge; call this star $W_\ell$.
        \item For every $\ell \in [(k-1)(na-M)]$, construct a star with centre $\eta_{\ell,0}$ and $E$ leaves $\eta_{\ell,1}, \eta_{\ell,2}, \ldots, \eta_{\ell,E}$, each connected to $\eta_{\ell,0}$ with an edge; call this star $Y_\ell$.
    \end{itemize}
    The total number of nodes in $G_2$ is $n(D+1)+\sum\limits_{j\in[k]}C(2j+6) + \sum\limits_{j\in[k]}(M-C)(2j+5)+3n(a+k-3)+4n(k-1)(a-2)+(na-M)(E+2)+(k-1)(na-M)(E+1) = n(2ak^2+11ak-3k+8)-(k^2+k)M = n'$. Moreover as $G_2$ is a collection of stars, it has a pathwidth of $1$; and the highest degree nodes of $G_2$ are $\alpha_{i,0}$ for $i \in [n]$, each having a degree of $D = 2k + 8$. We set the target to be $n'$. This reduction can clearly be done in polynomial time. Since $G_2$ is already a star-forest of size $n'$, we use Observation~\ref{obs:G2-sub-G1} and decide if $G_2$ is a subgraph of $G_1$.
    
    \subparagraph*{Proof of correctness: forward direction.} Assume that the {\sc $k$-way Partition} instance is a \yes-instance. Therefore, there exists a partition $B_1, \ldots, B_k$ of $[n]$ such that for every $j \in [k]$, $\sum \limits_{i \in B_j} a_i = C$. We show that this would imply that $G_2$ is a subgraph of $G_1$; and hence by Observation~\ref{obs:G2-sub-G1}, the reduced instance would be a \yes-instance. 
    We define $b_i \in [k]$, such that $i \in B_{b_i}$. Let $p_i = \sum\limits_{i' \in B_{b_i}, i' < i} a_i$ be the sum of all items in bin $B_{b_i}$ with index less than $i$. For every $j \in [k] \setminus \{b_i\}$, let $q_{i,j} = \sum\limits_{i' \in [n] \setminus B_j, i' < i} a_i$ be the sum of all items outside bin $B_j$ with index less than $a_i$. Also let $e_i = \sum\limits_{i' < i}(a - a_i)$ and $f_{i,j} = (k-1)e_i + (a-a_i)\cdot|[j] \setminus\{b_i\}|$, for every $j \in [k] \setminus \{b_i\}$. The terms $e_i$ and $f_i$ would help us later in the reduction to keep track of the number of $W_\ell$’s and $Y_\ell$’s embedded into $G_1$.

    We provide an exact embedding of stars in $G_2$ into each of the trees $T_i$ of $G_1$ (depicted in Figure~\ref{fig:T_i-PW-in-G_2}). The star $P_i$ gets embedded into $T_i$ with $r_i$ as the centre and apart from $h_{i,1}, \ldots, h_{i,D-1}$, it includes $s_{i,b_i}^{(0)}$. For all $\ell \in [a-2]$, the nodes $y_{i,b_i}^{(\ell)}$, $z_{i,b_i}^{(\ell)}$, $s_{i,b_i}^{(\ell)}$ embed star $S_{i,\ell}$. For all $j \in [k] \setminus \{b_i\}$, the nodes $s_{i,j}^{(0)}$, $t_{i,j}^{(1)}$, and $y_{i,j}^{(1)}$ embed star $S_{i,j'}$, for every $j' \in [a+k-3] \setminus [a-2]$. For all $j \in [k] \setminus \{b_i\}$, for all $\ell \in [a-3]$, the nodes $s_{i,j}^{(\ell)}$, $z_{i,j}^{(\ell)}$, $t_{i,j}^{(\ell+1)}$ and $y_{i,j}^{(\ell+1)}$ embed star $U_{i,j',\ell}$, for every $j' \in [k-1]$. For all $j \in [k] \setminus \{b_i\}$, $s_{i,j}^{(a-2)}$, $z_{i,j}^{(a-2)}$, $t_{i,j}^{(a-1)}$ and $t_{i,j}^{(a)}$ embed star $U_{i,j',a-2}$, for every $j' \in [k-1]$. For every $\ell \in [a_i]$, $Q_{b_i,\ell + p_i}$ gets embedded into $T_i$ with $u_{i,b_i}^{(\ell)}$ as its centre and $v_{i,b_i}^{(\ell,1)}, v_{i,b_i}^{(\ell,2)}, \ldots, v_{i,b_i}^{(\ell,2b_i + 4)}$ and $t_{i,b_i}^{(\ell)}$ as leaves. Due to the definition of $p_i$, the lowest indexed item $i^*$ in $B_{b_i}$ with $i^* > i$ embeds $Q_{b_{i^*}, \ell + p_{i^*}} = Q_{b_i, \ell + (p_i + a_i)}$ for every $\ell \in [a_{i^*}]$. Hence, every $Q_{j',\ell'}$ gets embedded into exactly one $T_{i'}$. For all $j \in [k] \setminus \{b_i\}$ and for all $\ell \in [a_i]$, $R_{j,\ell + q_{i,j}}$ gets embedded into $T_i$ with $u_{i,j}^{(\ell)}$ as its centre and $v_{i,j}^{(\ell,1)}, v_{i,j}^{(\ell,2)}, \ldots, v_{i,j}^{(\ell,2j + 4)}$ as leaves. Due to the definition of $q_{i,j}$, the lowest indexed item $i^*$ in $[n] \setminus B_j$ with $i^* > i$ embeds $R_{b_{i^*}, \ell + q_{i^*,j}} = R_{b_i, \ell + (q_{i,j} + a_i)}$ for every $\ell \in [a_{i^*}]$. Hence, every $R_{j',\ell'}$ gets embedded into exactly one $T_{i'}$. For all $\ell \in [a-a_i]$, $W_{\ell+e_i}$ gets embedded to $T_i$ with $u_{i,b_i}^{(\ell+a_i)}$ as its centre and $v_{i,b_i}^{(\ell+a_i,1)}, v_{i,b_i}^{(\ell+a_i,2)}, \ldots, v_{i,b_i}^{(\ell+a_i,E)}$ and $t_{i,b_i}^{(\ell+a_i)}$ as leaves. Due to the definition of $e_i$, the lowest indexed item $i^*$ in $B_{b_i}$ with $i^* > i$ embeds $W_{\ell + e_{i^*}} = W_{\ell + (e_i + a - a_i)}$ for every $\ell \in [a_{i^*}]$. Hence, every $W_{\ell'}$ gets embedded into exactly one $T_{i'}$. Finally, for all $j \in [k] \setminus \{b_i\}$, and for all $\ell \in [a-a_i]$, $Y_{\ell+f_{i,j}}$ gets embedded to $T_i$ with $u_{i,j}^{(\ell+a_i)}$ as its centre and $v_{i,j}^{(\ell+a_i,1)}, v_{i,j}^{(\ell+a_i,2)}, \ldots, v_{i,j}^{(\ell+a_i,E)}$ as leaves. Due to the definition of $f_{i,j}$, the lowest indexed item $i^*$ in $[n] \setminus B_j$ with $i^* > i$ embeds $Y_{\ell + f_{i^*,j}} = Y_{\ell + (f_{i,j} + (k-1)(a-a_i))}$ for every $\ell \in [a_{i^*}]$. Hence, every $Y_{\ell'}$ gets embedded into exactly one $T_{i'}$. This completes the proof of correctness in the forward direction. 
    \begin{figure}[!htbp]
        \includegraphics[width=\textwidth]{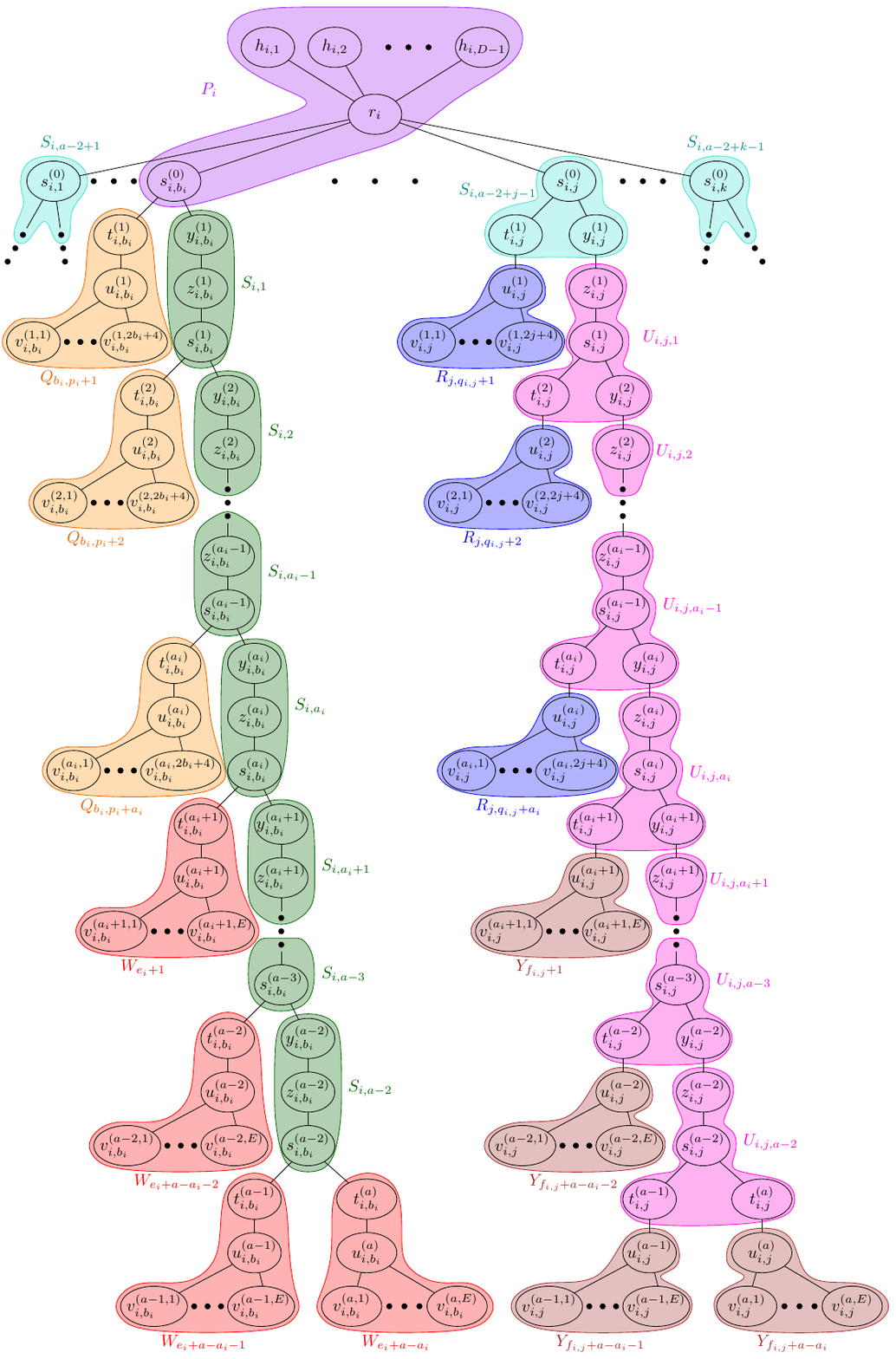}
        \caption{Embedding of stars of $G_2$ into $T_i$ for branch $b_i$ and branches $j \ne b_i$.} \label{fig:T_i-PW-in-G_2}
    \end{figure} 
    \subparagraph*{Proof of correctness: backward direction.} Assume that the reduced instance is a \yes-instance, i.e. $G_2$ is a subgraph of $G_1$ (by Observation~\ref{obs:G2-sub-G1}). We show that the {\sc $k$-way Partition} instance is also a \yes-instance by proving a series of claims. Intuitively, these claims together prove that there exists an embedding identical to what we had in the forward direction, i.e., like Figure~\ref{fig:T_i-PW-in-G_2}.

    \begin{claim}\label{clm:pw-P-good}
        There exists an embedding of $G_2$ in $G_1$ such that for every $i \in [n]$, $P_i$ gets embedded into $T_i$ with $r_i$ as the centre and apart from $h_{i,1}, \ldots, h_{i,D-1}$, it includes $s_{i,b_i}^{(0)}$ for some $b_i \in [k]$. We call such an embedding P-consistent.
    \end{claim}
    
    \begin{claimproof}
        As $G_2$ is a subgraph of $G_1$, $\forall i \in [n], P_i$ has an embedding in $G_1$. Let $\exists i^* \in [n]$ such that the embedding of $\alpha_{i^*,0}$ in $G_1$ is not on $r_i$, for any $i \in [n]$. In that case, let $w^*$ be the embedding of $\alpha_{i^*,0}$ in $G_1$.  Notice that all vertices in $G_1$ other than $r_i$'s have degree at most $a_1 + 1 < D$. Therefore, $\degree(w^*) < D = \degree(\alpha_{i^*,0})$. So, $\exists j \in [D]$ such that embedding of $\alpha_{i^*,j}$ is not a neighbour of $w^*$. But, then the edge $\{\alpha_{i^*,0}, \alpha_{i^*,j}\}$ is not present in the embedding of $P_i^*$ in $G_1$, contradicting the definition of the embedding.
        
        This proves that $\forall i \in [n] \ \exists j \in[n]$, embedding of $\alpha_{i,0}$ in $G_1$ is $r_j$. Without loss of generality, we can assume that $\alpha_{i,0}$ is embedded in $r_i$. Now, let us assume that $\exists i^* \in [n], j \in [D-1]$ such that $h_{i^*,j}$ is not included in the embedding of $P_{i^*}$. As $|V(G_2)| = |V(G_1)|, \ \exists w^* \in V(G_2 \setminus P_{i^*})$ such that $w^*$ is embedded to $h_{i^*,j}$. As no vertex in $G_2 \setminus P_{i^*}$ is isolated, $\exists x^* \in V(G_2 \setminus P_{i^*})$ such that $\{x^*, w^*\}$ is an edge in $G_2$. Note that $x^*$ is not embedded to $r_i$, hence the edge $\{x^*, w^*\}$ is not present in the embedding of $G_2$ in $G_1$, a contradiction.
        
        This proves that there exists an embedding where $\forall i \in [n]$, $P_i$ gets embedded into $T_i$ with $r_i$ as the centre and it includes $h_{i,1}, \ldots, h_{i,D-1}$. As $\degree(\alpha_{i,0}) = D$, embedding of $P_i$ into $T_i$ includes $D$ neighbours of $r_i$. $D-1$ of these neighbours are $h_{i,1}, \ldots, h_{i,D-1}$. Hence, exactly one of the remaining neighbours are included, among $s_{i,1}^{(0)}, \ldots s_{i,k}^{(0)}$; we denote it by $s_{i,b_i}^{(0)}$.
    \end{claimproof}
    \begin{claim}\label{clm:pw-PS*-good}
        There exists a P-consistent embedding of $G_2$ in $G_1$ such that for all $\ell \in [a-2]$, the nodes $z_{i,b_i}^{(\ell)}$, $y_{i,b_i}^{(\ell)}$ and $s_{i,b_i}^{(\ell)}$ embeds star $S_{i,\ell}$. We call such an embedding PS*-consistent. 
    \end{claim}
    \begin{claimproof}
        For a P-consistent embedding $\phi$, let $\ell \in [a-2]$ be smallest index such that, $\exists i \in [n]$ such that $S_{i,\ell}$ is not embedded by the nodes $z_{i,b_i}^{(\ell)}$, $y_{i,b_i}^{(\ell)}$ and $s_{i,b_i}^{(\ell)}$. We call $\ell$ the first violator of $\phi$, denoted $\mathrm{viol}(\phi)$. If there are no PS*-consistent embedding, then every P-consistent embedding has a first violator. Consider an embedding $\phi^* \in \underset{\phi \text{ is P-consistent}}{\argmax}\left\{\mathrm{viol}(\phi)\right\}$ be any embedding with the highest index of the first violator. Let $\ell^* = \mathrm{viol}(\phi^*)$. As $\phi^*$ is a P-consistent embedding, $r_i$ embeds $\alpha_{i,0}$ $\forall i \in [n]$ (due to Claim~\ref{clm:pw-P-good}). Also by assumption, for all $\ell \in [\ell^*-1]$, for all $i \in [n]$, the nodes $z_{i,b_i}^{(\ell)}$, $y_{i,b_i}^{(\ell)}$ and $s_{i,b_i}^{(\ell)}$ embed star $S_{i,\ell}$. Now, as $G_2$ is a subgraph of $G_1$, the vertex $y_{i,b_i}^{(\ell^*)}$ must embed some vertex $w$ from $G_2$. The only neighbour of $y_{i,b_i}^{(\ell^*)}$ in $G_1[V(G_1) \setminus (\{r_i \mid i \in [n]\} \cup \{s_{i,b_i}^{(\ell)} \mid i \in [n], \ell \in [\ell^*]\})]$ is $z_{i,b_i}^{(\ell)}$. As all stars in $G_2$ have centres with degree $\ge 2$, $y_{i,b_i}^{(\ell^*)}$ cannot embed a centre. Hence, $z_{i,b_i}^{(\ell^*)}$ embeds the centre of a star in $G_2$, which has $w$ as its leaf. Now, $\degree(z_{i,b_i}^{(\ell^*)})=2$, which is the smallest degree of any centres among the stars in $G_2$. Hence, $z_{i,b_i}^{(\ell^*)}$ embeds $\delta_{i',\ell',0}$, from some $i'\in[n], \ell'\in[a+k-3]$. As all $S_{i',l'}$ are isomorphic, we assume without loss of generality that $z_{i,b_i}^{(\ell^*)}$ embeds $\delta_{i,\ell^*,0}$. This implies $\mathrm{viol}(\phi^*) > \ell^*$, which is a contradiction. 
    \end{claimproof}
    \begin{claim}\label{clm:pw-PS-good}
        There exists a PS*-consistent embedding of $G_2$ in $G_1$ such that for all $j \in [k] \setminus \{b_i\}$, the nodes $s_{i,j}^{(0)}$, $t_{i,j}^{(1)}$, $y_{i,j}^{(1)}$ embeds star $S_{i,j'}$ for some $j \in [a+k-3] \setminus [a-2]$. We call such an embedding PS-consistent. 
    \end{claim}
    \begin{claimproof}
        Consider the graph $G_1[V(G_1) \setminus (\{r_i \mid i \in [n]\})]$. Here, $\forall j \in [k] \setminus \{b_i\}, \degree(s_{i,j}^{(0)}) = 2$, and the neighbours of $s_{i,j}^{(0)}$ are $t_{i,j}^{(1)}$ and $y_{i,j}^{(1)}$, both of which have degree $2$ each. Now, as $G_2$ is a subgraph of $G_1$, $s_{i,j}^{(0)}$ needs to embed a vertex $w$ from $G_2$. Now, let $x$ be the centre of the star in $G_2$ containing $w$. $x$ is embedded by a vertex among $s_{i,j}^{(0)}$, $t_{i,j}^{(1)}$ and $y_{i,j}^{(1)}$. If $\degree(x) \ge 3$, then $\exists w^* \in V(G_2)$ such that $\{w^*,x\}$ is an edge in $G_2$, which is not embedded in $G_1$, a contradiction. Hence, $\degree(x) = 2$, and $x=\delta_{i',\ell'}$ for some $i' \in [n], \ell' \in [a+k-3] \setminus [a-2]$. This proves that $s_{i,j}^{(0)}$ embeds a node from some $S_{i',j'}$ in $G_2$.
        
        Note that there are $n(k-1)$ such nodes $s_{i,j}^{(0)}$, which exhaust the $n(k-1)$ stars $S_{i',\ell'}$. As all such $S_{i',\ell'}$ are isomorphic, we can say without loss of generality that $s_{i,j}^{(0)}$ embeds a node from $S_{i,\ell'}$. Note that in $G_1[V(G_1) \setminus (\{r_i \mid i \in [n]\})]$, $\degree(y_{i,j}^{(1)})=2$, and both its neighbours have degree $2$ as well. Following the same argument as the previous paragraph, we see that $y_{i,j}^{(1)}$ embeds some vertex from $S_{i',\ell'}$ as well. Now, as all $s_{i,j}^{(0)}$ exhaust the $S_{i',\ell'}$, $s_{i,j}^{(0)}$ and $y_{i,j}^{(1)}$ embed nodes from the same star in $G_2$. This implies that $\delta_{i,\ell',0}$ is embedded in $s_{i,j}^{(0)}$ or $y_{i,j}^{(1)}$. If it is embedded in $s_{i,j}^{(0)}$, $y_{i,j}^{(1)}$ and $t_{i,j}^{(1)}$ embed the leaves of $S_{i,\ell'}$. If it is embedded in $y_{i,j}^{(1)}$, $s_{i,j}^{(0)}$ and $z_{i,j}^{(1)}$ embed the leaves of $S_{i,\ell'}$.
        
        For a PS*-consistent embedding $\phi$, let $i \in [n]$ be the smallest index such that $\exists j \in [k] \setminus \{b_i\}$, $S_{i,\ell'}$ is embedded by the nodes $y_{i,j}^{(1)}$, $s_{i,j}^{(0)}$ and $z_{i,j}^{(1)}$. We call $i$ the first violator of $\phi$, denoted $\mathrm{viol}(\phi)$. If there is no PS-consistent embedding, then every PS*-consistent embedding has a first violator. Consider the embedding $\phi^* \in \underset{\phi \text{ is PS*-consistent}}{\argmax}\left\{\mathrm{viol}(\phi)\right\}$ be any embedding with the highest index of the first violator. Let $i^* = \mathrm{viol}(\phi^*)$. By assumption, for all $i \in [i^*-1]$, $\forall j \in [k] \setminus \{b_i\}$ the nodes $s_{i,j}^{(0)}$, $y_{i,j}^{(1)}$ and $t_{i,j}^{(1)}$ embed $S_{i,\ell'}$ for some $\ell' \in [a+k-3] \setminus [a-2]$. Now, let $y_{i^*,j}^{(1)}$, $s_{i^*,j}^{(0)}$ and $z_{i^*,j}^{(1)}$ embed the star $S_{i^*,\ell}$. Now, consider the node $s_{i^*,j}^{(1)}$. As all $S_{i,\ell'}$ are exhausted by $s_{i,j}^{(0)}$, the node $s_{i^*,j}^{(1)}$ embeds a node $w$ from a star in $G_2$ which has at least $3$ leaves. Let the centre of that star be $x$. But, only $s_{i^*,j}^{(1)}$ has only $2$ neighbours which have not already been embedded, and both of them have degree $2$. Hence, if any of these embed $x$, $\exists w^* \in V(G_2)$ such that $\{w^*,x\}$ is an edge in $G_2$ which is not embedded in $G_1$, which is a contradiction. Hence, $s_{i^*,j}^{(1)}$ cannot embed any vertex. But this contradicts the fact that $G_2$ is a subgraph of $G_1$, and $|V(G_2)| = |V(G_1)|$.
    \end{claimproof}
    \begin{claim}\label{clm:pw-PSU-good}
        There exists a PS-consistent embedding of $G_2$ in $G_1$ such that for all $j \in [k] \setminus \{b_i\}$, for all $\ell \in [a-3]$, the nodes $s_{i,j}^{(\ell)}$, $z_{i,j}^{(\ell)}$, $t_{i,j}^{(\ell+1)}$, $y_{i,j}^{(\ell+1)}$ embed star $U_{i,j,\ell}$, and $j \in [k] \setminus \{b_i\}$, the nodes $s_{i,j}^{(a-2)}$, $z_{i,j}^{(a-2)}$, $t_{i,j}^{(a-1)}$, $t_{i,j}^{(a)}$ embed star $U_{i,j,a-2}$. We call such an embedding PSU-consistent. 
    \end{claim}
    \begin{claimproof}
        Let a PS-consistent embedding be called PSU$^{(d)}$-consistent if $\forall j \in [k] \setminus \{b_i\}, \forall \ell \in [d]$, the nodes $s_{i,j}^{(\ell)}$, $z_{i,j}^{(\ell)}$, $t_{i,j}^{(\ell+1)}$, $y_{i,j}^{(\ell+1)}$ embed star $U_{i,j,\ell}$. We prove the claim via induction on $d$.
        
        \textbf{Base Case:} All PS-consistent embeddings are PSU$^{(0)}$-consistent embeddings by default.

        \textbf{Inductive Step:} Consider a PSU$^{(\ell)}$-consistent embedding. The vertex $y_{i,j}^{(\ell)}$ embeds a leaf of star $U_{i,j,\ell}$. Now, consider the vertex $z_{i,j}^{(\ell)}$. If this vertex embeds a centre of some star in $G_2$, it needs at least $3$ neighbours where the leaves of that star will be embedded. But, $\degree(z_{i,j}^{(\ell)}) = 2$. Hence, $z_{i,j}^{(\ell)}$ embeds a leaf of some star in $G_2$. The centre of this star is embedded in $s_{i,j}^{(\ell)}$. Now, $\degree(s_{i,j}^{(\ell)}) = 3$, hence $s_{i,j}^{(\ell)}$ can only embed $\epsilon_{i',j',\ell',0}$ for some $i', j', \ell'$. As all $U_{i',j',\ell'}$ are isomorphic, we can say without loss of generality that $s_{i,j}^{(\ell)}$ embeds $\epsilon_{i,j,\ell,0}$. Now, $s_{i,j}^{(0)}$ has $3$ neighbours, which embed the leaves $\epsilon_{i,j,\ell,1}$, $\epsilon_{i,j,\ell,2}$, $\epsilon_{i,j,\ell,3}$, which embed $z_{i,j}^{(\ell)}$, $t_{i,j}^{(\ell+1)}$, $y_{i,j}^{(\ell+1)}$. Following this argument for all $i$ and $j$, we find a PSU$^{(l+1)}$-consistent embedding.

        This proves the existence of PSU$^{(a-3)}$-consistent embeddings. Now, consider the vertex $z_{i,j}^{(a-2)}$. If this vertex embeds a centre of some star in $G_2$, it needs at least $3$ neighbours where the leaves of that star will be embedded. But, $\degree(z_{i,j}^{(a-2)}) = 2$. Hence, $z_{i,j}^{(a-2)}$ embeds a leaf of some star in $G_2$. The centre of this star is embedded in $s_{i,j}^{(a-2)}$. Now, $\degree(s_{i,j}^{(a-2)}) = 3$, hence $s_{i,j}^{(a-2)}$ can only embed $\epsilon_{i',j',\ell',0}$ for some $i', j', \ell'$. As all $U_{i',j',\ell'}$ are isomorphic, we can say without loss of generality that $s_{i,j}^{(a-2)}$ embeds $\epsilon_{i,j,a-2,0}$. Now, $s_{i,j}^{(0)}$ has $3$ neighbours, which embed the leaves $\epsilon_{i,j,a-2,1}$, $\epsilon_{i,j,a-2,2}$, $\epsilon_{i,j,a-2,3}$, which embed $z_{i,j}^{(a-2)}$, $t_{i,j}^{(a-1)}$, $t_{i,j}^{(a)}$. Following this argument for all $i$ and $j$, we find a PSU-consistent embedding.
    \end{claimproof}
    \begin{claim}\label{clm:pw-PQSU-good}
        There exists a PSU-consistent embedding of $G_2$ in $G_1$ such that for all $\ell \in [a_i]$, $Q_{b_i,\ell+p_i}$ gets embedded into $T_i$ with $u_{i,b_i}^{(\ell)}$ as its centre and $v_{i,b_i}^{(\ell,1)}, v_{i,b_i}^{(\ell,2)}, \ldots, v_{i,b_i}^{(\ell,2b_i + 4)}$ and $t_{i,b_i}^{(\ell)}$ as leaves, where $p_i = \sum\limits_{i' \in [n], i' < i, b_i = b_{i'}} a_i$. We call such an embedding PQSU-consistent.
    \end{claim}
    \begin{claimproof}
        For any PSU-consistent embedding, all vertices in $X = \{h_{i,j} \mid i \in [n], j \in [D-1]\} \cup \{r_i \mid i\in[n]\} \cup \{s_{i,j}^{(\ell)} \mid i \in [n], j \in [k] \setminus \{b_i\}, \ell \in [a-2] \cup \{0\}\} \cup \{y_{i,j}^{(\ell)} \mid i \in [n], j \in [k] \setminus \{b_i\}, \ell \in [a-2] \} \cup \{z_{i,j}^{(\ell)} \mid i \in [n], j \in [k] \setminus \{b_i\}, \ell \in [a-2] \} \cup \{t_{i,j}^{(\ell)} \mid i \in [n], j \in [k] \setminus\{b_i\},\ell \in [a_i]\}$ have embedded all $P_i$, $S_{i,j}$, and $U_{i,j,\ell}$. Consider the remainder graph $G_1^R = G_1[V(G_1) \setminus X]$. This graph is a star forest, where all stars are centred at some $u_{i,j}^{(\ell)}$ . Now, as $G_2$ is a subgraph of $G_1$, any star with $q$ leaves in $\{Q_{j,\ell}\} \cup \{R_{j,\ell}\} \cup \{W_\ell\} \cup \{Y_\ell\}$ must be embedded to a star in $G_1^R$ with $q$ leaves. The star $Q_{b_i,\ell+p_i}$ has $2b_i+5$ leaves, so it must be embedded to a star in $G_1^R$ with $2b_i+5$ leaves. But, all stars in $G_1^R$ with $2b_i+5$ leaves are centred in $u_{i',b_i}^{(\ell)}$ with $v_{i',b_i}^{(\ell,1)}, v_{i',b_i}^{(\ell,2)}, \ldots, v_{i',b_i}^{(\ell,2b_i + 4)}$ and $t_{i',b_i}^{(\ell)}$ for some $\ell \in [a_i]$, where $b_{i'} = b_i$. As all of these graphs are isomorphic, we can consider without loss of generality that $Q_{b_i,\ell+p_i}$ is embedded in $G_1$ with $\beta_{b_i,\ell+p_i,0}$ being embedded to $u_{i,b_i}^{(\ell)}$. 
    \end{claimproof}
    \begin{claim}
        For any PQSU-consistent embedding of $G_2$ in $G_1$, for every $j \in [k]$, let $B_j = \{i \in [n] \mid b_i=j\}$. Then for all $j \in [k]$, $\sum\limits_{i \in B_j}a_i = C$.
    \end{claim}
    \begin{claimproof}
        Consider a PQSU-consistent embedding of $G_2$ in $G_1$. Let $Q_{j,\ell}$ be embedded into some $T_i$. This implied $b_i=j$, as $T_i$ only embeds $Q_{b_i,\ell+p_i}$ for $\ell \in [a_i]$. Therefore, 
        \[\bigcup \limits_{i \in [n], b_i = j} \left\{ Q_{b_i, \ell + p_i} \mid \ell \in [a_i] \right\} = \left\{Q_{j,\ell} \mid j = b_i, \ell \in [C]\right\}.\]
        Notice that the left hand side is a union of pairwise disjoint sets (due to Claim~\ref{clm:pw-PQSU-good}). Therefore, equating their sizes, we get,
        \[\sum \limits_{i \in [n], b_i = j} a_i = C.\]
        This proves the claim.
    \end{claimproof}
    This proves that the {\sc $k$-way Partition} instance is a \yes-instance. Thus, we prove W[1]-hardness of {\ourprob} with respect to $\Delta$ even when the pathwidth of both input graphs is a constant.
\end{proof}

Theorem~\ref{thm:woh-deg-const-pw} also proves the \WOH{ness} of \ourprob{} parameterized by the maximum degree of both graphs, even when the graphs have treewidth $1$. In contrast, if we consider the treewidth of both graphs to be parameters, but the maximum degree to be bounded by a constant, then we design an \FPT algorithm for \ourprob{}.

%% file: treewidth.tex
Our approach is based on the observation that any common subgraph without isolated vertices can be decomposed into a \emph{star forest}, where each star has a centre vertex and possibly several leaves. By first designing a dynamic programming algorithm on nice tree decomposition for finding a maximum-cardinality star forest in a single graph in Lemma~\ref{lem:treewidth-algo-part-1}, we  extend it in Theorem~\ref{thm:tw-xp-algo} to compute a maximum common subgraph of two graphs. 
At a high level, each DP-table entry at a bag encodes a partial star forest formed in the subgraph induced by the vertices processed in the corresponding subtree of the decomposition. For each vertex in the bag, we record its role: unused, a leaf (together with its centre), or a centre (with its current number of attached leaves), thereby describing how it participates in the partial solution. In addition, we maintain a star-count vector that stores how many stars of each size have already been fully formed inside the subtree. 
\begin{lemma}\label{lem:treewidth-algo-part-1}
    There exists an \FPT algorithm parameterized by $\tw$, which, given a graph $G$ of $n$ vertices of treewidth at most $\tw$ and constant maximum degree $\Delta$, enumerates all star forests $H$ (unique up to isomorphism) appearing as a subgraph in $G$ with the maximum degree of at most $\Delta$, running in time $\OO\left((\Delta + \tw)^{2(\tw+1)} \cdot \tw \cdot \Delta \cdot n^{\Delta+1} \cdot \log n \right)$.
\end{lemma}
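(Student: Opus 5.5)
We run a dynamic program over a nice tree decomposition of $G$. A star forest with maximum degree at most $\Delta$ is determined up to isomorphism by a vector $\mathbf{c} = (c_1, \ldots, c_\Delta) \in \{0,1,\ldots,n\}^\Delta$, where $c_j$ counts the stars with $j$ leaves. So the goal is to compute the set $\mathcal{F}(G)$ of all such vectors realisable as subgraphs of $G$. There are at most $(n+1)^\Delta$ vectors, which explains the $n^{\Delta}$ factor in the bound.

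First we compute a tree decomposition of width $\OO(\tw)$ in FPT time (e.g.\ Bodlaender's algorithm, or a constant-factor approximation, adjusting constants). We convert it into a nice tree decomposition with $\OO(\tw \cdot n)$ nodes.

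For a node $t$, let $V_t$ be the set of vertices introduced in the subtree of $t$. A \emph{state} at $t$ assigns each $v \in X_t$ a role:
\begin{itemize}
\item unused;
\item a leaf, together with a flag recording whether its centre is already fixed inside $G[V_t]$, or a pointer to a centre in the bag;
\item a centre, with its current number of attached leaves in $\{0,\ldots,\Delta\}$.
\end{itemize}
A simpler encoding achieving the stated count is: each bag vertex is unused, a centre with $d \le \Delta$ leaves so far, or a leaf attached to one of the $\le \tw+1$ bag vertices or to an already-forgotten centre. This gives $(\Delta+\tw+\OO(1))^{\tw+1}$ states. For each state $\sigma$ we store the set $D_t[\sigma] \subseteq \{0,\ldots,n\}^\Delta$ of count vectors of stars already completed. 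A star counts as completed once its centre has been forgotten.

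The transitions go as follows.
\begin{itemize}
\item \textbf{Leaf node.} The empty state maps to $\{\mathbf{0}\}$.
\item \textbf{Introduce node} for $v$. We choose $v$'s role. If $v$ is a leaf of a bag centre $u$, we require $uv \in E(G)$ and increment $u$'s counter, capped at $\Delta$. If $v$ is a centre, we may attach existing bag vertices marked as ``leaf with pending centre $v$'' via edges. Because leaf vertices must pick their centre while both are in the bag, every edge of $G$ is seen in some bag.
\item \textbf{Forget node} for $v$. A leaf may be forgotten only if its centre has been assigned. A centre may be forgotten only when all its bag-pending leaves are resolved and its count $d \ge 1$; then we add the unit vector $e_d$ to every stored vector. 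This is a shift, costing $\OO(\Delta)$ per vector.
\item \textbf{Join node.} We combine compatible child states by adding centre counters, with the constraint that the leaf sets from the two sides are disjoint. Each leaf's attachment is fixed in the state, and forgotten leaves are disjoint across the subtrees. The vector sets are combined by a sumset $D_{t_1}[\sigma_1] \oplus D_{t_2}[\sigma_2]$, computed in time $\OO(\Delta n^\Delta \log n)$ by Lemma~\ref{lem:sumset}, truncating coordinates to $n$.
\end{itemize}
At the root, with empty bag, $D_r[\emptyset]$ is exactly the set of count vectors of star forests in $G$ with degree at most $\Delta$. Each vector corresponds to a unique star forest up to isomorphism, and these are output.

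Correctness is a standard two-sided induction: every partial solution restricted to $G[V_t]$ yields a state and vector in $D_t$, and every stored pair extends to a genuine partial star forest. The key invariant is that a centre's star is counted exactly once, at its forget node, and vertex-disjointness is preserved at join nodes because $V_{t_1} \cap V_{t_2} = X_t$.

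For the running time: each join node pairs up to $(\Delta+\tw)^{2(\tw+1)}$ state combinations, each with an FFT sumset costing $\OO(\Delta n^\Delta \log n)$. Over $\OO(\tw n)$ nodes this gives the claimed bound $\OO((\Delta+\tw)^{2(\tw+1)} \cdot \tw \cdot \Delta \cdot n^{\Delta+1}\log n)$.

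The main obstacle is designing the join-node state compatibility so that a centre's leaves split across both children are neither double-counted nor lost, and so that leaves of forgotten centres do not need per-vertex identity. I would first try recording, for each bag centre, only its leaf count contributed so far, and summing the two children's counts minus the count from shared bag leaves, which are fixed identically in both child states.
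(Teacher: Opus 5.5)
Your proposal is correct and is essentially the paper's proof. Both run a DP over a nice tree decomposition whose states pair a role mask on the bag (unused, centre with its current size, or leaf attached to a bag centre or to a centre outside the bag) with a star-count vector, combine vector sets at join nodes by the FFT sumset of Lemma~\ref{lem:sumset}, and pay the same $(\Delta+\tw)^{2(\tw+1)}\cdot \OO(\Delta n^{\Delta}\log n)$ per join over $\OO(\tw\, n)$ nodes. Like the paper, you should take a width-$\tw$ decomposition as given, since computing one exactly or approximately does not fit inside the stated bound.

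The one difference is bookkeeping, and it favours you. Because you count a star only when its centre is forgotten, a join is a plain sumset plus the per-centre merge $d_1+d_2-q$ over bag leaves attached identically on both sides; this is the same rule as the paper's $S_1=S_2$, $d=d_1+d_2-q-1$. The paper instead counts partial stars immediately and must correct the vectors by $c_d=c^{(1)}_d+c^{(2)}_d-a_d-b_d+m_d$. As printed, that formula also subtracts bag vertices that are centres on only one side, and so loses their stars.
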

\begin{proof}
Let $V = V(G)$, $E = E(G)$. Let $(T,B)$ be a nice tree decomposition of $G$ of width at most $\tw$, rooted at $r$. For a node $t \in V(T)$, let its bag be $B_t = (v_0,\ldots,v_{k-1}), k \le \tw+1$, where the vertices in the bag are given a fixed order. For a node $t \in V(T)$, we denote by $V_t$ the set of vertices that appear in the bags in the subtree of $T$ rooted at $t$, and we let $G_t = G[V_t]$.

We define a function $l : V(T) \rightarrow \mathbb{N}$ as follows. For the root $r$, we set $l(r)=0$. For any node $t \in V(T)$, we define $ l(t) = \mathrm{dist}_T(t,r)$, where $\mathrm{dist}_T(t,r)$ denotes the distance between $t$ and $r$ in the tree $T$. Let $L = \max_{t \in V(T)} l(t)$. We describe a dynamic programming algorithm over $(T,B)$ that proceeds in a bottom-up fashion with respect to $l$. All DP states corresponding to nodes $t$ with $l(t)=L$ are initialized first. Then, for $j=L-1,L-2,\ldots,0$, the DP states corresponding to nodes $t$ with $l(t)=j$ are computed assuming that the DP states for all nodes $t'$ with $l(t')>j$ have already been computed. In particular, the DP state at the root $r$ (where $l(r)=0$) is computed last.

\textbf{States.}
We maintain a dynamic programming table $\mathrm{DP}$ where a state has the
following components:
\begin{enumerate}
    \item $t$, a node of the tree decomposition $T$;
    \item a \emph{mask} $\sigma$ describing the \emph{roles} of the vertices in $B_t$;
    \item a star-count vector $\mathbf{c} = (c_2,c_3,\ldots,c_{\Delta+1}) \in [n]^\Delta$.
\end{enumerate}

\textbf{Masks.}
For each vertex $v_i \in B_t$, its \emph{role} $\sigma(v_i)$ is one of the following:
\begin{itemize}
    \item \textsc{Uncovered}: $v_i$ is not incident to any edge of the partial solution;
    \item \textsc{Centre}$(d)$ for $d \in \{2,\ldots,\Delta+1\}$: $v_i$ is the centre of a star of $d$ nodes;
    \item $\textsc{Leaf}(u)$ for some $u \in B_t$: $v_i$ is a leaf of a star whose centre is $u$; 
    \item \textsc{Leaf}($\bot$): $v_i$ is a leaf whose centre lies outside $B_t$.
\end{itemize}

A \emph{mask} for the bag $B_t$ is a tuple
$\sigma = (\sigma(v_0),\ldots,\sigma(v_{k-1})).$

\textbf{Star-count vector.}
For each $d \in \{2,\ldots,\Delta+1\}$, the value $c_d$ denotes the number of stars of size $d$ that have been formed in $G_t$.

\textbf{Interpretation of States.}
For each node $t \in V(T)$, mask $\sigma$, and star-count vector $\mathbf{c}$, the entry $\mathrm{DP}[t,\sigma,\mathbf{c}] \in \{0,1\}$ is set to $1$ if and only if there exists a subgraph $H \subseteq G_t$ such that:
\begin{itemize}
    \item $H$ is a vertex-disjoint union of stars, each of size at most $\Delta+1$;
    \item for every $d \in \{2,\ldots,\Delta+1\}$, exactly $c_d$ stars of size $d$ appear in $H$;
    \item for every vertex $v_i \in B_t$, its role in $H$ is exactly given by $\sigma(v_i)$;
    \item every edge of $H$ is an edge of $G$.
\end{itemize}
If no such subgraph exists, we set $\mathrm{DP}[t,\sigma,\mathbf{c}] = 0$.

\textbf{Initialization (Leaf node).}
If $t$ is a leaf node of $T$ with bag $B_t = \emptyset$, then $\mathrm{DP}[t,(\emptyset),(0,\ldots,0)] = 1$. For every other star-count vector $\mathbf{c'} \neq (0,\ldots,0)$,
we set
\(
\mathrm{DP}[t, \emptyset, \mathbf{c'}] = 0.
\)

\textbf{Dynamic Programming on $\mathrm{DP}$.}
The DP table is filled in a bottom-up manner along the tree $T$. For a node $t$, we assume that all DP states corresponding to the children of $t$ have already been computed, and we update $\mathrm{DP}[t,\cdot,\cdot]$ according to the type of $t$ (introduce, forget, or join).

\textbf{Introduce node.}
Let $t$ be an introduce node with child $t'$ such that $B_t = B_{t'} \cup \{u\}$, where $u \notin B_{t'}$. 
Fix a mask $\sigma$ for $B_t$ and a star-count vector $\mathbf{c}$. We set $\mathrm{DP}[t,\sigma,\mathbf{c}]=1$ if and only if at least one of the following conditions holds.
\begin{enumerate}
    \item \textbf{If $u$ is uncovered.} If $\sigma(u)=\textsc{Uncovered}$, then $\mathrm{DP}[t,\sigma,\mathbf{c}]=1$ if and only if there exists a mask $\sigma'$ for $B_{t'}$ such that $\sigma'(v)=\sigma(v)$ for all $v \in B_{t'}$ and $\mathrm{DP}[t',\sigma',\mathbf{c}]=1.$
    \item \textbf{If $u$ is a leaf.} Suppose $\sigma(u)=\textsc{Leaf}(v)$ for some $v \in B_{t'}$ with $ \{u,v\} \in E(G)$. Then $\mathrm{DP}[t,\sigma,\mathbf{c}]=1$ if and only if there exists a mask $\sigma'$ and a vector $\mathbf{c}'$ such that
    \begin{itemize}
    \item $\mathrm{DP}[t',\sigma',\mathbf{c}']=1$,
    \item $\sigma'(w)=\sigma(w)$ for all $w \in B_{t'} \setminus \{v\}$,
    \item either
    \begin{itemize}
        \item $\sigma'(v)=\textsc{Uncovered}$, $c_2 = c'_2 + 1$, and $c_j=c'_j$ for all $j \neq 2$, or
        \item $\sigma'(v)=\textsc{Centre}(d)$ for some $d \le \Delta$, $c_{d+1}=c'_{d+1}+1$, $c_d=c'_d-1$, and $c_j=c'_j$ for all $j \notin \{d,d+1\}$.
    \end{itemize}
\end{itemize}
    \item \textbf{If $u$ is a centre.} Suppose $\sigma(u)=\textsc{Centre}(d)$ for some $d \in \{2,\ldots,\Delta+1\}$. Then $\mathrm{DP}[t,\sigma,\mathbf{c}]=1$ if and only if there exists a mask $\sigma'$ and a vector $\mathbf{c}'$ such that
    \begin{itemize}
    \item $\mathrm{DP}[t',\sigma',\mathbf{c}']=1$,
    \item Let $S = \{v \in B_{t'} \mid \sigma(v) = \textsc{Leaf}(u)\}$, then $S$ is of size $d-1$ with $\{u,v\} \in E(G)$ and $\sigma'(v)=\textsc{Uncovered}$ for all $v \in S$,
    \item $\sigma(w)=\sigma'(w)$ for all $w \in B_{t'} \setminus S$,
    \item $c_d = c'_d + 1$ and $c_j = c'_j$ for all $j \neq d$.
\end{itemize}
\end{enumerate}
\textbf{Forget node.}
Let $t$ be a forget node with child $t'$ such that $B_t = B_{t'} \setminus \{u\}.$ Fix a mask $\sigma$ for $B_t$ and a star-count vector $\mathbf{c}$. We set $\mathrm{DP}[t,\sigma,\mathbf{c}]=1$ if and only if there exists a mask $\sigma'$ for $B_{t'}$ such that:
\begin{itemize}
    \item $\sigma'(v)=\sigma(v)$ for all $v \in B_t$, and
    \item $\mathrm{DP}[t',\sigma',\mathbf{c}]=1$.
\end{itemize}
\textbf{Join node.}
Let $t$ be a join node with children $t_1$ and $t_2$ such that $B_t = B_{t_1} = B_{t_2}$. At a join node $t$, the subgraph $G_t$ is the union of two subgraphs $G_{t_1}$ and $G_{t_2}$ that intersect exactly on the bag $B_t$. A DP state at $t_1$ describes a partial star forest $H_1 \subseteq G_{t_1}$, and a DP state at $t_2$ describes a partial star forest $H_2 \subseteq G_{t_2}$. To obtain a valid solution in $G_t$, these two partial solutions must be combined into $H = H_1 \cup H_2$. The vertices of $B_t$ are the only vertices that appear in both subgraphs, and hence they are the only points where the two partial solutions can interact. Each child state assigns a role to every vertex $v \in B_t$ (\textsc{Uncovered}, \textsc{Leaf}, or \textsc{Centre} with a given size). These roles describe how $v$ participates in the partial star forest inside that subtree. However, in the final solution $H$, each vertex can have only one role in the star forest: it cannot simultaneously be a leaf and a centre, nor can it be attached to two different centres. Therefore, the role assignments coming from $t_1$ and $t_2$ must be consistent with a single global role for each $v \in B_t$. Now we discuss the vertex role compatibility.

Fix a mask $\sigma$ for $B_t$ and a star-count vector $\mathbf{c}$. We set $\mathrm{DP}[t,\sigma,\mathbf{c}] = 1$ if and only if there exist masks $\sigma_1, \sigma_2$ for $B_{t_1}, B_{t_2}$ and star-count vectors $\mathbf{c}^{(1)}, \mathbf{c}^{(2)}$ such that $\mathrm{DP}[t_1,\sigma_1,\mathbf{c}^{(1)}] = 1$ and $\mathrm{DP}[t_2,\sigma_2,\mathbf{c}^{(2)}] = 1$, and the following compatibility conditions hold.

\emph{Vertex-role compatibility.} For every vertex $v \in B_t$, the pair of roles $(\sigma_1(v), \sigma_2(v))$ uniquely determines $\sigma(v)$ as follows.

\begin{enumerate}
    \item \textbf{If one of $\sigma_1(v)$ and $\sigma_2(v)$ is \textsc{Uncovered}.} If a vertex is \textsc{Uncovered} in one subtree, that subtree does not use the vertex in any star. Hence, in the merged solution, the vertex’s role is determined entirely by the other subtree, since only that side contributes incident star edges for the vertex.
    \begin{enumerate}
        \item If $\sigma_1(v)$ and $\sigma_2(v)$ are both \textsc{Uncovered}, then $\sigma(v)=\textsc{Uncovered}$.
        \item If one of $\sigma_1(v),\sigma_2(v)$ is \textsc{Uncovered} and the other is $\textsc{Leaf}(u)$, then $\sigma(v)=\textsc{Leaf}(u)$. 
        \item If one of them is \textsc{Uncovered} and the other is $\textsc{Centre}(d)$, then $\sigma(v)=\textsc{Centre}(d)$. 
    \end{enumerate} 
    
     \item \textbf{If both roles $\sigma_1(v), \sigma_2(v)$ are $\textsc{Leaf}$.} If a vertex is a \textsc{Leaf} in both subtrees with the same vertex $u$ as centre, i.e. $\sigma_1(v) = \sigma_2(v)= \textsc{Leaf}(u)$, then it must be attached to the same centre in $B_t$; this means both partial solutions describe the same star continuing across the two subtrees, so the role remains $\sigma(v)=\textsc{Leaf}(u)$. If the two subtrees attach it to different centres, the roles conflict and no valid star forest can result, so the combination is incompatible. 

     \item \textbf{If both $\sigma_{1}(v)$ and $\sigma_{2}(v)$ are Centres.} If a vertex $v \in B_t$ is a centre in both child states i.e. $\sigma_1(v)=\textsc{Centre}(d_1)$ and $\sigma_2(v)=\textsc{Centre}(d_2)$, then its incident leaves may be split between $G_{t_1}$ and $G_{t_2}$. In the combined solution, these two partial stars represent different parts of the same star and must therefore be merged into a single star. Let $S_1 = \{u \in B_t \mid \sigma_1(u) = \textsc{Leaf}(v)\}$ be the leaves of the star present in $B_{t_1}$, and $S_2 = \{u \in B_t \mid \sigma_2(u) = \textsc{Leaf}(v)\}$ be the leaves of the star present in $B_{t_2}$. $\sigma_1$ and $\sigma_2$ are compatible only if $S_1 = S_2$, i.e., the stars agree over the leaves present in the bag. In this case, let $q=|S_1|$ be the number of common leaves of $v$ already present in the bag; these are counted in both subtrees and must not be double counted. If the partial stars have sizes $d_1$ and $d_2$, the merged star has size $d = d_1 + d_2 - q - 1$, and the star-count vector is updated by removing the two partial counts and adding one count for the merged star of size $d$. If $d \le \Delta+1$, we set $\sigma(v)=\textsc{Centre}(d)$; otherwise, the combination is invalid. 
     
     \item \textbf{If one of $\sigma_{1}(v)$ or $\sigma_{2}(v)$ is a Leaf and the other is a Centre.} If one role is $\textsc{Leaf}(u)$ and the other is $\textsc{Centre}(d)$, the such a combination is incompatible, since a vertex cannot simultaneously act as a leaf and a centre in the same star forest.

\end{enumerate}

\emph{Star-count update.}
For each $d \in \{2,\ldots,\Delta+1\}$, let $a_d$ be the number of vertices $v\in B_t$ with $\sigma_1(v)=\textsc{Centre}(d)$ and $b_d$ the number with $\sigma_2(v)=\textsc{Centre}(d)$. Let $m_d$ denote the number of vertices $v\in B_t$ that are centres in both subtrees and whose two partial stars combine into a single star of size $d$ in the merged solution. A simple application of the inclusion-exclusion principle gives us that the resulting star-count vector satisfy $c_d = c^{(1)}_d + c^{(2)}_d - a_d - b_d + m_d$. We show how to compute all such $\mathbf{c}$ efficiently.

We pick $\sigma_1$, $\sigma_2$ which are compatible as discussed in vertex-role compatibility. We set \(\mathcal{C}_1= \{\mathbf{c} \mid \mathrm{DP}[t_1, \sigma_1, \mathbf{c}] = 1 \}\) and similarly, \(\mathcal{C}_2= \{\mathbf{c'} \mid \mathrm{DP}[t_1, \sigma_1, \mathbf{c'}] = 1 \}\). We can compute $\mathcal{C} = \{ \mathbf{c} + \mathbf{c'} \mid \mathbf{c} \in \mathcal{C}_1, \mathbf{c'} \in \mathcal{C}_2\}$ using Lemma~\ref{lem:sumset}, in time $\OO(\Delta n^\Delta\log n)$. Let $\mathbf{a} = (a_2,a_3,\ldots,a_{\Delta+1})$, $\mathbf{b} = (b_2,b_3,\ldots,b_{\Delta+1})$ and $\mathbf{m} = (m_2,m_3,\ldots,m_{\Delta+1})$. We define $\mathcal{C}^* = \{\mathbf{c}-\mathbf{a}-\mathbf{b}+\mathbf{m} \mid \mathbf{c} \in \mathcal{C}\}$. We set $\mathrm{DP}[t,\sigma,\mathbf{c}]$ to $1$ if $\mathbf{c} \in \mathcal{C}^*$, and $0$ otherwise.

Let $r$ be the root of the tree decomposition. By construction of a nice
tree decomposition, the bag $B_r$ is empty. Hence any state at $r$ has
the form $\mathrm{DP}[r,(\emptyset),\mathbf{c}]$.

Since $B_r=\emptyset$, all vertices of $H$ lie strictly inside the subtree. Therefore $H$ is a star forest subgraph in $G$. 
The algorithm outputs the following star forests: For every $\mathbf{c} = (c_2, \ldots, c_{\Delta+1}) \in [n]^\Delta$ such that $\mathrm{DP}[r, (\emptyset), \mathbf{c}] = 1$, output the star forest containing $c_i$ copies of the star of size $i$ for every $i \in \{2, 3, \ldots, \Delta + 1\}$. 

\subparagraph*{Proof of correctness.} 
We prove by induction over the tree decomposition that the dynamic
programming table satisfies the intended semantics.

\begin{claim}
For every node $t \in V(T)$, mask $\sigma$ on $B_t$, and star-count vector $\mathbf{c}$, we have $\mathrm{DP}[t,\sigma,\mathbf{c}] = 1$ if and only if there exists a subgraph $H \subseteq G_t$ such that 
\begin{enumerate}
\item $H$ is a vertex-disjoint union of stars, each of size at most $\Delta+1$,
\item for every $d \in \{2,\ldots,\Delta+1\}$, exactly $c_d$ stars of size $d$ appear in $H$,
\item for every vertex $v \in B_t$, its role in $H$ is exactly $\sigma(v)$,
\item every edge of $H$ belongs to $G$.
\end{enumerate}
\end{claim}

\begin{claimproof}
We prove the claim by induction on $l(t)$, that is, in the bottom-up order of the tree decomposition.
 
\subparagraph*{Forward direction.}
\textbf{Base case (leaf node).}
Let $t$ be a leaf node. Then $B_t = \emptyset$ and $G_t$ is the empty graph. The DP initialization sets $\mathrm{DP}[t,(\emptyset),(0,\ldots,0)] = 1$, and all other entries to $0$. The only subgraph of $G_t$ is the empty graph, which trivially satisfies the claim with no stars and no vertices in the bag. Hence the statement holds for leaf nodes.

\textbf{Inductive step.}
Assume the claim holds for all children of a node $t$. We prove it for $t$ according to its type.

\textbf{Introduce node.}
Suppose $t$ introduces a vertex $u$, i.e., $B_t = B_{t'} \cup \{u\}$ for its child $t'$. Any star forest $H \subseteq G_t$ restricts to a star forest $H' = H[V_{t'}] \subseteq G_{t'}$. By the induction hypothesis, this corresponds to a state $\mathrm{DP}[t',\sigma',\mathbf{c}'] = 1$.

The transitions for introduce nodes consider exactly the three possible roles of $u$ in $H$:
\begin{enumerate}
    \item If $u$ is \textsc{Uncovered}, then $H = H'$ and the DP keeps the same star counts and roles on $B_{t'}$.
    \item If $u$ is a \textsc{Leaf}$(v)$, then $u$ is attached to some centre $v$ with $\{u,v\} \in E(G)$. The DP transitions either create a new star of size $2$ (if $v$ was uncovered) or extend an existing star centred at $v$, updating the corresponding entry of $\mathbf{c}$.
    \item  If $u$ is a \textsc{Centre}, then $u$ becomes the centre of a star with leaves among its neighbours in $B_{t'}$, and the DP verifies that these vertices were previously uncovered and updates the star counts.
\end{enumerate}

Thus every DP transition constructs a valid extension of a star forest from $G_{t'}$ to $G_t$. Hence the claim holds for introduce nodes.

\textbf{Forget node.}
Suppose $t$ forgets a vertex $u$, i.e., $B_t = B_{t'} \setminus \{u\}$. Any star forest $H \subseteq G_t$ is also a star forest in $G_{t'}$ with the same star counts. The role of $u$ in $H$ is no longer relevant once $u \notin B_t$. The DP transition performs an \textsc{or} operation over all possible roles of $u$ in the child state, preserving the star-count vector. By the induction hypothesis, this exactly captures all valid star forests in $G_t$. Thus the claim holds for forget nodes. 

\textbf{Join node.}
Let $t$ be a join node with children $t_1,t_2$ and $B_t = B_{t_1} = B_{t_2}$. Any star forest $H \subseteq G_t$ decomposes into $H_1 = H \cap G_{t_1}$ and $H_2 = H \cap G_{t_2}$, which agree on $B_t$. By the induction hypothesis, $H_1$ and $H_2$ correspond to states $\mathrm{DP}[t_1,\sigma_1,\mathbf{c}^{(1)}]=1$ and $\mathrm{DP}[t_2,\sigma_2,\mathbf{c}^{(2)}]=1$. The vertex-role compatibility conditions ensure that the roles of vertices in the bag are consistent between $H_1$ and $H_2$. In particular, stars centred at bag vertices that appear in both subtrees are merged correctly, and invalid combinations (e.g., a vertex being both leaf and centre) are discarded. The star-count update subtracts stars counted in both subtrees whose centres lie in the bag and adds back merged stars with updated sizes. Thus the vector $\mathbf{c}$ records exactly the number of stars in $H$.

Hence the claim holds for join nodes. By induction, the claim holds for the root $r$. Therefore, the DP correctly decides whether a star forest with the required size exists.

\subparagraph*{Backward direction.}
Assume a star forest $H \subseteq G_t$ satisfies Properties (1)--(4). We show $\mathrm{DP}[t,\sigma,\mathbf{c}] = 1$.

\textbf{Base case (leaf node).} If $t$ is a leaf, $H$ must be empty, so the DP initialization sets $\mathrm{DP}[t,(\emptyset),(0,\ldots,0)] = 1$.

\textbf{Inductive step.} Assume the claim holds for all children of a node $t$. We prove it for $t$ according to its type.

\textbf{Introduce node.} Let $t$ introduce $u$. Let $H' = H[V_{t'}]$. By induction, $H'$ corresponds to $\mathrm{DP}[t',\sigma',\mathbf{c}']=1$. The role of $u$ in $H$ determines which introduce transition applies, updating roles and counts to $\sigma,\mathbf{c}$. Hence $\mathrm{DP}[t,\sigma,\mathbf{c}] = 1$.

\textbf{Forget node.} If $t$ forgets $u$, then $H$ is also a star forest in $G_{t'}$. By induction, it corresponds to $\mathrm{DP}[t',\sigma',\mathbf{c}] = 1$. The forget transition projects $\sigma'$ to $B_t$, so $\mathrm{DP}[t,\sigma,\mathbf{c}] = 1$.

\textbf{Join node.} Let $H_i = H \cap G_{t_i}$ for $i=1,2$. By induction, these correspond to states $\mathrm{DP}[t_i,\sigma_i,\mathbf{c}^{(i)}]=1$. Since both originate from the same $H$, their roles on $B_t$ are compatible. The join transition merges them and sets $\mathrm{DP}[t,\sigma,\mathbf{c}] = 1$.

Thus both directions hold, completing the proof.
\end{claimproof}

\subparagraph*{Running-time analysis.} 
Let $k = \tw+1$ denote the maximum bag size. For each vertex $v$ in a bag, its role can be \textsc{Uncovered}, \textsc{Centre}$(d)$ for some $d \in \{2,\ldots,\Delta+1\}$, or \textsc{Leaf}$(u)$ for $u \in B_t \setminus \{v\}$ or $\textsc{Leaf}(\bot)$. Hence each vertex has at most $1+\Delta+(k+1) = \Delta + \tw + 3$ possible roles, and the number of masks per bag is at most $(\Delta + \tw + 3)^{\tw+1}$. The star-count vector has $\Delta$ coordinates, each bounded by $n$, so the number of possible vectors is $\OO(n^\Delta)$. Therefore, the number of DP states per bag is $(\Delta + \tw + 3)^{\tw+1} \cdot n^\Delta.$ Introduce and forget nodes can be processed in time polynomial in ($\tw+\Delta$) per state. At a join node, in the worst case every state of one child must be checked against every state of the other child, leading to a time of $(\Delta + \tw + 3)^{2(\tw+1)} \cdot \Delta \cdot n^{\Delta} \log n$ for that node. Since a nice tree decomposition has $\OO(n \cdot \tw)$ nodes, the total running time of the algorithm is $\OO\big((\Delta + \tw + 3)^{2(\tw+1)} \cdot \tw \cdot \Delta \cdot n^{\Delta+1} \cdot \log n\big)$.
\end{proof}

We now extend the dynamic programming algorithm of Lemma~\ref{lem:treewidth-algo-part-1} to compute a maximum common subgraph of two graphs that contains no isolated vertices.

\begin{theorem}\label{thm:tw-xp-algo}
There exists an \FPT algorithm for {\ourprob}, parameterized by $\tw$ when the maximum degree of one of the input graphs is constant, running in time 
$\OO\left((\Delta + \tw)^{2(\tw+1)} \cdot \tw \cdot \Delta \cdot n^{\Delta+1} \cdot \log n 
\right)$, where $n = \max(|V(G_1)|,|V(G_2)|)$, provided $\tw(G_1) \leq \tw$, $\tw(G_2) \leq 
\tw$, and $\Delta(G_1) \leq \Delta$ or $\Delta(G_2) \leq \Delta$ for some constant $\Delta$.
\end{theorem}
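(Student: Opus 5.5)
Without loss of generality let $\Delta(G_1) \le \Delta$. The plan is to reduce to Lemma~\ref{lem:treewidth-algo-part-1} by comparing star-count vectors.

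\textbf{Step 1: bound the stars.} Any common star forest $H$ of $G_1$ and $G_2$ is a subgraph of $G_1$. Hence every star of $H$ has at most $\Delta+1$ vertices, so $H$ has maximum degree at most $\Delta$. Up to isomorphism, $H$ is therefore determined by its star-count vector $\mathbf{c} = (c_2,\ldots,c_{\Delta+1}) \in \{0,\ldots,n\}^\Delta$. This gives exactly the restricted family that Lemma~\ref{lem:treewidth-algo-part-1} enumerates.

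\textbf{Step 2: run the lemma on both graphs.} Apply Lemma~\ref{lem:treewidth-algo-part-1} to $G_1$ to get the set $\mathcal{C}_1$ of all realisable vectors $\mathbf{c}$. Apply it also to $G_2$, with the same bound $\Delta$ on star sizes, to get $\mathcal{C}_2$.

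The second call needs care, since $G_2$ may have unbounded degree. The DP itself never uses the degree bound: it only restricts the roles $\textsc{Centre}(d)$ to $d \le \Delta+1$. So it still correctly enumerates exactly the star forests of $G_2$ whose stars have at most $\Delta+1$ vertices. The running time also depends only on $\tw$, $\Delta$ and $n$, not on $\Delta(G_2)$. I would state this explicitly, since Lemma~\ref{lem:treewidth-algo-part-1} is phrased for bounded-degree $G$.

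Tree decompositions of width $\OO(\tw)$ can be computed in FPT time by standard algorithms, and their cost is dominated by the DP. Both enumerations run in the bound $\OO\left((\Delta + \tw)^{2(\tw+1)} \cdot \tw \cdot \Delta \cdot n^{\Delta+1} \cdot \log n\right)$.

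\textbf{Step 3: compare.} Compute $\mathcal{C}_1 \cap \mathcal{C}_2$. Both sets have size $\OO(n^\Delta)$, so a boolean array indexed by $\{0,\ldots,n\}^\Delta$ does this in $\OO(n^\Delta)$ time. Return \yes if and only if some $\mathbf{c}$ in the intersection has $\sum_d d\cdot c_d \ge h$.

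\textbf{Correctness.} If such a $\mathbf{c}$ exists, the star forest with $c_d$ stars of size $d$ for each $d$ is a subgraph of both graphs, so we have a \yes-instance. Conversely, take a common star forest on at least $h$ vertices. By Step 1 its vector lies in both $\mathcal{C}_1$ and $\mathcal{C}_2$.

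\textbf{Main obstacle.} The delicate point is Step 2's claim that the lemma's DP is valid on $G_2$ without a degree bound. In the introduce and join transitions, a centre's star size is capped at $\Delta+1$ and over-large merges are discarded. Stars larger than $\Delta+1$ in $G_2$ are simply not represented, which is harmless by Step 1. So I would re-check the lemma's correctness claim without ever invoking $\Delta(G) \le \Delta$.
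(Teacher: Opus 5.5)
Your proposal is correct and follows the paper's proof. Both observe that every star of a common star forest has at most $\Delta+1$ vertices because the forest must embed into the bounded-degree graph, enumerate the realisable star-count vectors of both graphs with Lemma~\ref{lem:treewidth-algo-part-1}, and report the largest vector common to both. Your explicit check that the lemma's DP uses $\Delta$ only as a cap on centre sizes, and so stays valid on the possibly unbounded-degree graph, spells out a point the paper's two-line proof leaves implicit. One remark is loose: if you compute a decomposition of width $\OO(\tw)$ rather than $\tw$, the exponent in the stated bound changes; the paper sidesteps this the same way, by assuming a width-$\tw$ decomposition is available.
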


\begin{proof}
Let $H$ be a maximum common star forest subgraph of $G_1$ and $G_2$, then the maximum degree of $H$ is $\Delta$. Using Lemma~\ref{lem:treewidth-algo-part-1}, we enumerate all star forest subgraphs of $G_1$ and $G_2$ of maximum degree $\Delta$, and output the size of the largest star forest appearing as subgraphs of both.
\end{proof}

%% file: conclusion.tex
\section{Conclusion}
In this paper, we obtain a complete dichotomy of the parameterized complexity of \ourprob{} when the parameters are a combination of the common subgraph size, or vertex cover, maximum degree, treedepth, pathwidth or treewidth of one or both input graphs. In future, it would be interesting to study the scope for designing efficient approximation schemes for the problems with parameterized hardness results --- similar to the EPTAS we design for the problem when both input graphs are planar graphs with constant maximum degrees. Exploring tractability with respect to other structural parameters such as vertex integrity or arboricity also remains an interesting direction. From the perspective of computational social choice, we also wish to study the problem where the input graphs come from special graph classes like unit disk graphs (since they capture geometrical nearness of houses on a Euclidean plane), etc.

%% file: polytime_equivalence.tex
\section{Equivalence in Finding Common Subgraphs of size exactly $h$ and at least $h$}\label{subsec:eqv-eq-ge}

Consider the following problem.
\defproblem{Common Star Forest Subgraph of Target Size (CSFSTS)}{Undirected unweighted graphs $G_1$, $G_2$, non-negative integer $h$.}{Does there exist a star forest $H$ on exactly $h$ vertices, such that $H$ is a subgraph of both $G_1$ and $G_2$?}

In this section, we show that \ourprob{} and CSFSTS are equivalent under polynomial-time reductions. The following Lemmas captures the primary essence of the equivalence.

\begin{lemma}\label{lem:eq-or-matching}
    Let $H$ be a maximum common star forest subgraph of $G_1$ and $G_2$, then $|V(H)| \ge h$ if and only if at least one of the following holds true.
    \begin{enumerate}
        \item $G_1$ and $G_2$ both have a matching consisting of $\lceil h/2 \rceil$ edges.
        \item $G_1$ and $G_2$ both have a common star forest subgraph of size exactly $h$.
    \end{enumerate}
\end{lemma}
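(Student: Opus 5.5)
The backward direction is immediate, so I will dispose of it first. If both graphs have a matching of $\lceil h/2\rceil$ edges, then that matching, viewed as a disjoint union of $\lceil h/2\rceil$ copies of $K_2$, is a common star forest subgraph on $2\lceil h/2\rceil \ge h$ vertices. If instead they share a common star forest on exactly $h$ vertices, the conclusion is trivial. In either case the maximum common star forest $H$ satisfies $|V(H)|\ge h$.

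For the forward direction, I assume $|V(H)|\ge h$ and that condition 2 fails. I will show condition 1 must hold. The plan is a shrinking argument on $H$: repeatedly delete single vertices from a common star forest while keeping it a common star forest. The basic step is as follows.

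\begin{itemize}
\item If some star of $H$ has at least $2$ leaves, deleting one leaf lowers the vertex count by exactly one and leaves a star forest.
\item Deleting that leaf from both embeddings, into $G_1$ and into $G_2$, keeps the result a common subgraph. This works because subgraphs of subgraphs are subgraphs.
\end{itemize}

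Starting from $H$ with $|V(H)|\ge h$, I repeat this step until either the count reaches exactly $h$, or every star is a $K_2$. Reaching exactly $h$ would give condition 2, which we assumed fails. So the process must end with a common star forest $H'$ that is a perfect matching, with $|V(H')|\ge h$ and, at termination, $|V(H')|>h$ (if it equals $h$ we again land in condition 2).

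It remains to extract condition 1 from $H'$. Since $H'$ is a disjoint union of edges, it is a matching with $|V(H')|/2 \ge h/2$ edges, hence at least $\lceil h/2\rceil$ edges because the count is an integer. This matching appears in both $G_1$ and $G_2$, so taking any $\lceil h/2\rceil$ of its edges gives matchings of that size in both graphs.

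The point needing care is the termination analysis. Each step lowers the count by exactly $1$, so the count cannot jump past $h$. The only way to get stuck above $h$ is that no star has $\ge 2$ leaves, meaning all stars are edges. That is exactly the matching case handled above, so the dichotomy is exhaustive.
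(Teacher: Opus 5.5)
Your proof is correct and is essentially the paper's argument. The paper also trims $H$ down to one edge per star (a matching $Q$ on $2t$ vertices) and then pads with further leaves to reach exactly $h$; it argues contrapositively that if either graph lacks a matching of $\lceil h/2\rceil$ edges then $2t < h$, while your leaf-by-leaf deletion performs the same trimming with the case split made implicitly.
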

\begin{proof}
    Clearly, if $G_1$ and $G_2$ both have a matching (which is also a star forest) consisting of $\lceil h/2 \rceil$ edges or if they have a common star forest of size exactly $h$, then the maximum common star forest would be at least of size $h$, i.e. $|V(H)| \ge h$.

    Now, we prove the other direction. Let $H$ be a common star forest subgraph of $G_1$ and $G_2$ with $|V(H)| \ge h$, and let the size of the maximum matching in either $G_1$ and $G_2$ be less than $\lceil h/2 \rceil$. Therefore, $\lceil h/2 \rceil$ is strictly larger than the size of the maximum matching of $H$. Hence, $H$ can have at most $\lceil h/2 \rceil - 1$ many stars. Let the stars of $H$ be $S_1, \ldots, S_t$ for $t \le \lceil h/2 \rceil - 1$. For every $i \in [t]$, let $S_i$ have centre $c_i$ and leaves $v_i^{(0)}, v_i^{(1)}, \ldots, v_i^{(y_i)}$, for some $y_i \ge 0$. Notice that the graph induced by $Q = \{c_1, c_2, \ldots, c_t\} \cup \{v_1^{(0)}, v_2^{(0)}, \ldots, v_t^{(0)}\}$ induce a matching of size $t$ in $H$, containing $2t \le 2\lceil h/2 \rceil - 2 < h$ nodes. Now, we take all vertices from $Q$ and we take $h - 2t$ vertices from $V(H) \setminus Q$ to get a subset $R$ of $V(H)$ of size exactly $h$. $H[R] \subseteq H$ is a subgraph of both $G_1$ and $G_2$ of size $h$. Moreover, $H[R]$ is a star forest as it contains no isolated vertices; this is because the centres of stars $S_1, \ldots, S_t$, i.e. $c_1, \ldots, c_t$ are present in $R$.
\end{proof}

\begin{lemma}\label{lem:vert-deg-atleast-2}
    Let $G_1$ and $G_2$ be graphs of maximum degree at least $2$. Then for an odd integer $h \ge 3$, the following statements are equivalent.
    \begin{enumerate}
        \item $G_1$ and $G_2$ have a common star forest subgraph on exactly $h$ vertices.
        \item $G_1$ and $G_2$ have a common star forest subgraph on at least $h$ vertices.
    \end{enumerate}
\end{lemma}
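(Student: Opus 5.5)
The direction (1)$\Rightarrow$(2) is immediate. For (2)$\Rightarrow$(1), I would start from a common star forest $H$ with $|V(H)| \ge h$ and shrink it step by step while keeping it a common star forest. The shrinking moves are chosen so that the result stays a subgraph of $H$, and hence a common subgraph of $G_1$ and $G_2$ automatically. The only case where shrinking gets stuck is handled by building a new common star forest directly, using the degree assumption.

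\textbf{Shrinking.} While $|V(H)| > h$, apply one of two moves.
\begin{enumerate}
  \item If some star of $H$ has at least two leaves, delete one of its leaves. The result is still a star forest, now with $|V(H)|-1$ vertices.
  \item Otherwise $H$ is a matching. If moreover $|V(H)| \ge h+2$, delete one whole edge. This lowers the size by $2$. Since a matching has an even number of vertices and $h$ is odd, $|V(H)| \ge h+2$ forces $|V(H)| \ge h+3$, so the new size is still at least $h+1 > h$.
\end{enumerate}
Each move keeps $|V(H)| \ge h$ and is a subgraph of the previous common subgraph. So either we reach exactly $h$ vertices, or we get stuck. Getting stuck means $H$ is a matching with $|V(H)| = h+1$, since $h+1$ is the only value above $h$ that the moves cannot reduce. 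In that case both $G_1$ and $G_2$ contain a matching of $(h+1)/2$ edges.

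\textbf{The stuck case (main obstacle).} A matching has an even number of vertices, so it can never have exactly $h$ vertices. I therefore aim for the specific star forest $F$ consisting of one $P_3$ together with $(h-3)/2$ disjoint edges. It has $3 + (h-3) = h$ vertices, and it suffices to show that $F$ embeds in each $G_i$ separately. Fix $G = G_i$ and a matching $M$ of $(h+1)/2$ edges in $G$. There are three cases.
\begin{itemize}
  \item \emph{Some $M$-matched vertex $a$ with partner $b$ has a neighbour $w \neq b$, and $w$ is unmatched.} Take the $P_3$ with centre $a$ and leaves $w,b$. The other $(h-1)/2 \ge (h-3)/2$ edges of $M$ are untouched.
  \item \emph{The neighbour $w$ is matched, say to $w'$.} The same $P_3$ $w$--$a$--$b$ destroys only the two $M$-edges $\{a,b\}$ and $\{w,w'\}$. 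This leaves $(h+1)/2 - 2 = (h-3)/2$ edges, exactly enough.
  \item \emph{Every edge of $M$ is a whole connected component of $G$.} Since $\Delta(G) \ge 2$, there is a vertex $v$ of degree at least $2$, and it lies outside $V(M)$. The same holds for its neighbours, because any $M$-vertex has degree exactly $1$ here. So a $P_3$ centred at $v$ is disjoint from $M$, and any $(h-3)/2$ edges of $M$ complete $F$.
\end{itemize}
In every case $F \subseteq G_i$ for $i=1,2$. Hence $F$ is a common star forest on exactly $h$ vertices, which proves (1).

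The step needing the most care is the edge count in the second bullet, where exactly $(h-3)/2$ edges survive. The hypotheses $h \ge 3$ and $\Delta(G_i) \ge 2$ are used precisely here: $h \ge 3$ makes $(h-3)/2 \ge 0$, and the degree bound supplies the $P_3$ in the third bullet.
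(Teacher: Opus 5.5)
Your proof is correct and follows the paper's skeleton. Both arguments reduce to the case where both graphs contain a matching of $(h+1)/2$ edges, and then show that each graph separately contains the forest $H^*$ made of one $P_3$ and $(h-3)/2$ disjoint edges. The differences are local. You reach the matching case by a self-contained leaf/edge-deletion argument, which in effect reproves Lemma~\ref{lem:eq-or-matching} for odd $h$, instead of citing that lemma. You also anchor the embedding case analysis on the matching, looking for a matched vertex with a second neighbour. The paper instead fixes a vertex $v$ of degree at least $2$ with neighbours $a,b$, asks whether $G-\{v,a,b\}$ still has a matching of $(h-3)/2$ edges, and otherwise trades the $M$-edges at $v$ and $b$ for the path $b$--$v$--$c$, where $c$ is the $M$-partner of $v$.
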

\begin{proof}
    Clearly if $G_1$ and $G_2$ have a common star forest on exactly $h$ vertices, then they have a common star forest on at least $h$ vertices. 

    We now show the other direction, suppose $G_1$ and $G_2$ have a common star forest subgraph on at least $h$ vertices. By Lemma~\ref{lem:eq-or-matching}, $G_1$ and $G_2$ either have a matching on at least $\lceil h / 2 \rceil = (h+1)/2$ edges or a common star forest subgraph of exactly $h$ vertices. If the latter is true, then we are done. So, assume that both $G_1$ and $G_2$ have a matching on at least $(h+1)/2$ edges. 
    
    Define the graph $H^*$ as the disjoint collection of $(h-3)/2$ many stars of size $2$, and one star of size $3$. We show the following claim.

    \begin{claim}\label{clm:deg2-subgraph}
        For an odd integer $h \ge 3$, if a graph $G$ with maximum degree at least $2$ has a matching on at least $(h+1)/2$ edges, then $H^* \subseteq G$.
    \end{claim}
    \begin{claimproof}
        Let $v \in V(G)$ be a vertex of degree at least $2$, and let $a$ and $b$ be two distinct neighbours of $v$. Let $V' = V(G) \setminus \{v, a, b\}$. Since $G$ has a matching $M$ on $(h + 1)/2$ edges, $G[V']$ must have a matching on at least $(h + 1)/2 - 3 = (h - 3)/2 - 1$ edges. We analyse the following cases. 
        \begin{itemize}
            \item \textbf{Case I: $G[V']$ has a matching containing $(h - 3)/2$ edges.} Then this matching along with the edges $\{v, a\}$ and $\{v, b\}$ form a subgraph of $G$ isomorphic to $H^*$, giving us $H^* \subseteq G$. 
            \item \textbf{Case II: $G[V']$ has no matching containing $(h - 3)/2$ edges.} Therefore, $M$ must contain all $v$, $a$ and $b$ where these vertices appear in three distinct edges of $M$. Then there must exist $c \notin \{a, b\}$ such that the edge $\{v, c\}$ is in the matching $M$. Moreover, there exists $d \in \{v, a\}$ such that the edge $\{b, d\}$ is in the matching $M$. Let $M'$ be the matching on $(h - 3)/2$ edges obtained by removing edges $\{v, c\}$ and $\{b, d\}$ from $M$. Therefore, the matching $M'$ along with the edges $\{v, b\}$ and $\{v, c\}$ form a subgraph of $G$ isomorphic to $H^*$, giving us $H^* \subseteq G$.
        \end{itemize}
        Hence in all cases, we have $H^* \subseteq G$.
    \end{claimproof}

    Therefore $H^* \subseteq G_1$ and $H^* \subseteq G_2$. Moreover, since $H^*$ is a star forest on $2 \cdot (h - 3)/2 + 3 \cdot 1 = h$ vertices, $G_1$ and $G_2$ have a star forest subgraph on exactly $h$ vertices. 
\end{proof}

We now present polynomial-time reductions between these problems. Let $P_2$ be the connected graph on $2$ vertices. Notice that $(P_2, P_2, 2)$ is a \yes-instance and $(P_2, P_2, 3)$ is a \no-instance of both \ourprob{} and CSFSTS.

\begin{theorem}\label{thm:prob-eqv}
    \ourprob{} and {\sc CSFSTS} are equivalent under polynomial-time reductions.
\end{theorem}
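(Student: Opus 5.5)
We need two polynomial-time reductions, one in each direction, each built on one of the two lemmas just proved.

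\textbf{Reduction from \ourprob{} to CSFSTS.} Given $(G_1,G_2,h)$, Lemma~\ref{lem:eq-or-matching} says the answer is \yes{} if and only if either both graphs have a matching of size $\lceil h/2\rceil$, or the CSFSTS instance $(G_1,G_2,h)$ is \yes. Maximum matchings are computable in polynomial time (e.g.\ Edmonds' algorithm), so the reduction proceeds as follows.
\begin{enumerate}
\item Compute maximum matchings of $G_1$ and $G_2$.
\item If both have size at least $\lceil h/2\rceil$, output the fixed \yes-instance $(P_2,P_2,2)$ of CSFSTS.
\item Otherwise output $(G_1,G_2,h)$ unchanged.
\end{enumerate}
Correctness is immediate from Lemma~\ref{lem:eq-or-matching}. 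This is a Karp reduction; a Turing reduction would be even simpler.

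\textbf{Reduction from CSFSTS to \ourprob{}.} Given $(G_1,G_2,h)$, first handle trivial cases directly:
\begin{itemize}
\item $h=0$: answer \yes, via the empty graph.
\item $h=1$: answer \no.
\item Either graph has no edge and $h\ge 2$: answer \no.
\end{itemize}
In each case we output $(P_2,P_2,2)$ or $(P_2,P_2,3)$ accordingly.

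\emph{Odd $h\ge3$.} If both graphs have maximum degree at least $2$, Lemma~\ref{lem:vert-deg-atleast-2} gives equivalence, so we output $(G_1,G_2,h)$. If some $G_i$ has maximum degree at most $1$, then $G_i$ is a matching plus isolated vertices. Every star forest subgraph of $G_i$ is then a matching, which has an even number of vertices, so an exact odd $h$ is impossible and we output the \no-instance.

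\emph{Even $h\ge 2$.} This is the main obstacle, since Lemma~\ref{lem:vert-deg-atleast-2} only covers odd $h$. I would argue directly that ``at least $h$'' implies ``exactly $h$'' for even $h$.
\begin{itemize}
\item By Lemma~\ref{lem:eq-or-matching}, either an exact-$h$ common star forest already exists, or both graphs have a matching of size $\lceil h/2\rceil = h/2$.
\item In the latter case the matching on $h$ vertices is itself a common star forest of size exactly $h$.
\end{itemize}
So for even $h$ the two problems coincide on the identical instance, and we output $(G_1,G_2,h)$ unchanged.

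All case distinctions (maximum degree, edgelessness, parity) are polynomial-time checks. The only subtle point is the odd case with a graph of maximum degree $1$. There, the parity argument shows every common star forest has an even number of vertices, so the exact-$h$ instance is correctly mapped to the \no-instance $(P_2,P_2,3)$.
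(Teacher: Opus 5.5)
Your proposal is correct and follows the paper's route. One direction uses the maximum-matching test from Lemma~\ref{lem:eq-or-matching}; the other passes even $h$ through unchanged and, for odd $h$, uses the parity/maximum-degree split together with Lemma~\ref{lem:vert-deg-atleast-2}. Your extra handling of $h=0$ and of maximum degree $0$ is a harmless tidying of trivial cases: the paper's own case split only names maximum degree exactly $1$.
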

\begin{proof}
    We start by reducing from \ourprob{} to CSFSTS. Let $(G_1, G_2, h)$ be an instance of \ourprob{}. The reduction produces an instance of CSFSTS as follows. We compute the maximum matching of $G_1$ and $G_2$ in polynomial time~\cite{edmonds1965paths}. If both graphs have matchings of size at least $\lceil h/2 \rceil$ then output the trivial \yes-instance $(P_2, P_2, 2)$. Otherwise output the instance $(G_1, G_2, h)$. Lemma~\ref{lem:eq-or-matching} directly implies the correctness of this reduction.

    For the other direction, consider an instance $(G_1, G_2, h)$ of CSFSTS. The reduction produces an instance of \ourprob{} as follows. If $h$ is even, output the instance $(G_1, G_2, h)$ of \ourprob{}. If $h$ is odd and if either $G_1$ or $G_2$ has maximum degree of $1$, then output the trivial \no-instance $(P_2, P_2, 3)$. If $h = 1$, output the trivial \no-instance $(P_2, P_2, 3)$. Otherwise (i.e. if $h$ is odd and the maximum degrees of $G_1$ and $G_2$ are both at least $2$), output the instance $(G_1, G_2, h)$. We analyse the correctness of this reduction.
    \begin{itemize}
        \item If $h$ is even, then a matching on $\lceil h / 2 \rceil = h/2$ edges is also a star forest of size exactly $h$. Therefore, Lemma~\ref{lem:eq-or-matching} implies that CSFSTS and \ourprob{} are equivalent on the instance $(G_1, G_2, h)$.
        \item If $h$ is odd and either $G_1$ or $G_2$ has maximum degree $1$, then observe that $(G_1, G_2, h)$ is a \no-instance of CSFSTS. This is because any star forest of odd size must have a star with at least $3$ nodes, and therefore must have a vertex of degree at least $2$. The reduction also outputs a \no-instance of \ourprob{}.
        \item If $h = 1$, then $(G_1, G_2, h = 1)$ is a \no-instance to CSFSTS as any star forest must have size at least $2$. The reduction also outputs a \no-instance to \ourprob{}.
        \item If $h \ge 3$ is odd and $G_1$ and $G_2$ have vertices of degree at least $2$, then by Lemma~\ref{lem:vert-deg-atleast-2}, \ourprob{} and CSFSTS are equivalent on the instance $(G_1, G_2, h)$.
    \end{itemize}
    Therefore, \ourprob{} and CSFSTS are equivalent under polynomial-time reductions.
\end{proof}

%% file: planar.tex
\section{EPTAS for Bounded Degree Planar Graphs}\label{sec:EPTAS}

In this section, we examine the problem of \ourprob{}, where the input graphs are planar graphs with maximum degree bounded by some constant. Although \ourprob{} is \NPH when both graphs are planar graphs of bounded degree (by a constant $\Delta = \OO(1)$) (Theorem~\ref{thm:3-planar-nph}), we show that the problem admits an EPTAS for such inputs. The algorithm is a modification of the Baker's technique~\cite{baker1994approximation} which sacrifices a small part of optimal solution to break the graph down to $k$-outerplanar graphs, for a small $k$ and then solves the problem exactly.  

\begin{theorem}\label{thm:EPTAS}
    For every $0 < \varepsilon < 1$, There is a $(1-\varepsilon)$-approximation algorithm, for \ourprob{}, on planar graphs with maximum degree at most $\Delta$, with running time $(1/\varepsilon + \Delta)^{\OO(1/\varepsilon)} \cdot n^{\Delta + 1} \cdot\log n$.
\end{theorem}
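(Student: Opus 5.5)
The plan is to adapt Baker's shifting technique to the pair of graphs simultaneously and then use the exact enumeration procedure of Lemma~\ref{lem:treewidth-algo-part-1} on the resulting bounded-treewidth pieces. Set $k = \lceil 1/\varepsilon \rceil$ (up to a constant factor fixed later). Fix planar embeddings of $G_1$ and $G_2$ and compute BFS layers from an arbitrary root in each connected component, giving layers $L^{(1)}_0, L^{(1)}_1, \ldots$ of $G_1$ and $L^{(2)}_0, L^{(2)}_1, \ldots$ of $G_2$. For each shift $s \in \{0, \ldots, k-1\}$, delete the layers with index $\equiv s \pmod{k}$ in $G_1$, obtaining $G_1^{s}$. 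This is a disjoint union of $k$-outerplanar graphs, so by Proposition~\ref{prop:k-outerplanar-tw} it has treewidth at most $3k+1$. The same is done for $G_2$ with shifts $s' \in \{0,\ldots,k-1\}$.

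The key averaging step goes as follows. Let $H$ be an optimal common star forest, and let $\phi_1,\phi_2$ be its embeddings into $G_1,G_2$. Deleting a vertex of $G_1$ destroys at most the star of $H$ containing it. That star has at most $\Delta+1$ vertices, since its centre has degree at most $\Delta$ in $G_1$; in fact the degree bound applies to both graphs. Hence the number of vertices of $H$ lost is at most the sum, over stars meeting the deleted layers, of their sizes. Since each star has diameter at most $2$, its image meets at most $3$ consecutive layers, so each star is hit by at most $3$ of the $k$ shifts in $G_1$. Averaging over $s$ therefore gives a shift losing at most $3|V(H)|/k$ vertices. Doing this independently for $G_2$ (the surviving sub-star-forest $H'$ is again a common subgraph, so we average over $s'$ for $H'$) yields some pair $(s,s')$ for which $G_1^{s}$ and $G_2^{s'}$ share a star forest on at least $(1-6/k)|V(H)|$ vertices. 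Choosing $k = \lceil 6/\varepsilon\rceil$ gives the ratio $1-\varepsilon$. One subtlety is that a star may also lose only some leaves rather than be destroyed entirely; this only helps, since its centre with the remaining leaves (if at least one remains) is still a star, so bounding the loss by the whole star size remains valid.

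For the algorithmic step, for each of the $k^2$ pairs $(s,s')$ I apply Lemma~\ref{lem:treewidth-algo-part-1} to $G_1^{s}$ and to $G_2^{s'}$ with treewidth bound $3k+1$ and degree bound $\Delta$. This enumerates their sets of realisable star-count vectors. I then take the largest vector common to both sets, measured by $\sum_d d\, c_d$, exactly as in Theorem~\ref{thm:tw-xp-algo}, and output the best value over all pairs. Lemma~\ref{lem:treewidth-algo-part-1} requires a tree decomposition, so I would note that a width-$\OO(k)$ decomposition of a $k$-outerplanar graph is computable in linear time; alternatively one can use any constant-factor approximation of treewidth, which only changes constants in the exponent.

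The running time is $k^2 \cdot (\Delta + 3k)^{\OO(k)} \cdot \Delta \cdot n^{\Delta+1}\log n$. Intersecting the two sets of vectors costs $\OO(n^\Delta)$, which is dominated by the enumeration. This gives $(1/\varepsilon+\Delta)^{\OO(1/\varepsilon)} n^{\Delta+1}\log n$.

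I expect the main obstacle to be the averaging argument across two graphs with independent layerings. The loss must be charged separately in each graph, and the second averaging must be applied to the already-trimmed common forest. I also need to make sure that partial stars left after deletion (centre removed, leaves surviving) are simply discarded, and that this is covered by the factor $3$ per graph.
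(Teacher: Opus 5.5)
Your proposal is correct and follows essentially the same route as the paper. It applies Baker-style shifting to the BFS layers of each graph independently, discards every star of an optimal common forest that meets a deleted layer in either graph, and then solves the resulting bounded-treewidth pair exactly with the star-count enumeration of Lemma~\ref{lem:treewidth-algo-part-1}, exactly as in Theorem~\ref{thm:tw-xp-algo}. The differences are cosmetic: the paper deletes every $3k$-th layer with offsets $\equiv 2 \pmod 3$, so each star is stabbed by at most one shift ($k=\lceil 2/\varepsilon\rceil$, treewidth $9k+1$), and it picks the two shifts independently instead of averaging the second over the already-trimmed forest, whereas your period-$k$ deletion with a factor-$3$ charge is equally valid, and your remark on computing the tree decomposition within the time bound covers a point the paper leaves implicit.
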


\begin{proof}
    We first state the algorithm and then prove its correctness.

    \subparagraph*{The approximation algorithm.}
    Let $0 < \varepsilon < 1$, and the input graphs be $G_1$, $G_2$.  Consider a BFS tree $\TT$ of $G_1$. Let $\Delta$ be the largest degree in these graphs. Let $L^{(i)}_1 \subseteq V(G_1)$ be the $i$-th level of vertices in $\TT$. Similarly let $L^{(i)}_2$ be the $i$-th level of vertices in some BFS tree of $G_2$. Let $k = \lceil 2 / \varepsilon \rceil$.

    The algorithm iterates over $r_1, r_2 \in \{0, 1, \ldots, k - 1\}$. The algorithm creates the graph $G_1^{(r_1)}$ which is the graph obtained from $G_1$ by removing the vertices lying in level $L^{(3r_1 + 2)}_1, L^{(3(k + r_1) + 2)}_1, L^{(3(2k + r_1) + 2)}_1, L^{(3(3k + r_1) + 2)}_1, \ldots$. Similarly it creates the graph $G_2^{(r_2)}$ which is the graph obtained from $G_2$ by removing the vertices lying in level $L^{(3r_2 + 2)}_2, L^{(3(k + r_2) + 2)}_2, L^{(3(2k + r_2) + 2)}_2, L^{(3(3k + r_2) + 2)}_2, \ldots$. $G^{(r_1)}_1$ and $G^{(r_2)}_2$ are $3k$-outerplanar graphs and have treewidth at most $9k + 1$ (due to Proposition~\ref{prop:k-outerplanar-tw}). The algorithm now solves \ourprob{} on $G_1^{(r_1)}$ and $G_2^{(r_2)}$ in time $\OO\left((\Delta + (9k+1))^{2(9k+2)} \cdot (9k+1) \cdot \Delta \cdot n^{\Delta+1} \cdot \log n \right)$ (using Theorem~\ref{thm:tw-xp-algo}). The algorithm outputs the solution which attains the largest number of nodes, over all choices of $r_1, r_2$. The total running time of the algorithm is $k^2 \cdot \OO\left((\Delta + (9k+1))^{2(9k+2)} \cdot (9k+1) \cdot \Delta \cdot n^{\Delta+1} \cdot \log n \right) = (1/\varepsilon + \Delta)^{\OO(1/\varepsilon)} \cdot n^{\Delta + 1} \cdot \log n$.

    \subparagraph*{Correctness: guarantee of the approximation factor.}
    Let \OPT{} be the number of nodes of the maximum common star forest subgraph $H$ of $G_1$ and $G_2$. Let \SOL{} be the number of nodes in the output by the algorithm. Clearly $\SOL < \OPT$, as the algorithm finds largest common subgraph of a subgraph of $G_1$ and a subgraph of $G_2$.

    Let $S$ be a star in a subgraph $H_1$ of $G_1$ which is isomorphic to $H$; hence $S$ is a star subgraph of $G_1$. There exists $i$ such that $V(S) \subseteq L^{(i)}_1 \cup L^{(i + 1)}_1 \cup L^{(i + 2)}_1$, as a star cannot span across four or more levels. Therefore for at most one $r_1 \in \{0, 1, \ldots, k-1\}$, $G^{(r_1)}_1$ does not contain all the vertices of $S$; in such a case we say $r_1$ \emph{stabs} $S$. Let $s_{r_1}$ denote the total number of vertices in all stars of $H_1$ stabbed by $r_1$. Let $r^*_1 \in {\argmin}_{r_1 \in \{0, \ldots, k-1\}}\{s_{r_1}\}$ minimize the number of nodes of stars in $H_1$ it stabs. Let $F_1$ be the forest of stars in $H$ corresponding to the stars stabbed by $r^*_1$, then $|F_1| \le \frac{1}{k} \cdot|V(H)|$. Moreover, $H \setminus F_1$ is a subgraph of $G_1^{(r_1^*)}$. Similarly, there exists $r^*_2 \in\{0, \ldots, k-1\}$ and a forest of stars $F_2$ in $H$, such that $|F_2| \le \frac{1}{k} \cdot|V(H)|$ and $H \setminus F_2$ is a subgraph of $G_2^{(r_2^*)}$. Therefore $H \setminus (F_1 \cup F_2)$ is a common star forest subgraph of both $G_1^{(r_1^*)}$ and $G_2^{(r_2^*)}$. Therefore, we have 
    \begin{align*}
        \SOL &\ge |V(H) \setminus (V(F_1) \cup V(F_2))| \ge |V(H)| - |V(F_1)| - |V(F_2)| \\
        &\ge \left(1 - \frac{2}{k}\right) \OPT \ge (1 - \varepsilon) \OPT.
    \end{align*}
\end{proof}